\setlist[itemize]{leftmargin=20pt}
\setlist[enumerate]{leftmargin=20pt}
\theoremstyle{plain}
\newtheorem{theorem}{Theorem}[section]
\newtheorem{lemma}{Lemma}[section]
\newtheorem{proposition}{Proposition}[section]
\newtheorem{theoremprl}{Theorem}
\theoremstyle{definition}
\newtheorem{definition}{Definition}
\newtheorem*{problem*}{Problem} 
\newtheorem{problem}{Problem}
\newcommand{\TS}{\textup{TS}}
\newcommand{\dpf}{\textup{\texttt{d-PF}}}
\newcommand{\dts}{\textup{\texttt{d-TS}}}
\newcommand{\dlcu}{\textup{\texttt{d-LCU}}}
\newcommand{\dqsp}{\textup{\texttt{d-QSP}}}
\newcommand{\dbe}{\textup{\texttt{d-BE}}}
\newcommand{\dro}{\textup{\texttt{d-RO}}}
\newcommand{\dqpe}{\textup{\texttt{d-QPE}}}
\newcommand{\dgrover}{\textup{\texttt{d-Grover}}}
\newcommand{\select}{\texttt{select}}
\newcommand{\supp}{\textup{supp}}
\newcommand{\parity}{\textsc{Parity}}
\newcommand{\InnerProduct}{\textsc{InnerProduct}}
\def\qdhs{\textsc{Qs}}
\def\dqdhs{\textup{d-}\qdhs}
\newcommand{\IP}{\textsc{Ip}}
\newcommand{\nff}{\textup{No fast-forwarding}}
\newcommand{\HC}{\mathrm{H.C.}}
\newcommand{\CCNOT}{\mathrm{Toffoli}}
\newcommand{\QC}{\textsf{\textup{Qcc}}}
\newcommand{\cc}{\mathrm{\bf c}}
\newcommand{\oo}{\mathrm{\bf o}}
\newcommand{\bigO}{\mathcal{O}}
\newcommand{\ii}{\textup{i}}
\newcommand{\talpha}{\tilde{\alpha}}
\newcommand{\eps}{\epsilon}
\newcommand{\poly}{\textup{poly}}
\newcommand{\even}{\textup{even}}
\newcommand{\odd}{\textup{odd}}
\newcommand{\intt}{\textup{int}}
\newcommand{\pfU}{\tilde{U}}
\newcommand{\calH}{\mathcal{H}}
\newcommand{\calV}{\mathcal{V}}
\newcommand{\calE}{\mathcal{E}}
\newcommand{\cmm}{\textup{comm}}
\newcommand{\unary}{\mathrm{unary}}
\newcommand{\predicate}{\textup{F}}
\newcommand{\vbx}{\vb{x}}
\newcommand{\vby}{\vb{y}}
\newcommand{\vbz}{\vb{z}}
\newcommand{\vertiii}[1]{{\left\vert\kern-0.25ex\left\vert\kern-0.25ex\left\vert #1
		\right\vert\kern-0.25ex\right\vert\kern-0.25ex\right\vert}}
\newcommand{\Vertiii}[1]{{\vert\kern-0.25ex\vert\kern-0.25ex\vert #1
		\vert\kern-0.25ex\vert\kern-0.25ex\vert}}
\newcommand{\teal}[1]{\textcolor{teal}{#1}}
\newcommand{\violet}[1]{\textcolor{violet}{#1}}
\newcommand{\olive}[1]{\textcolor{olive}{#1}}
\newcommand{\HKU}{QICI Quantum Information and Computation Initiative, Department of Computer Science, School of Computing and Data Science, The University of Hong Kong, Pokfulam Road, Hong Kong}
\def\l@subsubsection#1#2{}
\begin{document}

\title{ Distributed Quantum Simulation}

\author{Tianfeng Feng}
 \email{contribute equally}
\affiliation{\HKU}

\author{Jue Xu}
 \email{contribute equally}
\affiliation{\HKU}

\author{Wenjun Yu}
\affiliation{\HKU}

\author{Zekun Ye}
\affiliation{State Key Laboratory for Novel Software Technology, Nanjing University, Nanjing 210023, China Hefei National Laboratory, Hefei 230088, China}

\author{Penghui Yao}
\email{phyao1985@gmail.com}
\affiliation{State Key Laboratory for Novel Software Technology, Nanjing University, Nanjing 210023, China Hefei National Laboratory, Hefei 230088, China}

\author{Qi Zhao}
\email{zhaoqi@cs.hku.hk}
\affiliation{\HKU}

\date{\today}
\begin{abstract}
    Quantum simulation is a promising pathway toward practical quantum advantage by simulating large-scale quantum systems.
    In this work, we propose communication-efficient distributed quantum simulation protocols by exploring three quantum simulation algorithms, including the product formula, the truncated Taylor series, and the processing of quantum signals over a quantum network. Our protocols are further shown to be optimal by deriving a lower bound on the quantum communication complexity for distributed quantum simulations with respect to evolution time and the number of distributed quantum processing units.
    Additionally, our distributed techniques go beyond quantum simulation and are applied to distributed versions of Grover's algorithms and quantum phase estimation. 
    Our work not only paves the way for achieving a practical quantum advantage by scalable quantum simulation but also enlightens the design of more general distributed architectures across various physical systems for quantum computation.
\end{abstract}

\maketitle

The pivotal goal of quantum computing is significantly outperforming classical computers in solving certain problems, namely quantum advantage \cite{preskillQuantumComputingEntanglement2012,aruteQuantumSupremacyUsing2019,zhongQuantumComputationalAdvantage2020}.
For this objective, a quantum computer with a substantial amount of qubits is required.
However, a single quantum processing unit (QPU) faces numerous challenges when scaling up the number of qubits. 
To fully leverage the potential of quantum power, distributed quantum computing \cite{buhrmanDistributedQuantumComputing2003, bealsEfficientDistributedQuantum2013, ainleyMultipartiteEntanglementMultinode2024} that utilizes a network of multiple QPUs,
emerges as an essential solution \cite{kimbleQuantumInternet2008,elkinCanQuantumCommunication2014,wehnerQuantumInternetVision2018, legallQuantumAdvantageLOCAL2019,caleffiDistributedQuantumComputing2024}.
So far, many applications of distributed quantum computing have been proposed, including distributed Shor's algorithm \cite{yimsiriwattanaDistributedQuantumComputing2004, jiangDistributedShorAlgorithm2023},
distributed machine learning \cite{tangCommunicationefficientQuantumAlgorithm2023, liBlindQuantumMachine2024}, etc \cite{liuDistributedQuantumPhase2021, tanDistributedQuantumAlgorithm2022, guoDistributedQuantumSensing2020, montanaroQuantumCommunicationComplexity2024}.
Besides, quantum simulation, Feynman's original motivation for proposing quantum computers \cite{feynmanSimulatingPhysicsComputers1982, feynmanQuantumMechanicalComputers1985, georgescuQuantumSimulation2014},   
is widely considered one of the most promising applications for achieving practical quantum advantage 
\cite{ciracGoalsOpportunitiesQuantum2012, bernienProbingManybodyDynamics2017, monroeProgrammableQuantumSimulations2021, altmanQuantumSimulatorsArchitectures2021, kimEvidenceUtilityQuantum2023, daleyPracticalQuantumAdvantage2022}.
Yet, a distributed protocol for quantum simulation has not been proposed to date. 
Our work aims to fill this gap. 

Quantum simulation of many-body systems generally involves implementing a dynamical evolution $e^{-\ii Ht}$ for a given Hamiltonian $H$.
There are two types of quantum algorithms for this task. 
One is the product formula (PF), also known as the Trotter-Suzuki method, 
which decomposes global dynamics into a product of easily implementable evolutions \cite{suzukiGeneralTheoryFractal1991, lloydUniversalQuantumSimulators1996, berryEfficientQuantumAlgorithms2007}. 
The other type is referred to as the post-Trotter method, which includes the truncated Taylor series (TS) method
\cite{childsHamiltonianSimulationUsing2012, berrySimulatingHamiltonianDynamics2015, kieferovaSimulatingDynamicsTimedependent2019} and 
quantum signal processing (QSP) method \cite{lowOptimalHamiltonianSimulation2017,lowHamiltonianSimulationQubitization2019,gilyenQuantumSingularValue2019, haahQuantumAlgorithmSimulating2021}.
In distributed scenarios, it is natural to ask how to utilize these methods to perform distributed quantum simulation, which involves executing quantum dynamics evolution with multiple QPUs.

In this work, we answer this question by proposing and systematically analyzing three concrete
distributed quantum simulation protocols: distributed PF ($\dpf$), distributed TS ($\dts$), and distributed QSP ($\dqsp$). 
These protocols are explored in the context of clustered Hamiltonians \cite{pengSimulatingLargeQuantum2020,harrowOptimalQuantumCircuit2024}, which depend on the partitioning of the network.
 For $\dpf$, we establish quantum channels to perform the evolution of the Hamiltonian terms involving interactions across the network.
For the distributed implementation of post-Trotter protocols ($\dts$ and $\dqsp$), 
the primary challenge lies in the distributed execution of multi-partite entangling quantum operations of interest.
We solve this problem by introducing two core subroutines: distributed block encoding and distributed reflection operation with the help of an ancillary node.

Meanwhile, we show the efficiency and the optimality of our distributed quantum simulation protocols.
While gate complexity is a major measure of quantum computation complexity, quantum communication complexity typically arises as a more stringent concern in distributed scenarios \cite{dewolfQuantumComputingCommunication2001, dewolfQuantumCommunicationComplexity2002, buhrmanNonlocalityCommunicationComplexity2010}.
This is because establishing quantum connectivity in distributed quantum processors is usually much more costly than elementary gates due to
the constraints of practical physical systems \cite{cacciapuotiQuantumInternetNetworking2020,jiangDistributedQuantumComputation2007,monroeLargescaleModularQuantumcomputer2014}. 
Regarding this, 
we prove that our protocols are communication-efficient for simulating generic many-body systems with general quantum networks by establishing a tight lower bound on the quantum communication complexity that depends on the number of parties involved and the evolution time of simulated systems. 
The $\dpf$ achieves optimal scaling in the number of parties, 
while the distributed post-Trotter protocols, i.e. $\dts$ and $\dqsp$, attain optimality concerning both the number of parties and the evolution time provided that the evolution time is shorter than the system size of a single party.

Additionally, we examine the quantum communication complexity associated with simulating $k$-local and nearest-neighbor Hamiltonian models. We also demonstrate that our distributed technique can go beyond quantum simulation and be applied to Grover's algorithm and quantum phase estimation.

\begin{figure}[t]
    \center
    \includegraphics[width=0.96\linewidth]{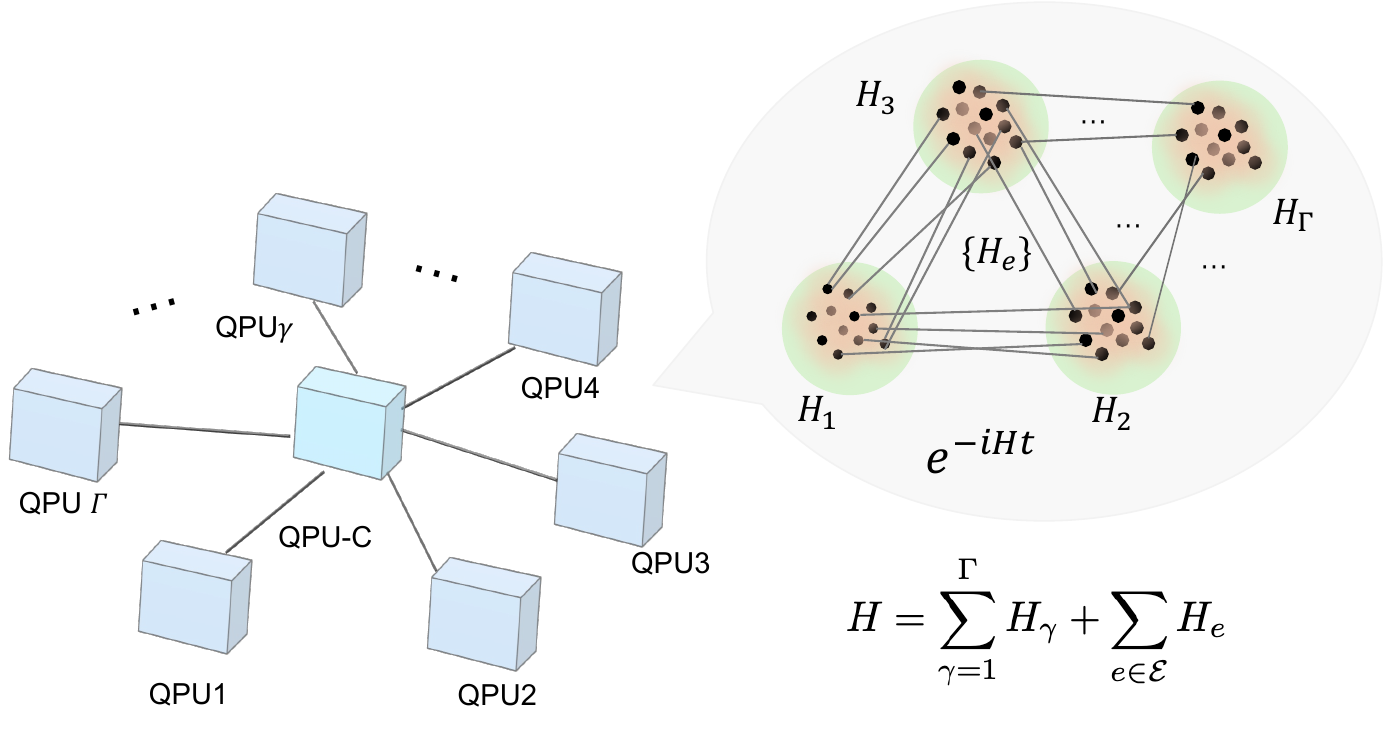}
    \caption{
    The star quantum network consists of one control node connecting $\Gamma$ distributed QPUs.  
    The distributed quantum simulation is to implement $e^{-\ii H t}$ with a clustered Hamiltonian \cref{eq:cluster_hamiltonian} induced by the network.
    $H_{\gamma}$ is the local term on $\gamma$th node (QPU), while $H_e$ across nodes are called interaction terms.
    }
    \label{fig:clustered}
\end{figure}

\emph{Setup.}--
In this work, distributed quantum simulation is defined as: 
given a $\Gamma n$-qubit Hamiltonian $H=\sum_{l=1}^L H_l$  with $L=\poly(\Gamma n)$ 
and evolution time $t$, the goal is to implement the real-time evolution operator $e^{-\ii Ht}$ with $\Gamma$ distributed QPUs.
Each QPU owns $n$ qubits and allows $o(n)$ ancillary qubits for quantum communication or information processing, etc.
Without loss of generality, we utilize the star network \cite{branciardCharacterizingNonlocalCorrelations2010, tavakoliNonlocalCorrelationsStarnetwork2014, mengConcurrencePercolationQuantum2021} as \cref{fig:clustered} 
to analyze the distributed protocols for quantum simulation \footnote{At the end of the paper, we will show the quantum communication complexity of general distributed protocols is independent of the specific topology of quantum network}. 
 
To implement distributed quantum simulation, the partition of qubits for a quantum network has to be considered.
Analogous to Refs.~\cite{pengSimulatingLargeQuantum2020, loweFastQuantumCircuit2023, childsTheoryTrotterError2021},  in a quantum network,
   the original Hamiltonian $H=\sum_{l=1}^L H_l$ can be massaged into two parts, i.e., 
    \begin{equation}\label{eq:cluster_hamiltonian}
        H=\sum_{\gamma=1}^\Gamma H_\gamma+\sum_{e\in\calE} H_e,
    \end{equation}
which is called the \emph{induced clustered Hamiltonian}. 
The first summand contains the \emph{induced local} terms $H_\gamma:=\sum_{l:\supp(H_l)\subseteq v_\gamma} H_l$, where $v_\gamma$ is the set of qubits on the $\gamma$th node. 
The second summand contains all remaining terms called \emph{induced interaction} terms that act on multiple nodes.
For simplicity, we would omit the words ``induced'' in the rest of the paper.
In the following, we demonstrate how clustered Hamiltonians can aid in various distributed quantum simulation protocols.

\begin{figure*}[tb]
    \center
    \includegraphics[width=0.96\linewidth]{./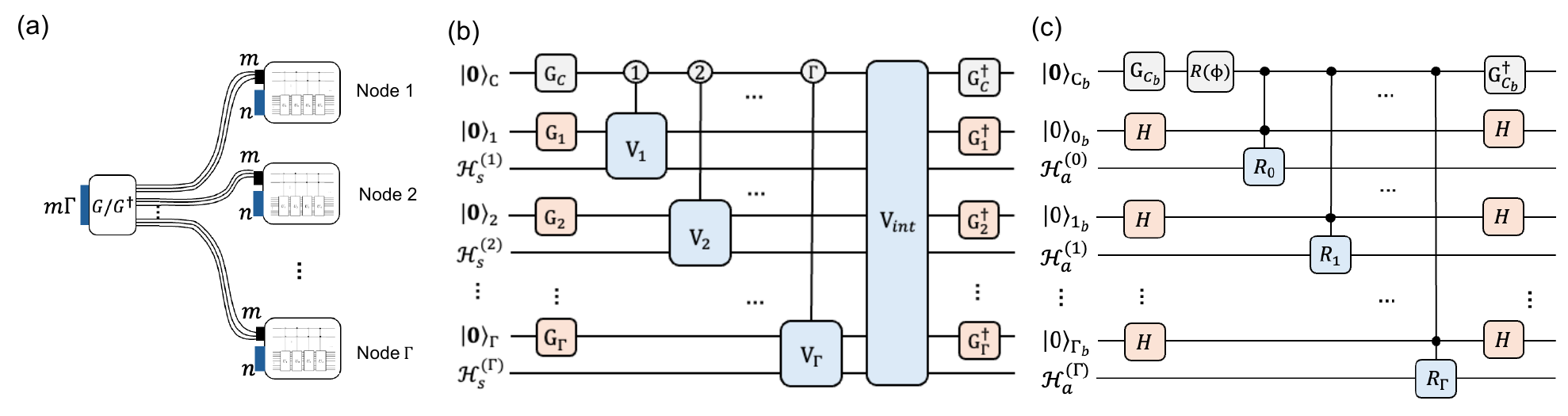}
    \caption{
    (a) General distributed architecture for $\dbe$, $\dro$ and various algorithms. 
    This distributed architecture utilizes the star network of the control (ancillary) node and $\Gamma$ local nodes.  Each node has $n$ qubits assisted with $m=\bigO(\log (\Gamma n))$ ancillary qubits, while the control node has $m\Gamma$ qubits for quantum information processing. 
    The general distributed protocols employ the control node to prepare and distribute desired entangled states to each node, which in turn performs global quantum operation and information processing on $\Gamma$ nodes. 
    Similarly, after conducting the appropriate local operations, each node may return the corresponding auxiliary qubits to the control node for quantum information processing.
    (b) Quantum circuit 
    for distributed block encoding ($\dbe$); (c) Quantum circuit for distributed reflection operation ($\dro$). 
    In this context, $\mathcal{H}^{\gamma}_{a}$ refers to the Hilbert space corresponding to the auxiliary system in $\dbe$.  
    For simplicity, we designate the control node as node 0, i.e., $\gamma=0$. 
    }
    \label{fig:alldown}
\end{figure*}

\emph{Distributed product formula (d-PF).}--\label{sec:d_pf}
The product formula (PF) method (also known as the Trotter-Suzuki formula) is deemed the most straightforward way for quantum simulation.
Given a local Hamiltonian $H=\sum_l^L H_l$ and time $t$, 
the first-order PF approximates  $U=e^{-\ii Ht}$ by the product of the exponentials of the individual operators, i.e.
$\pfU(t):=\qty(\prod\nolimits_{l=1}^{L}e^{-\ii H_lt/r})^r$
where each exponential can be implemented by elementary quantum gates efficiently \cite{lloydUniversalQuantumSimulators1996}.
By the $p$th-order PF,
the segments $r$ to achieve Trotter $\epsilon$ can be further improved to 
$\bigO(\alpha_{\cmm,p}^{1/p} t^{1+1/p}/\epsilon^{1/p})$
where $\alpha_{\cmm,p}$ is the nested commutator norm \cite{childsTheoryTrotterError2021}.
In distributed scenarios,
with the clustered Hamiltonian as described in \cref{eq:cluster_hamiltonian}, 
the procedure for our $\dpf$ protocol is as follows. 
For a single segment, each node ($\gamma$) locally performs the evolution $e^{-\ii H_\gamma t/r}$. 
Then, the evolution of the interaction term $e^{-\ii H_e t/r}$ is implemented through quantum communication, assisted by the control node as illustrated in \cref{fig:clustered}, since the support of $H_e$ spans at least two nodes. 
Consequently, the quantum communication complexity of the $\dpf$ protocol is directly provided by the following theorem.
\begin{theoremprl}[Quantum communication complexity of $\dpf$]\label{thm:d_pf}
    Given a $\Gamma n$-qubit clustered Hamiltonian $H=H_0 +\sum_{e=1}^{\abs{\calE}} H_e $ with $H_0:=\sum_{\gamma=1}^\Gamma H_\gamma$ and evolution time $t\in\mathbb{R}^+$,
    quantum communication complexity of the $p$th-order distributed product formula (\dpf) to simulate $e^{-\ii H t}$ within accuracy $\eps$ is
    \begin{equation}
        \bigO(\abs{\calE} \Gamma \talpha_{\cmm,p}^{1/p} \ t^{1+1/p}/\eps^{1/p}), 
    \end{equation}
    where $\tilde{\alpha}_{\cmm,p}:=\sum\limits_{e_1,\dots,e_{p+1}=0}^{\abs{\calE}} \norm{[H_{e_1},\dots,[H_{e_p}, H_{e_{p+1}}]]}$ is the norm of the nested commutator of terms in $H$ with $H_{e=0}=H_0$, $\Gamma$ is the number of nodes and $\abs{\calE}$ is the number of the interaction terms across multiple nodes. 
\end{theoremprl}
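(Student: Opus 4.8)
\emph{Proof proposal.} The plan is to reduce the statement to the standard $p$th-order Trotter error bound applied to a coarse-grained grouping of the Hamiltonian, and then to count the communication incurred per Trotter segment. First I would regard the clustered Hamiltonian as a sum of $\abs{\calE}+1$ effective terms, $H=\sum_{e=0}^{\abs{\calE}}H_e$ with $H_{e=0}=H_0=\sum_{\gamma=1}^\Gamma H_\gamma$, and apply the $p$th-order product formula $\pfU$ to exactly this grouping. Invoking the Trotter error theorem for $p$th-order formulas (cf.\ \cite{childsTheoryTrotterError2021}), the choice $r=\bigO(\talpha_{\cmm,p}^{1/p}\,t^{1+1/p}/\eps^{1/p})$ of segments guarantees $\norm{\pfU(t)-e^{-\ii Ht}}\le\eps$, where $\talpha_{\cmm,p}$ is precisely the nested-commutator norm of this grouping as defined in the statement. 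The key observation is that the commutator structure controlling the Trotter error only ``sees'' the coarse terms $H_0,\dots,H_{\abs{\calE}}$, so no finer information about the induced local Hamiltonians enters the bound.

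Next I would analyze a single segment. A segment of the $p$th-order formula is a product of $\bigO(5^{p/2}(\abs{\calE}+1))$ exponentials $e^{-\ii H_e s}$ at various subnormalized times $s$. The exponentials with $e=0$ factor as $e^{-\ii H_0 s}=\prod_{\gamma=1}^{\Gamma}e^{-\ii H_\gamma s}$ since the $H_\gamma$ have pairwise disjoint supports; each node implements its own factor with purely local gates, contributing zero communication. For each interaction exponential $e^{-\ii H_e s}$, the control node gathers the qubits of $\supp(H_e)$ from the (at most $\Gamma$) nodes hosting them --- by teleportation, or equivalently by distributing EPR/cat resource states prepared at the control node --- applies $e^{-\ii H_e s}$ locally, and returns or uncomputes the borrowed registers. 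Each such gadget costs $\bigO(\Gamma)$ qubits of quantum communication and realizes $e^{-\ii H_e s}$ exactly; summing over the $\bigO(\abs{\calE})$ interaction exponentials in the segment (with $p$ treated as a constant) gives $\bigO(\abs{\calE}\Gamma)$ communication per segment.

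Finally I would multiply: the total quantum communication is $r$ times the per-segment cost, namely $\bigO(\abs{\calE}\Gamma)\cdot\bigO(\talpha_{\cmm,p}^{1/p}t^{1+1/p}/\eps^{1/p})=\bigO(\abs{\calE}\Gamma\,\talpha_{\cmm,p}^{1/p}t^{1+1/p}/\eps^{1/p})$, and the only error is the Trotter error (each interaction gadget being exact), which is at most $\eps$ by the choice of $r$. I expect the main obstacle to be the per-interaction-term communication count: one must argue that $e^{-\ii H_e s}$ --- a genuinely entangling, possibly multi-qubit operation straddling several nodes --- is implementable with communication scaling only as $\bigO(\Gamma)$, rather than with the full support size or with extra overhead from synchronizing many nodes through the single star hub of \cref{fig:clustered}. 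This is where the structural assumption on the interaction terms (bounded support contributed per node, routing through the control node) does the work, and it is the step I would write out most carefully; the Trotter-counting portion is then a direct application of known bounds.
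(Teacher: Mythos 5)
Your overall skeleton coincides with the paper's proof: apply the $p$th-order formula to the coarse grouping $H=\sum_{e=0}^{\abs{\calE}}H_e$ with $H_{e=0}=H_0$, take $r=\bigO(\talpha_{\cmm,p}^{1/p}t^{1+1/p}/\eps^{1/p})$ from the commutator-scaling Trotter bound, note that $e^{-\ii H_0 s}$ is communication-free because the $H_\gamma$ have disjoint supports, and multiply $r$ by an $\bigO(\abs{\calE}\Gamma)$ per-segment communication cost. That part is fine and is exactly the paper's counting.

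The gap is in the step you yourself flag as the main obstacle, and your proposed gadget does not close it. You implement $e^{-\ii H_e s}$ by having the hub ``gather the qubits of $\supp(H_e)$'' and act locally; the communication of that gadget scales with $\abs{\supp(H_e)}$, not with $\Gamma$. An interaction term in a clustered Hamiltonian can act on many qubits of each node (the paper explicitly treats $H_e=\norm{H_e}\bigotimes_{j=1}^{\Gamma}(\otimes_{i\in v_j}\sigma_{q_i})$), so gathering the support can cost $\Theta(\Gamma n)$ qubits and moreover exceeds the $o(n)$ ancilla budget of the hub. Your fallback, ``bounded support contributed per node,'' is an assumption the theorem does not make and the paper does not need. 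The missing idea is the standard Pauli-exponential parity gadget run distributively: after local Clifford conjugation reduces $H_e$ to a $Z$-string, a \emph{single} ancilla qubit is teleported to each involved node in turn; each node applies its local CNOTs $\prod_{i\in v_j}\CNOT_{i,a}$ onto the ancilla (free, local), then one single-qubit rotation $e^{-\ii\delta t\norm{H_e}\sigma_z}$ is applied and the CNOT layer is uncomputed. This costs $\bigO(1)$ communication per node visited, hence $\bigO(\Gamma)$ per interaction term \emph{independently of the per-node support size} (and $\bigO(1)$ for 2-local terms), which is what yields the claimed $\bigO(\abs{\calE}\Gamma)$ per segment and the theorem's bound.
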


See the proof of \cref{thm:d_pf} in \cref{apd:d_pf}.
Note that the $\dpf$ protocol can be extended to the analog quantum simulation \cite{ farhiAnalogAnalogueDigital1998, cubittUniversalQuantumHamiltonians2018, liuEfficientlyVerifiableQuantum2024} with a similar procedure.

\emph{Distributed post-Trotter protocols.}--
In contrast to the Trotter algorithms approximating $ e^{-\ii Ht}$ by products, the evolution operator can also be expressed as summation, namely $ e^{-\ii Ht}=\sum_k g_k f_k(H)$,
where $g_k$ is the coefficient and $f_k(H)$ is the $k$-th function of $H$.
Two typical examples are the truncated Taylor series (TS) method by the Taylor expansion and the quantum signal processing (QSP) method by the Jacobi-Anger expansion. 
Generally, these post-Trotter algorithms need block encoding (BE) of Hamiltonians in a larger Hilbert space. The block encoding of an operator $H$ can be regarded as a unitary $U$ with ancillary system $a$ such that $(\bra{0}_a \otimes I_s) U (\ket{0}_a \otimes I_s) = H/\alpha$ where $\alpha$ is a positive number \cite{lowHamiltonianSimulationQubitization2019}.
Typically, the linear combination of unitaries (LCU) is the standard way to implement BE of $H$ \cite{childsHamiltonianSimulationUsing2012}. 
Nonetheless, LCU incurs a probability issue that entails the oblivious amplitude amplification \cite{berryHamiltonianSimulationNearly2015}, necessitating 
the reflection operation (RO). Both BE and RO serve as core subroutines in the construction of various algorithms \cite{gilyenQuantumSingularValue2019}.
In a distributed quantum simulation scenario, the distributed implementation of BE and RO should be thoroughly considered. We analyze these implementations based on the distributed architecture depicted in \cref{fig:alldown} (a). 

\emph{Distributed Block encoding (d-BE).}--
We present a distributed quantum circuit to implement $\dbe$ with low quantum communication cost based on a clustered Hamiltonian $H=\sum_{\gamma=1}^\Gamma H_\gamma+\sum_{e\in\calE} H_e$.
Recall that $H_\gamma$ is local term support on $\gamma$th node, which in principle can be block encoded locally without communication.
Contrarily, the interaction terms $H_e$ that cross different nodes 
require multiple nodes to implement BE together, and their coefficients $\{\norm{H_e}\}$ should be shared among the involved nodes.
To coherently control all the terms of $H$ for BE (including all $H_\gamma$ and $H_e$) in a quantum network, 
the auxiliary system should carry the information of the clustered Hamiltonian.
To maintain the nice property that no quantum communication is required for BE of local terms $H_\gamma$, we are supposed to design a novel $\dbe$ protocol.

Our $\dbe$ circuit entails $\Gamma+1$ nodes, 
including one control node and $\Gamma$ local nodes as in \cref{fig:alldown} (b). 
Specifically, the control node is employed to prepare 
\begin{equation}
   G_C\ket{\mathbf{0}}_C=
   \frac{1}{\sqrt{\alpha}}
   \qty(\sum_{\gamma=1}^{\Gamma} \sqrt{\norm{H_\gamma}_1} \ket{\gamma}^{\otimes \Gamma}_C+\sum_{e=1}^{\abs{\calE}}\sqrt{\norm{H_e}}\ket{e}^{\otimes \Gamma}_C), 
\end{equation}
where $\alpha=\norm{H}_1=\sum_{\gamma=1}^{\Gamma} \norm{H_\gamma}_1 +\sum_{e=1}^{\abs{\calE}}\norm{H_e}$.
Here, $\norm{\cdot}$ denotes the operator norm, 
and $\norm{\cdot}_1$ denotes the 1-norm 
(i.e., the sum of absolute values of Pauli coefficients).
Note that $\ket{i}^{\otimes \Gamma}$ is used instead of $\ket{i}$ because, in our distributed protocol, we subsequently have to distribute the prepared auxiliary quantum states to each node. The advantage of using the basis $\ket{i}^{\otimes \Gamma}$ is that it allows the simultaneous transmission of individual qubits to each node, all of which share the same amplitude.

The other $\Gamma$ ancillary registers are distributed across $\Gamma$ nodes, each of which is used to perform the BE of the local $H_{\gamma}$. 
Specifically, $\gamma$-th ancillary register prepares
$G_\gamma\ket{\mathbf{0}}_\gamma=\frac{1}{\sqrt{\alpha_{\gamma}}}\sum_{l\in v_\gamma}\sqrt{\norm{H_l} }\ket{l}_\gamma$,
where $\alpha_\gamma=\norm{H_\gamma}_1$ and $v_\gamma$ denotes the set of qubits on the $\gamma$th processor.  
Similarly, we define the effective select operator as 
$\select(U)=\sum_{\gamma=1}^{\Gamma} |\gamma\rangle^{\otimes \Gamma} \langle \gamma|^{\otimes \Gamma}_C \otimes V_\gamma +V_{\intt}$ 
where $V_\gamma=\sum_{l\in v_\gamma}\op{l}_\gamma\otimes \frac{H_l}{\norm{H_l}}$ 
and $V_{\intt}=\sum_{e=1}^{\abs{\calE}}|e\rangle^{\otimes \Gamma} \langle e|^{\otimes \Gamma}_C \otimes \frac{H_e}{\norm{H_e}}$ 
\footnote{It should be noted that we have projected the ancillary system onto its encoding basis $\{\ket{0}^{\otimes \Gamma},\ket{1}^{\otimes \Gamma},...,\ket{i}^{\otimes \Gamma},...\}$.}
such that the $\dbe$ of $H$ reads 
\begin{equation}
        \left( \langle \mathbf{0}| G_{C}^{\dagger}\bigotimes_{\gamma =1}^{\Gamma}G_{\gamma}^{\dagger} \right) 
    \select(U) 
    \left( G_C \bigotimes_{\gamma =1}^{\Gamma}{G_{\gamma}|\mathbf{0}\rangle}\right) =\frac{H}{\alpha },
\end{equation}
where $\ket{\mathbf{0}}=\ket{\mathbf{0}}_C\bigotimes_{\gamma =1}^{\Gamma}{|\mathbf{0}\rangle}_{\gamma}$.
It can be seen that the $\dbe$ has the same denominator $\alpha$ as the standard BE \cite{lowHamiltonianSimulationQubitization2019}.

The quantum communication cost of $\dbe$ stems from the process of distributing the quantum state $ G_C\ket{\mathbf{0}}$, which necessitates the transmission of $\Gamma \log(\abs{\calE}+\Gamma)$ qubits to $\Gamma$ nodes.
After executing $\select(U)$, the control node needs to perform the $G^{\dagger}_C$ operation,  
requiring each node to resend the corresponding $ \log(\abs{\calE}+\Gamma)$ qubits back to the control node. 
Thus, the overall quantum communication complexity of $\dbe$ is $\bigO(\Gamma \log(\abs{\calE}+\Gamma))$.
It should be emphasized that if we employ the original decomposition of Hamiltonian $H=\sum^L_l H_l$ (where $L$ may be significantly larger than $\abs{\calE}+\Gamma$), 
the quantum communication complexity of the $\dbe$ based on it would be $\bigO(\Gamma \log L)$ which would be unfavorable.

\emph{Distributed reflection operation (d-RO).} 
The reflection operation (RO) is another core element of post-Trotter protocols.
The RO is defined as $R:=I-(1-e^{-\ii\phi})\op{\mathbf{0}}$, where $\op{\mathbf{0}}:=\op{\mathbf{0}}_{C} \otimes\bigotimes_{\gamma=1}^{\Gamma}  \op{\mathbf{0}}_\gamma$ in our setting (for the all ancillary systems in $\dbe$). 
For notation simplicity, we designate the control node as node 0.
In this way, the RO can be rewritten as $R=I-(1-e^{-\ii\phi})\bigotimes_{\gamma=0}^{\Gamma} \op{\mathbf{0}}_\gamma$.
Next, we examine how to perform the distributed reflection operation (denoted $\dro$) in a quantum network. 
First, we introduce the local reflection operation of the $\gamma$-th node
as $R_\gamma= 2 \op{\mathbf{0}}_\gamma-I_{\gamma}$.
Then, one may rewrite the operator as
\begin{equation}\label{eq:reflection}
    R =I-(1-e^{-\ii\phi})  \bigotimes_{\gamma=0}^{\Gamma} \qty(\frac{R_\gamma +I_\gamma}{2}).
\end{equation}
Since $R$ is unitary and the above decomposition suggests that $R$ can be rewritten as a linear combination of unitary operations, we may introduce new auxiliary systems to implement $\dro$ via the nested LCU technique (see \cref{d-LCU}). 
In this setting, the auxiliary system $\{C_b,0_b,1_b,...,\Gamma_b\}$ of the control node and each node for $\dro$ is only 2-dimensional. We present the quantum circuit of $\dro$ in \cref{fig:alldown} (c).
We prove that the quantum communication complexity of implementing the $\dro$ \cref{eq:reflection} is  $\bigO(\Gamma)$ \footnote{See \cref{apd:d-RO} for more detail}. 
Remarkably, our scheme can be generalized to the distributed phase operation of any component of interest, i.e. $I-(1-e^{-\ii\phi})\op{j}$.
Since RO also serves as the key operation in Grover's algorithm,
we will demonstrate its direct application in the distributed version of Grover's algorithm in the application part.

\emph{Communication complexity of post-Trotter protocols.}--
With the above communication-efficient $\dbe$ and $\dro$ designs, 
the quantum communication complexity of $\dts$ and $\dqsp$ are obtained respectively based on the quantum query complexity of these algorithms.
\begin{theoremprl}[Quantum communication complexity of $\dts$]\label{thm:d_ts}
    Given a $\Gamma n$-qubit Hamiltonian %
    $H=\sum_{\gamma=1}^\Gamma H_\gamma+\sum_{e\in\calE} H_e$
    and evolution time $t$,
    the quantum communication complexity of the distributed TS protocol (\dts) for quantum simulation with accuracy $\eps$
    is %
   \begin{equation}
       \bigO\qty( \alpha t \Gamma \log (\abs{\calE}+\Gamma) K),
    \end{equation}
    where $\Gamma$ is the number of nodes, $\abs{\calE}$ is the number of the interaction terms, $K= \frac{\log(\alpha t/\epsilon)}{\log\log(\alpha t/\epsilon)} $ is the truncated order of Taylor series
     and $\alpha= \norm{H}_1$.
\end{theoremprl}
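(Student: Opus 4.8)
\emph{Proof sketch of \cref{thm:d_ts}.}
The plan is to take the standard truncated-Taylor-series circuit of Refs.~\cite{childsHamiltonianSimulationUsing2012, berrySimulatingHamiltonianDynamics2015} and re-express each ingredient in the distributed architecture of \cref{fig:alldown}(a), charging communication only where the control register must be moved across the network. First I would split the evolution into $r=\lceil\alpha t/\ln 2\rceil=\bigO(\alpha t)$ segments, so that within one segment $e^{-\ii H t/r}$ is approximated by the order-$K$ Taylor polynomial $\sum_{k=0}^{K}\tfrac{1}{k!}(-\ii H t/r)^k$ with $K=\bigO\!\big(\tfrac{\log(\alpha t/\eps)}{\log\log(\alpha t/\eps)}\big)$; the usual estimate then bounds the spectral-norm truncation error by $\eps/r$ per segment, and the segment length is chosen precisely so that $\|H t/r\|\le\alpha t/r=\ln 2$, which keeps the LCU normalisation $s\le 2$ needed below. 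Expanding $H^k$ with the clustered decomposition $H=\sum_\gamma H_\gamma+\sum_e H_e$ turns one segment into a linear combination of unitaries indexed by a multi-index $(k;l_1,\dots,l_K)$, implemented by a $\mathrm{PREPARE}$ that builds the order amplitudes $\sqrt{(t/r)^k/k!}$ together with $K$ copies of the $\dbe$ selection state, a $\mathrm{SELECT}$ that applies the $j$-th factor $\select(U)$ conditioned on $j\le k$, and $\mathrm{PREPARE}^\dagger$; a constant number of rounds of oblivious amplitude amplification \cite{berryHamiltonianSimulationNearly2015}, each using one reflection $R$, then boosts the success amplitude $\Theta(1/s)=\Theta(1)$ to $1-\bigO(\eps/r)$, the additional error being of the same order as the truncation error by the robust-OAA argument.

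Next I would distribute these three ingredients. The order register sits entirely on the control node, and the local $\mathrm{PREPARE}$ factors $G_\gamma$ act on node $\gamma$ only, so the communication in $\mathrm{PREPARE}$ is exactly the distribution of the $K$ selection states in the $\ket{i}^{\otimes\Gamma}$ encoding of $\dbe$, costing $\bigO(K\,\Gamma\log(\abs{\calE}+\Gamma))$ qubits per segment, with $\mathrm{PREPARE}^\dagger$ returning them at the same cost. The operator $\select(U)$ needs no communication: once a selection index has been copied to all nodes, the controlled-$H_\gamma$ block acts locally, and each interaction term $H_e$ is a tensor product of single-node Pauli factors, so the controlled-$H_e$ block splits into local gates on the nodes in $\supp(H_e)$ — this is exactly the mechanism behind the $\bigO(\Gamma\log(\abs{\calE}+\Gamma))$ cost of a single $\dbe$. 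Each reflection $R$ is realised by $\dro$ at cost $\bigO(\Gamma)$, and only $\bigO(1)$ of them occur per segment. Correctness is inherited termwise: $\dbe$ block-encodes $H/\alpha$ exactly, so the distributed LCU realises the same block-encoded operator as the non-distributed TS circuit, and the triangle inequality over the $r$ segments with per-segment error $\bigO(\eps/r)$ gives total error $\bigO(\eps)$.

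Finally I would total the cost: per segment $\bigO(K\,\Gamma\log(\abs{\calE}+\Gamma))$ for $\mathrm{PREPARE}$ and $\mathrm{PREPARE}^\dagger$, plus $\bigO(\Gamma)$ for the reflections, i.e. $\bigO(K\,\Gamma\log(\abs{\calE}+\Gamma))$; multiplying by $r=\bigO(\alpha t)$ segments gives $\bigO(\alpha t\,\Gamma\log(\abs{\calE}+\Gamma)\,K)$. The step I expect to need the most care is the bookkeeping of the $K$ selection registers — confirming that distributing all $K$ of them (rather than reusing one) is what the Taylor-series $\mathrm{SELECT}$ genuinely requires, and that this stays within the $o(n)$ per-node ancilla budget — together with checking that the $\alpha t/r=\ln 2$ choice keeps the amplitude-amplification amplitude bounded away from $0$ and $1$, so that $\bigO(1)$ rounds and hence $\bigO(1)$ reflections per segment really do suffice.
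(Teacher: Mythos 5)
Your proposal is correct and takes essentially the same route as the paper's proof: $r=\bigO(\alpha t)$ segments (fixed by $\alpha t/r=\ln 2$ so the LCU normalisation is $2$ and $\bigO(1)$ rounds of oblivious amplitude amplification suffice), with each segment's communication dominated by distributing and returning the $K$ copies of the $\dbe$ selection state at $\bigO(\Gamma\log(\abs{\calE}+\Gamma))$ apiece plus $\bigO(\Gamma)$ for the $\dro$ reflections, giving $\bigO(\alpha t\,\Gamma\log(\abs{\calE}+\Gamma)K)$ in total. The one detail you gloss over is that conditioning $\select(U)$ on the order register (which sits on the control node) is not communication-free — the paper's controlled-$\dbe$ ships that control qubit to every node at an extra $\bigO(\Gamma)$ per block — but this is subsumed by $\bigO(\Gamma\log(\abs{\calE}+\Gamma))$ and leaves the stated bound unchanged.
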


\begin{theoremprl}[Quantum communication complexity of \dqsp]\label{thm:d_qsp}
    Given a $\Gamma n$-qubit Hamiltonian $H=\sum_{\gamma=1}^\Gamma H_\gamma+\sum_{e\in\calE} H_e$ and evolution time $t$,
    the quantum communication complexity of the distributed QSP protocol (\dqsp) for quantum simulation with accuracy $\eps$ is
    \begin{equation}
        \bigO (\Gamma \log( \abs{\calE} + \Gamma)[\alpha t+ \log(1/\epsilon)]),
    \end{equation}
    where  $\Gamma$ is the number of nodes,  $\abs{\calE}$ is the number of the interaction edge terms, %
    and $\alpha=\norm{H}_1$. 
\end{theoremprl}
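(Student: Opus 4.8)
\emph{Proof proposal.} The plan is to reduce $\dqsp$ to the two communication subroutines already built—$\dbe$ and $\dro$—and then quote the standard query complexity of quantum signal processing / qubitization for Hamiltonian simulation. Recall that, given a block encoding $U$ of $H/\alpha$, QSP approximates $e^{-\ii Ht}$ to accuracy $\eps$ by an alternating sequence of $d=\bigO(\alpha t+\log(1/\eps))$ uses of $U$ and $U^\dagger$ (the truncation degree of the Jacobi--Anger expansion), interleaved with $d$ projector-controlled phase rotations on the block-encoding ancilla of the form $I-(1-e^{-\ii\phi_j})\op{\mathbf 0}$; the phases $\{\phi_j\}$ are classically precomputable and their distribution costs only classical communication, which we do not count against the quantum communication budget.

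First I would note that the distributed circuit of the $\dbe$ section does implement a genuine block encoding of $H/\alpha$ with $\alpha=\norm{H}_1$, since it satisfies $\big(\bra{\mathbf 0}G_C^{\dagger}\bigotimes_{\gamma}G_\gamma^{\dagger}\big)\select(U)\big(G_C\bigotimes_{\gamma}G_\gamma\ket{\mathbf 0}\big)=H/\alpha$; hence the off-the-shelf QSP analysis applies verbatim, with $U$ replaced by one invocation of the $\dbe$ circuit and $U^\dagger$ by its reverse (being unitary, $\dbe$ run backwards has the same communication cost). Each such invocation transmits $G_C\ket{\mathbf 0}$ to the $\Gamma$ local nodes, applies the local parts of $\select(U)$, and returns the ancillas for $G_C^{\dagger}$, at quantum communication cost $\bigO(\Gamma\log(\abs{\calE}+\Gamma))$. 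Second, each of the $d$ QSP processing rotations is precisely a phase about the joint ancilla subspace $\bigotimes_{\gamma=0}^{\Gamma}\op{\mathbf 0}_\gamma$ (control node plus the $\Gamma$ local ancilla registers used by $\dbe$), i.e.\ exactly the operator $R$ of \cref{eq:reflection}, implemented by $\dro$ at cost $\bigO(\Gamma)$ per use. Summing over the $d$ rounds gives total quantum communication
\begin{equation}
  \bigO\!\big(d\,[\Gamma\log(\abs{\calE}+\Gamma)+\Gamma]\big)=\bigO\!\big(\Gamma\log(\abs{\calE}+\Gamma)[\alpha t+\log(1/\eps)]\big),
\end{equation}
since the $\dbe$ term dominates the $\dro$ term; this is the claimed bound. (As with \cref{thm:d_ts}, all remaining ingredients—the degree estimate from Jacobi--Anger, the classical phase computation, and the local gate counts—are standard and carry no quantum communication.)

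The main obstacle is bookkeeping the interface between the two subroutines: one must check that the ancilla register on which the QSP phases act is literally the register that $\dbe$ uses (the redundantly encoded $\ket{i}^{\otimes\Gamma}$ spread across the nodes), so that inserting a $\dro$ between consecutive $\dbe$/$\dbe^\dagger$ calls requires no state retransmission—each node simply retains its two-dimensional $\dro$ ancilla and its $\dbe$ ancilla locally between a $\dbe$ call and the adjacent phase step. A related point to verify is that the qubitized walk built from the distributed block encoding has the correct eigenphase structure despite the redundant $\ket{i}^{\otimes\Gamma}$ encoding; this follows because $\select(U)$ and $G_C$ are defined on the encoding basis $\{\ket{i}^{\otimes\Gamma}\}$, so the relevant invariant two-dimensional subspaces are unchanged and the standard QSP/qubitization guarantees transfer directly. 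Everything outside this interface check is routine.
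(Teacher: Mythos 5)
Your proposal is correct and follows essentially the same route as the paper: invoke the qubitization query complexity $\bigO(\alpha t+\log(1/\eps))$ and multiply by the per-query communication costs of $\dbe$, $\bigO(\Gamma\log(\abs{\calE}+\Gamma))$, and $\dro$, $\bigO(\Gamma)$, with the former dominating. Your extra interface checks (that the QSP phases act on exactly the $\dbe$ ancilla registers held locally, so no retransmission is needed) are a welcome refinement but do not change the argument.
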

See \cref{apd:dts} and \cref{qsp} for the proofs of these theorems.
It can be seen
that the parameters of the interaction graph (i.e., $\Gamma$ and $\abs{\calE}$) crucially determine the quantum communication complexity of these distributed simulation protocols.

\emph{Lower bound on quantum communication.}--
In the context of standard (non-distributed) quantum simulation, Berry et al. \cite{berryEfficientQuantumAlgorithms2007} and subsequent works \cite{berryHamiltonianSimulationNearly2015, atiaFastforwardingHamiltoniansExponentially2017, haahQuantumAlgorithmSimulating2021} established a lower bound on the quantum complexity, known as the No fast-forwarding theorem. 
It states that the gate complexity of any generic quantum (dynamics) simulation algorithm scales at least linearly in scaled evolution time $\norm{H}t$, 
which establishes the optimality of the state-of-the-art quantum simulation algorithms. 
The proof is mainly encoding a hard problem, i.e. evaluating the parity function $\parity_N(\vbx)$ of a bit-string $\vbx\in\qty{0,1}^N$, into an instance of quantum dynamics.
In a similar manner, we give a lower bound on the communication complexity of the distributed quantum simulation.
\begin{theoremprl}[Informal, Optimality]\label{thm:weak_lower_bound}
    For any positive integer $\tau\le n$, 
    there exists a sparse Hamiltonian $H$ acting on $2n+\log(n)+1$ qubits with $\norm{H}=\tau$ and a constant evolution time $t=\pi/2$ such that 
    the (bounded-error) quantum communication complexity ($\QC_2$) of any generic $\Gamma$-partite 
    protocol for distributed quantum simulation (\dqdhs) scales at least linearly in scaled evolution time, that is, $\QC_2(\dqdhs)= \Omega(\Gamma\norm{H}t)$.
\end{theoremprl}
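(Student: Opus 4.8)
\emph{Proof idea.} The plan is to prove \cref{thm:weak_lower_bound} by reducing a communication-hard Boolean problem to distributed quantum simulation, in the same spirit as the \nff{} theorem but now on a network. The hard problem is a collection of $\Theta(\Gamma)$ mutually independent instances of two-party \InnerProduct{} on $\tau$-bit strings: fix a perfect matching of the $\Gamma$ local nodes into $\Gamma/2$ pairs, and in each pair $\{p,q\}$ let node $p$ hold $a^{(pq)}\in\{0,1\}^{\tau}$ and node $q$ hold $b^{(pq)}\in\{0,1\}^{\tau}$, the task being to output all the values $\IP_{\tau}(a^{(pq)},b^{(pq)}):=\sum_{i}a^{(pq)}_{i}b^{(pq)}_{i}\bmod 2$. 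Two classical facts drive the argument: (i) the bounded-error quantum communication complexity of $\IP_{\tau}$ is $\Omega(\tau)$, even with unlimited prior shared entanglement; and (ii) in the star network every message to or from a local node $p$ crosses the single edge $\{0,p\}$, and these $\Gamma$ edges are pairwise disjoint (in fact the argument only uses the existence of $\Gamma$ disjoint cuts isolating the nodes, which is why the bound will be topology-independent).

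Next I would encode this into one instance of distributed quantum simulation by attaching a \nff{} gadget to each pair. For strings $a,b$, this gadget is a quantum-walk Hamiltonian $H_{a,b}$ on $2n+\log n+1$ qubits --- an $n$-qubit register for $a$, an $n$-qubit register for $b$ (both fit because $\tau\le n$), a position register of $\bigO(\log n)$ qubits, and one flag qubit --- that is $\bigO(1)$-sparse, has efficiently computable entries depending only on the bits $a_i\wedge b_i$, and, after rescaling its hopping amplitudes so that $\norm{H_{a,b}}=\tau$, has the property that $e^{-\ii H_{a,b}t}$ transports a fixed product initial state, in the constant traversal time $t=\pi/2$, to a state whose standard-basis measurement reveals $\parity_{\tau}(a\wedge b)=\IP_{\tau}(a,b)$ with constant bias. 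Placing $a^{(pq)}$ on node $p$'s register and $b^{(pq)}$ on node $q$'s, and taking the direct sum over the matching, $H:=\bigoplus_{\{p,q\}}H_{a^{(pq)},b^{(pq)}}$, yields a sparse Hamiltonian with $\norm{H}=\max_{\{p,q\}}\norm{H_{a^{(pq)},b^{(pq)}}}=\tau$; the initial-state preparation and the parity readout are node-local and cost no communication.

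The reduction is then routine. Let $\mathcal{P}$ be any generic $\Gamma$-partite protocol that implements $e^{-\ii Ht}$ to accuracy $\eps$ for a sufficiently small constant $\eps$; running $\mathcal{P}$ on the fixed initial state and performing the local readouts outputs all the $\IP_{\tau}(a^{(pq)},b^{(pq)})$ correctly with probability $\ge 2/3$. Fix a local node $p$, matched with some $q$. Regard the control node together with all local nodes other than $p$ as a single party (it holds $b^{(pq)}$, among everything else) and node $p$ as the other party (it holds $a^{(pq)}$); absorbing any network entanglement as prior shared entanglement across the cut, the execution of $\mathcal{P}$ --- with node $p$'s communication confined to the edge $\{0,p\}$ --- is a two-party quantum protocol that computes $\IP_{\tau}(a^{(pq)},b^{(pq)})$, so by fact (i) at least $\Omega(\tau)$ qubits cross $\{0,p\}$. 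Since every one of the $\Gamma$ local nodes lies in some pair, this holds for all $\Gamma$ edges $\{0,p\}$; by fact (ii) they are disjoint, so the total communication of $\mathcal{P}$ is $\Omega(\Gamma\tau)=\Omega(\Gamma\norm{H}t)$, which is the claim. The same argument applied to $\Gamma$ disjoint isolating cuts gives the bound for any connected topology.

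I expect the main obstacle to be the two places where care, rather than the overall architecture, is needed. The first is the gadget: one must exhibit a $\bigO(1)$-sparse $H_{a,b}$ that \emph{simultaneously} has operator norm exactly $\tau$, completes its walk in the exact constant time $\pi/2$, and reveals $\parity_{\tau}(a\wedge b)$ with a bias that survives the $\eps$-error of an only-approximate simulation --- this needs a deliberate choice of hopping amplitudes and a propagation of the error budget from $\eps$ through the final measurement. The second is making the ``restrict $\mathcal{P}$ to a cut'' reduction rigorous: one must check that merging all parties except $p$ into a single party produces a legitimate two-party protocol whose communication equals exactly the traffic on edge $\{0,p\}$, so that the entanglement-robust $\Omega(\tau)$ lower bound for $\IP_{\tau}$ truly applies, and that the $\Gamma$ cut bounds add because the $\IP_{\tau}$ sub-instances are information-theoretically independent.
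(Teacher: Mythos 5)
There is a genuine gap, and it lies in where you encode the private inputs. You put $a^{(pq)},b^{(pq)}$ \emph{into the Hamiltonian}: your gadget $H_{a,b}$ has entries depending on the bits $a_i\wedge b_i$, in the style of the \nff{} query lower bound. But in the distributed-simulation problem being lower-bounded, the Hamiltonian is the publicly specified instance: a ``generic $\Gamma$-partite protocol $\mathcal{P}$ that implements $e^{-\ii Ht}$'' is allowed to depend arbitrarily on the full description of $H$. Consequently, when you merge everything except node $p$ into one party and try to read off a two-party protocol for $\IP_\tau(a^{(pq)},b^{(pq)})$, node $p$'s local operations in $\mathcal{P}$ are functions of $H$, hence of $b^{(pq)}$ as well as $a^{(pq)}$ (its entries are $a_i\wedge b_i$, known to neither node alone), so the extracted protocol is not a legitimate communication protocol with private inputs and the $\Omega(\tau)$ bound for $\IP_\tau$ does not apply. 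Repairing this by restricting how $\mathcal{P}$ may access $H$ (oracle/query access, or ``node $p$ knows only its own terms'') changes the model and still fails here because no single node knows the $a_i\wedge b_i$ entries. The paper avoids exactly this trap: it applies the circuit-to-Hamiltonian (Feynman clock) map to a circuit of $n$ Toffoli gates (multi-controlled-NOTs in the $\Gamma$-partite case), producing a \emph{fixed, input-independent, public} Hamiltonian $H_{\IP}=\sum_s\sqrt{s(n-s+1)}\op{s}{s-1}_{\cc}\otimes\CCNOT(y_s,z_s)+\HC$, and encodes the inputs in the parties' locally prepared initial states $\ket{\vby}_A\ket{\vbz}_B$; then any simulation protocol for this one public $H$ with $t=\pi/2$ yields $\IP$ on the output qubit, and the reduction to the known $\Omega(n)$ (resp.\ $\Omega(\Gamma n)$) bound is sound. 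Your own flagged concerns (gadget norm/time/bias, cut bookkeeping) are secondary to this modeling issue.

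Separately, your route to the factor $\Gamma$ differs from the paper's: you take $\Gamma/2$ independent bipartite $\IP_\tau$ instances on a matching and sum $\Omega(\tau)$ over the $\Gamma$ disjoint star edges, whereas the paper uses a single $\Gamma$-partite inner product $\IP_{n,\Gamma}$ and invokes the Le Gall--Suruga $\Omega(\Gamma n)$ bound for oblivious protocols. If the input-encoding were fixed (inputs in states, public Hamiltonian built from your matched gadgets), your per-cut argument could in principle work and would bypass the multiparty $\IP$ bound, but note two caveats: summing worst-case per-edge bounds into a single worst-case total itself needs the oblivious (input-independent communication pattern) assumption the paper also makes, and your direct-sum Hamiltonian lives on $\Theta(\Gamma)$ copies of the $2n+\log(n)+1$-qubit register rather than the single register in the (informal) statement.
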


\begin{figure}[t]
    \centering
    \includegraphics[width=.95\linewidth]{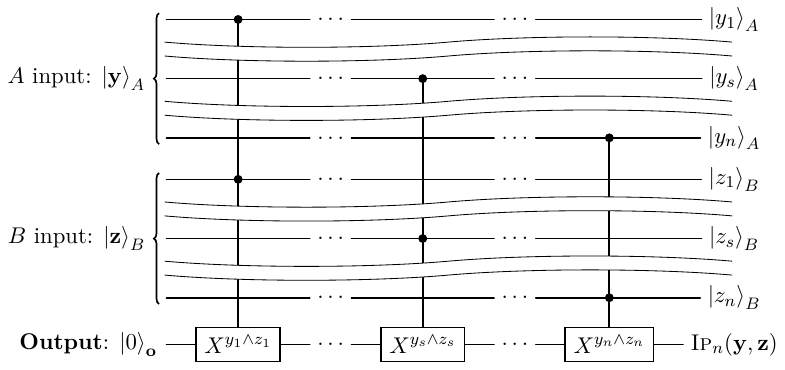}
    \caption{
    The construction from an $\IP_{n}(\vb{y},\vb{z})$ circuit (instance) to a Hamiltonian evolution with $H_{\IP}$ and initial state $\ket{\vb{0}}_{\cc}\otimes\ket{\vb{y}}_A\otimes\ket{\vb{z}}_B\otimes \ket{0}_{\oo}$.
    Alice (Bob) encodes an $n$-bit string $\vby$ ($\vbz$) with $n$-qubit register and the output register consists of one qubit to store the inner product of two bit-strings.
    }
    \label{ip}
\end{figure}
We start the proof from the bipartite case.
The proof is by reducing from the $\InnerProduct$ problem to the distributed quantum simulation (\dqdhs).
The distributed version of $\parity$, that is, $\InnerProduct$ is a Boolean function, denoted $\IP_n(\vbx, \vby):\qty{0,1}^n\times \qty{0,1}^n \to \qty{0,1}$ which evaluates the inner product of two bit-strings. 
Refs.~\cite{kremerQuantumCommunication1995, dewolfQuantumComputingCommunication2001} proved that quantum communication complexity $\QC_2$ of evaluating $\IP_n$ is $\QC_2(\IP_n)=\Omega(n)$. 
If two parties could simulate quantum dynamics distributively with $o(\norm{H}t)$ quantum communication, 
then the Boolean function $\IP_n$ can be evaluated with $o(n)$ communications which would contradict the known lower bound.
Specifically, one can write down a quantum circuit consisting of $n$ Toffoli gates to evaluate the inner product of two bit-strings owned by two parties (see \cref{ip}).
Correspondingly, a Hamiltonian evolution can be constructed from this circuit such that the evolution can yield the circuit output and the lower bound $\Omega(\norm{H}t)$ is obtained.

For a $\Gamma$-partite network, Ref.~\cite{legallBoundsObliviousMultiparty2022} proved that the lower bound of $\Gamma$-partite $\InnerProduct$ can be extended to $\Omega(\Gamma n)$. 
So, our proof can be directly extended to $\Gamma$-partite case by replacing the Toffoli gates with $\Gamma$-partite controlled-NOT gates and the lower bound becomes $\Omega(\Gamma \norm{H}t)$.
Since the quantum communication complexity of our post-Trotter protocol is linear with $t$ and $\Gamma$,
this theorem suggests that our distributed quantum simulation algorithms achieve optimal dependence on evolution time $t$ and the number of partitions $\Gamma$ provided $\norm{H}t\le n$.
We put the complete proof in \cref{sec:proof_lower_bound}.

\emph{Network topology independence.}-- 
The diversity of the topological structures in distributed quantum computing architectures is vast \cite{mengConcurrencePercolationQuantum2021}. 
For generality, we argue that the quantum communication complexity of our distributed protocols (except the distributed product formula) is independent of the specific network structure.
Intuitively, creating quantum entanglement states in the control node and sharing it with other nodes has the same quantum communication complexity as gradually establishing the entanglement between additional systems at each nearby node (see \cref{apd:networks} for the proof).
Ultimately, networks of arbitrary structure can be virtually effective to be a star network presented in \cref{fig:clustered} with the same quantum communication overhead for general distributed protocols. 

\emph{Applications.}-- Now we compare the concrete performance of our protocols for simulating common physical Hamiltonians.
Consider simulating $k$-local Hamiltonians $H=\sum_{l_1,\dots,l_k} H_{l_1,\dots,l_k}$, 
where $H_{l_1,\dots,l_k}$ acts non-trivially on at most $k$ qubits. 
We show that the quantum communication complexity of the $p$th-order $\dpf$ is $\bigO\qty(\min(k,\Gamma)(\Gamma n)^k\vertiii{H}_1\alpha^{1/p}  t^{1+1/p}/\eps^{1/p})$ 
where $\vertiii{H}_1$ is the induced $1$-norm \cite{childsTheoryTrotterError2021}.
It is proportional to the $k$th power of the number of qubits in the simulated system, i.e., $(\Gamma n)^k$. 
In contrast, the post-Trotter protocols ($\dts$ and $\dqsp$) have complexity 
$\bigO\qty(\alpha  t \Gamma k \log(\Gamma n) \frac{\log(\alpha t/\epsilon)}{\log\log(\alpha t/\epsilon)})$ 
and $\bigO(k\Gamma\log(\Gamma n)[\alpha t+\log(1/\epsilon)])$ respectively which are proportional to $k \log(\Gamma n)$ (see \cref{apd:applications}).

For a $\Gamma n$-qubit one-dimensional lattice Hamiltonian with the nearest-neighbor (NN) interaction $H_{\mathrm{NN}}=\sum_{j=1}^{\Gamma n-1}H_{j,j+1}$, 
the number of interaction terms across terms is proportional to the number of nodes, i.e. $\abs{\calE}=\bigO(\Gamma)$.
The quantum communication complexity of the $p$th-order $\dpf$, the $\dts$, and the $\dqsp$ are $\bigO(\Gamma ^{1+1/p}t^{1+1/p}/\eps^{1/p})$,  $\bigO\qty( \alpha t \Gamma \log \Gamma  \frac{\log(\alpha t/\epsilon)}{\log\log(\alpha t/\epsilon)})$ and $\bigO\qty(  \Gamma \log \Gamma [\alpha t+\log (1/\epsilon)])$, respectively.
In this case, the quantum communication complexity for $\dpf$ is more friendly than distributed post-Trotter methods.

\emph{Applications beyond quantum simulation}.--
With our distributed architecture, quantum phase estimation and Grover's algorithm can be easily implemented in quantum networks. 
For a $\Gamma$-partite quantum network and a general $\Gamma n$-qubit unitary $U=\sum_{j=1}^{J} \beta_j \bigotimes^{\Gamma}_{\gamma=1} U^{(\gamma)}_j$, the total quantum communication complexity of distributed quantum phase estimation for $U$ is $\bigO\qty(2^{2K}\sqrt{\sum_j \abs{\beta_j}}\Gamma\log J)$, where $K$ represents the number of decimals in the phase estimation. 
Meanwhile, the total quantum communication complexity of distributed Grover's algorithm is $\bigO\qty(\Gamma \sqrt{N})$, with $N$ denoting the number of items in the database
\footnote{See \cref{apd:applications} for details}.

\emph{Discussion.}--
This work takes a step forward in achieving practical quantum advantage by proposing efficient distributed quantum simulation protocols. 
On the other hand, previous research has focused on decomposing large-scale quantum computation tasks into smaller partitions using local operations and classical communication only \cite{pengSimulatingLargeQuantum2020,yuanQuantumSimulationHybrid2021,sunPerturbativeQuantumSimulation2022,loweFastQuantumCircuit2023,piveteauCircuitKnittingClassical2024,harrowOptimalQuantumCircuit2024,haradaDoublyOptimalParallel2024}, typically with exponentially growing complexity. In contrast, our proposed distributed quantum simulation protocols with quantum connectivity may offer an exponential improvement in communication complexity.
We also establish the optimality of the quantum communication complexity of our protocol for the number of partitions and time when $t\le n$. However, achieving optimality for longer times remains challenging.

Meanwhile, in practice, there is potential to further decrease the quantum communication complexity.
For example, a more fine-grained partition for a specific Hamiltonian in a certain quantum network may have a lower overhead in quantum communication complexity. 
Additionally, with the help of variants of quantum simulation algorithms \cite{ wangFasterQuantumAlgorithms2024, childsFasterQuantumSimulation2019, campbellRandomCompilerFast2019, zhaoExploitingAnticommutationHamiltonian2021} and better performance analyses utilizing prior knowledge 
\cite{sahinogluHamiltonianSimulationLowenergy2021, gongComplexityDigitalQuantum2024,heylQuantumLocalizationBounds2019, yuObservableDrivenSpeedupsQuantum2024,zhaoHamiltonianSimulationRandom2021, chenAverageCaseSpeedupProduct2024,zhaoEntanglementAcceleratesQuantum2024}, quantum communication complexity for distributed quantum simulation may potentially be further reduced.

\begin{acknowledgments}
    We thank X. Zhang and J. Sun for their helpful comments and suggestions. T. F., J.X., W.Y., and Q.Z. acknowledge funding from HKU Seed Fund for Basic Research for New Staff via Project 2201100596, Guangdong NaturalScience Fund via Project 2023A1515012185, National Natural Science Foundation of China (NSFC) via Project No. 12305030 and No. 12347104, Hong Kong Research Grant Council (RGC) via No. 27300823, N\_HKU718/23, and R6010-23, Guangdong Provincial Quantum Science Strategic Initiative GDZX2200001. Z.Y. and P.Y. were supported by the National Natural Science Foundation of China (Grant No. 62332009, 12347104), the Innovation Program for Quantum Science and Technology (Grant No. 2021ZD0302901) and Natural Science Foundation of Jiangsu Province (No. BK20243060).
\end{acknowledgments}

\bibliographystyle{truncate.bst}
\bibliography{ref_aps,refer}

\onecolumngrid
\appendix

\newpage
\newcommand{\appendixtoc}{
  \begingroup
  \let\cleardoublepage\clearpage
  \let\clearpage\relax
  \let\appendixsection\section
  \renewcommand{\section}{\addtocontents{toc}{\protect\setcounter{tocdepth}{-1}}\appendixsection}
  \tableofcontents
  \endgroup
  \addtocontents{toc}{\protect\setcounter{tocdepth}{2}}
}

\begin{center}
{\bf \large Appendix: ``Distributed Quantum Simulation"} 
\end{center}

\tableofcontents
\setcounter{theorem}{0}

\section{Distributed product (Trotter) formula}\label{apd:d_pf}

\subsection{Distributed PF (d-PF) for general Hamiltonians}\label{SI:d-PF}

In this section, we introduce our distributed product formula protocol (denoted $\dpf$) in detail. 
Given a $\Gamma n$-qubit Hamiltonian $H=\sum_{l=1}^L H_l$, 
there is a $\Gamma$-partite partition $\calV$ of $\Gamma n$ qubits, e.g., $\calV =\qty{v_\gamma:\qty[n\gamma+1, n(\gamma+1)]}$.
Based on the specific partition, one may cluster it into two parts, such that
    \begin{equation}\label{eq:SI:cluster_hamiltonian}
        H=\sum_{\gamma=1}^\Gamma H_\gamma+\sum_{e\in\calE} H_e,
    \end{equation}
    where the first summand contains the \emph{local} terms $H_\gamma:=\sum_{l:\supp(H_l)\subseteq v_\gamma} H_l$ and 
    the second summand contains all remaining terms called \emph{interaction} terms, which act between two or more two different parties. 
    This type of $H$ is referred to as the (induced) clustered Hamiltonian. 
Since $\{H_\gamma\}$ represents local Hamiltonian in a quantum network, we can assume that its exponential can be perfectly implemented.
   For simplicity below,  we rewrite \cref{eq:SI:cluster_hamiltonian} as
    \begin{equation}
        H=H_0+ \sum_{e\in\calE} H_e= \sum^{\abs{\calE}}_{e=0} H_e,
    \end{equation}
  where $H_0=\sum_{\gamma=1}^\Gamma H_\gamma$ and $\sum_{e\in\calE} H_e=\sum_{e=1}^{\abs{\calE}} H_e$. In the following, we make use of the $p$th-order product formula \cite{childsTheoryTrotterError2021} to develop a distributed protocol for quantum simulation.
The $p$th-order product formula of clustered Hamiltonian can be written as
\begin{equation}
    \pfU_p(t)=\prod_{y=1}^{\Upsilon}\prod_{e=0}^{\abs{\calE}}e^{-\ii ta_{(y,e)}H_{\pi_y(e)}},
\end{equation}
where the coefficients $a_{(y,e)}$ are real numbers. 
The parameter $\Upsilon$ denotes the number of stages of the formula \cite{childsTheoryTrotterError2021}. 
As a special case, the Suzuki formula with $\Upsilon=2 \times 5^{p/2-1}$ needs one to alternately reverse the ordering of summands between neighboring stages. 
Note that $\Upsilon$ and $\pi_y$ are fixed and the coefficients $a_{y,e}$  are uniformly bounded by 1 in absolute value.  

\begin{figure*}[!t]
    \centering
    \includegraphics[width=.8\linewidth]{./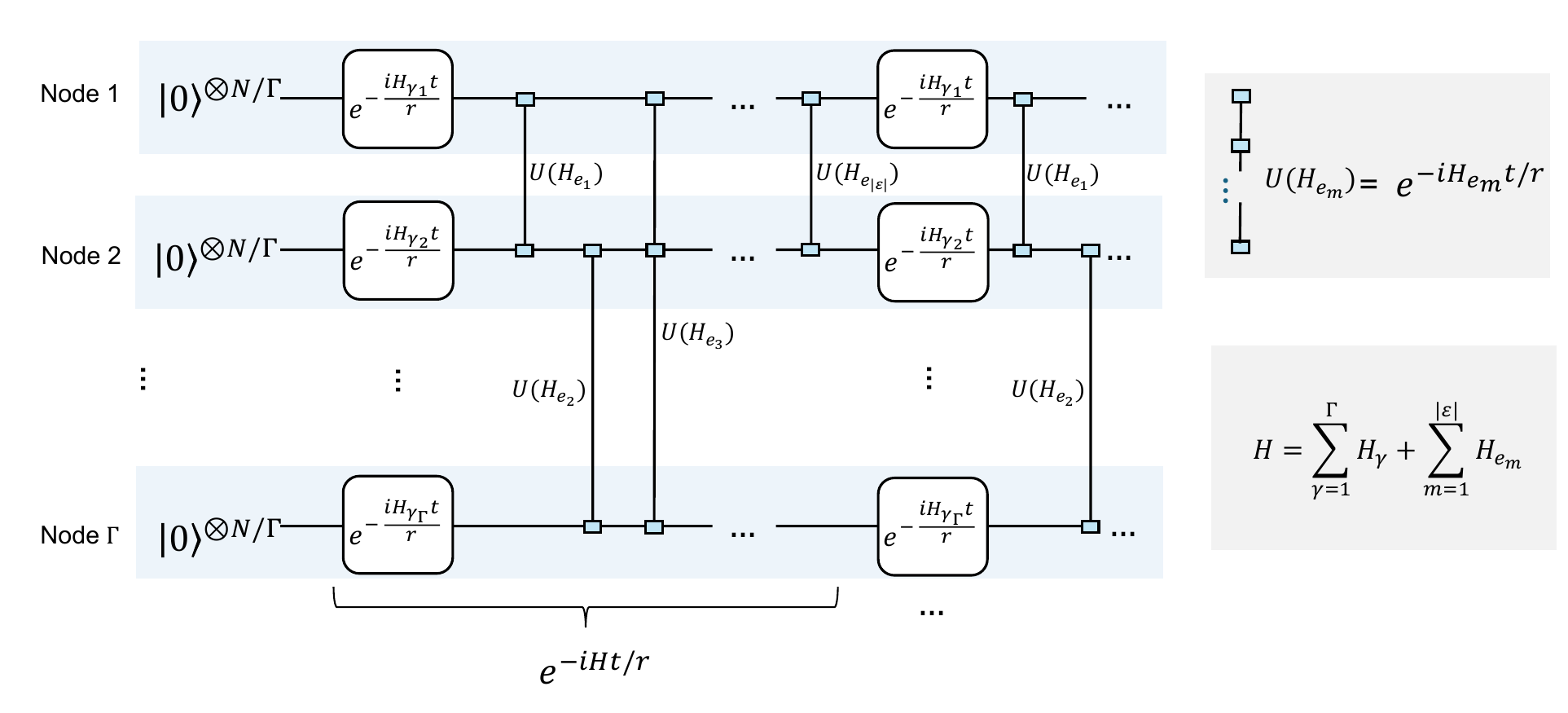}
    \caption{Quantum circuit of $\dpf$ for simulating a $\Gamma n$-qubit clustered Hamiltonian with a $\Gamma$-partite quantum network. 
    Specifically, $U(H_{e_m})$ represents the interaction term, where quantum communication is required for its implementation.}
    \label{fig:d-PF_circuit}
\end{figure*}

We present the quantum circuit of $\dpf$ in \cref{fig:d-PF_circuit} and the procedure of $\dpf$ for quantum simulation is as follows:
\begin{enumerate}
    \item  Given the Hamiltonian $H$, decompose the Hamiltonian according to the graph $H=\sum^{\abs{\calE}}_{e=0} H_e$. 

    \item For a single segment, each node $\gamma$ performs the local terms $e^{-\ii H_\gamma t/r}$ locally. 
    The interaction terms $e^{-\ii H_e t/r}$ are implemented through quantum communication since the support of $H_e$ spans at least two nodes.

    \item Repeat the above steps according to the formula $\pfU_p(t)=\prod_{y=1}^{\Upsilon}\prod_{e=0}^{\abs{\calE}}e^{-\ii ta_{(y,e)}H_{\pi_y(e)}}$.
\end{enumerate}

It can be seen that when $H$ is partitioned, the whole process of $\dpf$ is quite straightforward, and quantum communication occurs only in the interaction component $U(H_{e_m})$ across nodes.
It is worth noting that the $\dpf$ protocol can be extended to the analog quantum simulation \cite{ farhiAnalogAnalogueDigital1998, cubittUniversalQuantumHamiltonians2018, liuEfficientlyVerifiableQuantum2024}. 
 In this scenario, during a single time segment, each local node evolves based on its local analog Hamiltonian. At the same time, the interactions between different nodes require the implementation of quantum control, which results in a certain amount of quantum communication cost.

In the following, we will analyze the quantum communication complexity of Hamiltonian simulations with the $\dpf$ in detail.

\subsection{Communication complexity of d-PF}
Before analyzing the quantum communication complexity of quantum simulations with $\dpf$, 
we introduce the following Lemma of the $p$th-order product formula (PF).
\begin{lemma}[Trotter error  with commutator scaling \cite{childsTheoryTrotterError2021}]\label{thm:pf}
    Let $H=\sum_{l=1}^L H_l$ be a Hermitian operator consisting with $L$ summands, and let $t \ge 0$. Let  $\pfU_p(t)=\prod_{y=1}^{\Upsilon}\prod_{l=1}^{L}e^{-\ii ta_{(y,l)}H_{\pi_y(l)}}$ be a $p$th-order product formula. Define $\tilde{\alpha}_{\cmm,p}:=\sum\limits_{l_1,\dots,l_{p+1}=1}^{L} \norm{[H_{l_1},[H_{l_2},\dots,[H_{l_p},H_{l_{p+1}}]]]}$. 
    Then, the Trotter error $\norm{e^{-\ii Ht}-\pfU_p(t)}$ can be asymptotically bounded as
    $\bigO(\tilde{\alpha}_{\cmm,p}t^{p+1})$.
    To achieve precision $\epsilon$, we need Trotter steps $r$
    for $\norm{e^{-\ii Ht}-\pfU^r_p(t/r)}\le \epsilon$, i.e.
    \begin{equation}\label{eq:lemmap-pf-trotter-number}
       r=\bigO\qty(\tilde{\alpha}^{1/p}_{\cmm,p}t^{1+1/p}/\epsilon^{1/p}).
    \end{equation}
\end{lemma}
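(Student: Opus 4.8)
The plan is to obtain the bound by \emph{variation of parameters} (Duhamel's principle), which rewrites the product formula as a time-ordered evolution under an effective Hamiltonian $\tilde H(t)$, and then to use the $p$th-order condition together with Taylor's theorem with integral remainder to expose the nested-commutator structure of $\tilde H(t)-H$. First I would relabel $\pfU_p(t)=\prod_{v=1}^{V}e^{-\ii t b_v A_v}$ with $V=\Upsilon L$, each $A_v\in\{H_1,\dots,H_L\}$ and $|b_v|\le 1$. Differentiating the ordered product and conjugating each generator back through the trailing factors gives $\tfrac{d}{dt}\pfU_p(t)=-\ii\,\tilde H(t)\,\pfU_p(t)$ with $\tilde H(t)=\sum_v b_v\,\mathcal U_{>v}(t)\,A_v\,\mathcal U_{>v}(t)^{\dagger}$, where $\mathcal U_{>v}$ is the appropriate sub-product of $\pfU_p$'s factors; first-order consistency $\sum_v b_v A_v=H$ gives $\tilde H(0)=H$. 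Variation of parameters against $e^{-\ii tH}$ then yields the exact identity $\pfU_p(t)-e^{-\ii tH}=-\ii\int_0^t\mathcal P(t,\tau)\big(\tilde H(\tau)-H\big)e^{-\ii\tau H}\,d\tau$ for a unitary propagator $\mathcal P$, so $\|\pfU_p(t)-e^{-\ii tH}\|\le\int_0^t\|\tilde H(\tau)-H\|\,d\tau$, and everything reduces to bounding $\|\tilde H(\tau)-H\|$.

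Next I would write $\mathcal U_{>v}(\tau)A_v\mathcal U_{>v}(\tau)^{\dagger}=\big(\prod_{w>v}e^{-\ii\tau b_w\,\mathrm{ad}_{A_w}}\big)A_v=:\mathscr A_v(\tau)$, so that each $\mathscr A_v^{(k)}(\tau)$ is a finite sum of unitary conjugates of $k$-fold nested commutators of the $A_w$'s with coefficients bounded by $1$. The hypothesis that $\pfU_p$ is a $p$th-order formula, i.e. $\pfU_p(t)=e^{-\ii tH}+\bigO(t^{p+1})$, is equivalent (via the identity above, or, equivalently, is exactly the usual definition) to the \emph{order conditions} $\sum_v b_v\,\mathscr A_v^{(k)}(0)=0$ for $1\le k\le p-1$. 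Applying Taylor's theorem with integral remainder to each $\mathscr A_v$ at order $p$, the explicitly-written terms of orders $1,\dots,p-1$ cancel by the order conditions, leaving $\tilde H(\tau)-H=\tfrac{1}{(p-1)!}\int_0^\tau(\tau-s)^{p-1}\sum_v b_v\,\mathscr A_v^{(p)}(s)\,ds$.

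Then I would bound $\sum_v b_v\,\mathscr A_v^{(p)}(s)$: each $\mathscr A_v^{(p)}(s)$ is a sum of unitary conjugates of $(p+1)$-operator nested commutators $[A_{w_1},[\dots,[A_{w_p},A_v]\dots]]$ with coefficients that, up to a factor depending only on $p$, are bounded by $1$. Collecting terms by the underlying index tuple $(l_1,\dots,l_{p+1})\in[L]^{p+1}$ and using that each $H_l$ occupies at most $\Upsilon$ slots (so each index tuple is produced by at most $\Upsilon^{p+1}$ slot tuples, a constant for fixed $p$), one gets $\|\sum_v b_v\,\mathscr A_v^{(p)}(s)\|\le\bigO(1)\cdot\tilde\alpha_{\cmm,p}$. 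Integrating gives $\|\tilde H(\tau)-H\|\le\bigO(\tilde\alpha_{\cmm,p})\,\tau^{p}$, hence $\|\pfU_p(t)-e^{-\ii tH}\|\le\bigO(\tilde\alpha_{\cmm,p}\,t^{p+1})$. Finally, for $r$ segments, subadditivity of the error over unitaries gives $\|e^{-\ii tH}-\pfU_p^{\,r}(t/r)\|\le r\,\|e^{-\ii(t/r)H}-\pfU_p(t/r)\|=\bigO(\tilde\alpha_{\cmm,p}\,t^{p+1}/r^{p})$, and setting this $\le\eps$ yields $r=\bigO(\tilde\alpha_{\cmm,p}^{1/p}\,t^{1+1/p}/\eps^{1/p})$.

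The main obstacle is the step exposing the commutator structure: one has to verify that once the order conditions cancel orders $1,\dots,p-1$, the surviving Taylor remainder is genuinely a linear combination of exactly $(p+1)$-fold nested commutators (rather than longer ones), and that the combinatorial multiplicities — both the number of terms generated by the $p$th derivative of $\prod_w e^{-\ii s b_w\mathrm{ad}_{A_w}}$ and the slot-to-index overcounting — contribute only a constant depending on the fixed order $p$, so the final estimate is $\bigO(\tilde\alpha_{\cmm,p})$ and does not grow with $L$. The Duhamel identity, the norm estimates, and the segmenting argument are then routine bookkeeping.
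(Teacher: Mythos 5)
First, note that the paper does not prove this lemma at all: it is imported verbatim from Childs, Su, Tran, Wiebe and Zhu \cite{childsTheoryTrotterError2021}, so the comparison is against that reference's proof. Your overall route --- the Duhamel/variation-of-parameters error representation $\pfU_p(t)-e^{-\ii tH}=-\ii\int_0^t\mathcal P(t,\tau)(\tilde H(\tau)-H)e^{-\ii\tau H}\,d\tau$, the order conditions killing the Taylor coefficients of $\tilde H(\tau)-H$ up to order $p-1$, and the reduction to nested commutators --- is exactly the route taken there, and the Duhamel step, the order-condition step, the $\Upsilon^{p+1}$ slot-to-index overcounting, and the segmenting argument are all sound.

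The genuine gap is in the step you yourself flag as the main obstacle, and your proposed mechanism does not close it. You claim that $\mathscr A_v^{(p)}(s)$, for $s$ in the integration range (not just $s=0$), is a sum of unitary conjugates of $(p+1)$-fold nested commutators of the \emph{original} summands with $O_p(1)$ coefficients. That is false: differentiating $\prod_{w>v}e^{-\ii s b_w\,\mathrm{ad}_{A_w}}$ inserts each $\mathrm{ad}_{A_w}$ at its own position, \emph{interleaved} with the remaining exponentials; pushing the exponentials to the outside converts the slots into differently conjugated operators $U_w(s)A_wU_w(s)^\dagger$, and the norm of such a dressed commutator is not controlled by $\norm{[A_{w_1},[\dots,[A_{w_p},A_v]]]}$. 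Already for $p=1$ with three summands, one term of $\mathscr A_1^{(1)}(s)$ is $-\ii\, e^{-\ii s\,\mathrm{ad}_{H_3}}\,[H_3,\,e^{-\ii s H_2}H_1e^{\ii s H_2}]$, whose norm can be of order $s\,\norm{[H_3,[H_2,H_1]]}$ even when $[H_3,H_1]=0$ (take $H_1=H_3=Z$, $H_2=X$). Hence the uniform-in-$s$ bound $\norm{\sum_v b_v\mathscr A_v^{(p)}(s)}=\bigO(\tilde\alpha_{\cmm,p})$ does not follow; bounding the deviation from $s=0$ introduces higher-than-$(p+1)$-fold commutators multiplied by extra operator norms, which gives at best a $t\to0$ statement with an $H$-dependent correction rather than the commutator-only prefactor the lemma asserts. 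The fix, which is what the cited proof actually does, is to avoid a single Taylor expansion of $\mathscr A_v$ in $\tau$ and instead expand the conjugation factors one at a time from the innermost outward, always using the integral remainder in the form $\int d\theta\, e^{-\ii\theta b_w\,\mathrm{ad}_{A_w}}\mathrm{ad}_{A_w}^{q}(\cdot)$ (exponential acting on the \emph{whole} expression, legitimate since $e^{-\ii\theta b_w\,\mathrm{ad}_{A_w}}$ commutes with $\mathrm{ad}_{A_w}$) and truncating each factor adaptively so the total $\mathrm{ad}$-degree never exceeds $p$. Then every surviving term is a unitary conjugation, acting outside, of a genuine $(p+1)$-fold nested commutator of the original $H_l$'s, the degree-$\le p-1$ polynomial terms cancel by the order conditions, and the term count depends only on $p$ and $\Upsilon$; with that replacement the rest of your argument goes through.
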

\cref{thm:pf} gives the upper bound of Trotter steps $r$ for $p$th-order PFs in the general case.
Intuitively, for $\dpf$, the quantum communication complexity should be proportional to the Trotter steps $r$ and the number of interaction terms across nodes that require quantum communication.
\begin{theorem}[Quantum communication complexity of $\dpf$, Theorem 1 in maintext]\label{apd:thm:d_pf}
    Given a $\Gamma n$-qubit clustered Hamiltonian $H=H_0 +\sum_{e=1}^{\abs{\calE}} H_e $ with $H_0:=\sum_{\gamma=1}^\Gamma H_\gamma$ and evolution time $t\in\mathbb{R}^+$,
    quantum communication complexity of the $p$th-order distributed product formula (\dpf) to simulate $e^{-\ii H t}$ within accuracy $\eps$ is
    \begin{equation}
        \bigO(\abs{\calE} \Gamma \talpha_{\cmm,p}^{1/p} \ t^{1+1/p}/\eps^{1/p}), 
    \end{equation}
    where $\tilde{\alpha}_{\cmm,p}:=\sum\limits_{e_1,\dots,e_{p+1}=0}^{\abs{\calE}} \norm{[H_{e_1},\dots,[H_{e_p}, H_{e_{p+1}}]]}$ is the norm of the nested commutator of terms in $H$ (with $H_{e=0}=H_0$), $\Gamma$ is the number of nodes and $\abs{\calE}$ is the number of the interaction terms across multiple nodes.
\end{theorem}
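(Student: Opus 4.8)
The plan is to combine the Trotter-step count from \cref{thm:pf}, applied to the coarse-grained decomposition $H=\sum_{e=0}^{\abs{\calE}} H_e$, with a per-segment accounting of the quantum communication cost. The key observation that makes the proof work is that the clustered Hamiltonian has only $\abs{\calE}+1$ effective summands (the bundled local term $H_0$ plus the $\abs{\calE}$ interaction terms), so that \cref{thm:pf} with $L=\abs{\calE}+1$ yields Trotter steps $r=\bigO(\talpha_{\cmm,p}^{1/p} t^{1+1/p}/\eps^{1/p})$ with $\talpha_{\cmm,p}$ being exactly the nested-commutator norm over the indices $e_1,\dots,e_{p+1}\in\{0,1,\dots,\abs{\calE}\}$ stated in the theorem. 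Here I would also note that the $H_0$ term contributes no communication and its internal decomposition into local $H_\gamma$'s is irrelevant to the commutator bound at the coarse-grained level, so using the clustered description rather than the original $H=\sum_{l=1}^L H_l$ is what keeps the count in terms of $\abs{\calE}$ rather than $L$.

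Next I would bound the communication cost of a single application of the $p$th-order product formula $\pfU_p(t/r)=\prod_{y=1}^{\Upsilon}\prod_{e=0}^{\abs{\calE}} e^{-\ii (t/r) a_{(y,e)} H_{\pi_y(e)}}$. Each stage $y$ contains one exponential per index $e$; the exponential of $H_0$ is executed locally with zero communication, while each interaction exponential $e^{-\ii(t/r) a_{(y,e)} H_e}$ requires routing the qubits in $\supp(H_e)$ through the control node (or establishing the relevant entanglement), which costs $\bigO(\Gamma)$ qubits of communication per interaction term since $\supp(H_e)$ spans at most $\Gamma$ nodes and each node contributes $\bigO(1)$ (really $o(n)$, but $\bigO(1)$ suffices for the leading scaling) qubits to the handshake. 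Summing over the $\abs{\calE}$ interaction terms and the $\Upsilon=\bigO(1)$ stages (a constant depending only on $p$), one application of $\pfU_p$ costs $\bigO(\abs{\calE}\,\Gamma)$ quantum communication. I would be slightly careful here to argue the $\bigO(\Gamma)$ cost per interaction edge rather than the naive "number of nodes touched by $H_e$," and to absorb the $\Upsilon$ factor into the constant.

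Finally I would multiply the two estimates: $r$ segments, each costing $\bigO(\abs{\calE}\,\Gamma)$ communication, gives total quantum communication complexity
\begin{equation}
    \bigO\qty(\abs{\calE}\,\Gamma\,\talpha_{\cmm,p}^{1/p}\, t^{1+1/p}/\eps^{1/p}),
\end{equation}
which is the claimed bound. A minor point to handle cleanly is the error budget: \cref{thm:pf} guarantees Trotter error $\le\eps$ for the chosen $r$, and since all local and interaction exponentials are assumed implemented exactly (the local ones by fiat, the interaction ones via exact quantum communication of the relevant registers), there is no additional approximation error to track, so the overall accuracy is $\eps$ as stated.

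The main obstacle I anticipate is not the Trotter-step count (that is a direct invocation of \cref{thm:pf}) but rather making the per-edge communication bound $\bigO(\Gamma)$ fully rigorous and topology-robust: one must argue that implementing $e^{-\ii s H_e}$ for an interaction term supported on several nodes can always be done with communication linear in the number of nodes in $\supp(H_e)$ (hence $\bigO(\Gamma)$ in the worst case), independently of how large $\supp(H_e)$ is within each node, by teleporting or SWAP-routing only the $\bigO(1)$ relevant qubits from each participating node to the control node, applying the local gate there, and returning them. This is the step where the "$o(n)$ ancilla per node" budget and the star-network assumption (or its topology-independent equivalent argued later in the paper) actually get used.
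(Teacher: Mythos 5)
Your overall strategy is the same as the paper's: apply the commutator-scaling Trotter bound to the coarse-grained decomposition $H=\sum_{e=0}^{\abs{\calE}}H_e$ to get $r=\bigO(\talpha_{\cmm,p}^{1/p}t^{1+1/p}/\eps^{1/p})$ (the paper notes this coincides with the Trotter number for clustered Hamiltonians in Appendix I of \cite{childsTheoryTrotterError2021}), then charge $\bigO(\abs{\calE}\Gamma)$ communication per segment and multiply. The Trotter-count half of your argument is fine and matches the paper.

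The gap is in the per-edge communication bound, and it is exactly the point you flagged as the anticipated obstacle. Your justification --- ``teleporting or SWAP-routing only the $\bigO(1)$ relevant qubits from each participating node to the control node, applying the local gate there, and returning them'' --- presumes that each interaction term touches only $\bigO(1)$ qubits per node, which the theorem does not assume. The paper explicitly treats interaction terms such as $H_{e}=\norm{H_{e}}\bigotimes_{j=1}^{\Gamma}\bigl(\otimes_{i\in v_j}\sigma_{q_i}\bigr)$, whose support inside a single node can be all $n$ of its qubits; shipping that support to a central location would cost $\Omega(n)$ communication per node and would also exceed the $o(n)$ ancilla budget. The missing idea is the standard parity-ancilla gadget (Nielsen--Chuang): conjugate by single-qubit rotations to reduce to a $Z$-string, have each node apply a \emph{local} CNOT ladder from its support qubits onto one shared ancilla qubit, teleport only that single ancilla from node to node (cost $\bigO(1)$ per participating node, hence $\bigO(\Gamma)$ per interaction term regardless of within-node support), apply the single-qubit rotation $e^{-\ii\theta\sigma_z}$ on the ancilla, and uncompute. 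With this gadget your per-segment cost $\bigO(\abs{\calE}\Gamma)$ becomes rigorous, and the rest of your argument (constant $\Upsilon$ for fixed $p$, exact local exponentials, multiply by $r$) goes through as in the paper.
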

\begin{proof}[Proof of \cref{apd:thm:d_pf}]
As we mentioned before, given an $n$-qubit Hamiltonian $H=\sum_{l=1}^L H_l$, one may rewrite it in clustered form, i.e. $H=H_0+ \sum_{e\in\calE} H_e=\sum_{e=0}^{\abs{\calE}} H_e$,
where $H_0=\sum_{\gamma=1}^\Gamma H_\gamma$ and $\abs{\calE}$ is the number of the (induced) interaction terms.
In distributed quantum simulation, different parties correspond to quantum processors at different nodes, with local operations within nodes considered ideal, meaning the implementation of $e^{-\ii\delta t H_0}$ is either perfect or exhibits arbitrarily small error. 
Therefore, the Trotter step of the $p$th-order $\dpf$ should be 
\begin{equation}
   r_{\dpf}=\bigO\qty(\qty(\sum_{e_1,\dots,e_{p+1}=0}^{\abs{\calE}} \norm{[H_{e_1},[H_{e_2},\dots,[H_{e_p},H_{e_{p+1}}]]]})^{1/p} t^{1+1/p} / \epsilon^{1/p}),
\end{equation}
which is the same as the Trotter number for simulating the clustered Hamiltonian with circuit cutting (see Appendix I in \cite{childsTheoryTrotterError2021}).
That is, in this partition,  $\tilde{\alpha}_{\cmm,p}:=\sum\limits_{e_1,\dots,e_{p+1}=0}^{\abs{\calE}} \norm{[H_{e_1},[H_{e_2},\dots,[H_{e_p},H_{e_{p+1}}]]]}$.%

For 2-local Hamiltonians, the quantum communication complexity of each segment is $\bigO(\abs{\calE})$ 
since each induced interaction term $e^{-\ii\delta t H_{e\in\calE}}$ requires $\bigO(1)$ quantum communication to be realized.
For example, $H_{e_m}=\norm{H_{e_m}}\sigma_{x_i} \otimes \sigma_{z_j} $, where $i\in v_i$, $j\in v_j$ and $i\ne j$. 
The implementation of $e^{-\ii\delta t \norm{H_{e_m}}\sigma_{x_i} \otimes \sigma_{z_j} }$ requires $\bigO (1)$ qubits to be teleported or transferred from $v_i$ to $v_j$, and vice versa.

On the other hand, let us consider the interaction term of the more general $k$-fold tensor product of Pauli operators and $k\ge \Gamma$, e.g., 
$H_{e_m}=\norm{H_{e_m}}\bigotimes _{j=1}^{\Gamma}(\otimes_{i\in v_j}\sigma_{q_i})$, 
where $1 \le m \le \abs{\calE}$ and $q_i=\{0,x,y,z\}$ represents the four Pauli operators. 
This type of Hamiltonian has non-trivial effects on all or nearly all parts of a large system. Unlike the two-local Hamiltonian, which can be realized directly by moderating interactions, the evolution of such a Hamiltonian typically may require at least one auxiliary qubit. 
A standard approach to perform $e^{-\ii\delta t \norm{H_{e_m}}\bigotimes _{j=1}^{\Gamma}(\otimes_{i\in v_j}\sigma_{q_i})}$ is presented in \cite{nielsenQuantumComputationQuantum2010}, 
in which one ancillary qubit needs to be entangled with qubits of interaction.
Without loss of generality, we assume $H_{e_m}=\norm{H_{e_m}} \bigotimes _{j=1}^{\Gamma}(\otimes_{i\in v_j}\sigma_{z_i})$. Specifically, 
\begin{equation}
    e^{-\ii\delta t \norm{H_{e_m}} \bigotimes _{j=1}^{\Gamma}\qty(\otimes_{i\in v_j}\sigma_{z_i}) }
    =\qty(\bra{0}_a \otimes I_S \qty(\prod_{j=1}^{\Gamma}\prod_{i\in v_j} \text{CNOT}_{i,a})^\dagger) 
    \qty(e^{-\ii\delta t \norm{H_{e_m}}\sigma_z} \otimes I ) 
    \qty(\qty(\prod_{j=1}^{\Gamma}\prod_{i\in v_j} \text{CNOT}_{i,a}) \ket{0}_a \otimes I_S ),
\end{equation}
where $\text{CNOT}_{i,a}$ is the CNOT gate of $i$th qubit in $v_j$ (control qubit) and ancillary qubit (target qubit). $I_S$ represents the identity operator on the whole distributed quantum system. 

The cost of quantum communication required to implement $\prod_{i\in v_j} \text{CNOT}_{i,a}$ is $\bigO(1)$ since the auxiliary qubit can be transmitted to the $v_j$th node by performing quantum teleportation only once. Hence, the quantum communication complexity involved in executing $\prod_{j=1}^{\Gamma}\prod_{i\in v_j} \text{CNOT}_{i,a}$ is $O(\Gamma)$.  
In other words, the quantum communication cost of $e^{-\ii\delta t \norm{H_{e_m}} \bigotimes _{j=1}^{\Gamma} \qty(\otimes_{i\in v_j}\sigma_{z_i}) }$ is $O(\Gamma)$.
Similarly, by performing an appropriate single-qubit rotation, the same procedure can be employed to perform single $e^{-\ii\delta t \norm{H_{e_m}}\bigotimes _{j=1}^{\Gamma}\qty(\otimes_{i\in v_j}\sigma_{q_i})}$ with quantum communication complexity $\bigO(\Gamma)$. 
Considering that there are $\abs{\calE}$ induce interaction terms ($H_e$) in each segment, the quantum communication complexity of each segment is $\bigO(\abs{\calE}\Gamma )$ in general. 
Synthesizing the Trotter number of Trotter step of the clustered Hamiltonian, we obtain an upper bound on the quantum communication complexity of $\dpf$ as 
$\bigO(\abs{\calE} \Gamma \talpha_{\cmm,p}^{1/p} \ t^{1+1/p}/\eps^{1/p})$.
\end{proof}

\section{Distributed post-Trotter protocols}\label{SI:d-AF}

Besides the product formula (Trotter method), the other promising class of quantum simulation algorithms is called post-Trotter methods.
The post-Trotter methods, including the Truncated Taylor series method (TS) and quantum signal processing (QSP) mainly expand the exponential function as a linear combination of functions of $H$. 
That is, generally, Hamiltonian simulation using the post-Trotter methods needs block encoding (BE) of matrix $H$ in the quantum circuits or quantum states. 
Typically, the linear combination of unitaries (LCU) technique \cite{childsHamiltonianSimulationUsing2012} is the standard way to implement BE of $H$ \cite{lowHamiltonianSimulationQubitization2019}. 
However, LCU presents a probabilistic issue that requires the incorporation of oblivious amplitude amplification \cite{berryHamiltonianSimulationNearly2015}, necessitating another core operation, the reflection operation (RO). 
Below we will introduce distributed versions of LCU, BE, RO, TS, and QSP step by step.

\subsection{Distributed linear combination of unitaries (d-LCU)}\label{d-LCU}

In the main text, while there is no specific introduction to distributed LCU ($\dlcu$), it serves as the foundation for the implementation of distributed quantum algorithms, such as distributed Block encoding ($\dbe$) and distributed reflection operation ($\dro$). 
This is because both $\dbe$ and $\dro$ fundamentally rely on the nested LCU technique, which is crucial for their operation and effectiveness in distributed quantum computing tasks. 
Therefore, it is essential to recognize the significance of $\dlcu$ as a building block in the development of distributed quantum algorithms, e.g., distributed quantum phase estimation and distributed Grover's search.

Here we briefly review the LCU method presented in \cite{berrySimulatingHamiltonianDynamics2015}. %
Suppose our goal is to implement an $n$-qubit unitary operation $V$ on a quantum computer. 
As prior knowledge,  $V$ can be decomposed into 
  $  V=\sum_{j=1}^{J}\beta_j U_j$,
where $U_j$ is unitary and $\beta_j$ is corresponding coefficient. 
There is a unitary $G$ transforming the ancillary state $\ket{\mathbf{0}}$ to
\begin{equation}\label{apd:eq:prepare_single}
    G\ket{\mathbf{0}} = \frac{1}{\sqrt{s}} \sum_{j=1}^{J} \sqrt{\beta_j} \ket{\mathbf{j}}, 
\end{equation}
where $s$ is the normalized parameter, 
$s =\sum_{j=1}^{J}\beta_j$.  The select $U$ is defined as
$\select (U):= \sum_j \op{\mathbf{j}}  \otimes U_j$,
where the control register $\op{\mathbf{j}}$ acts on the ancillary system $\mathcal{H}_{anc}$. Constructing an operator as
   $ W:=(G^\dagger \otimes I)\; \select (U) \; (G \otimes I)$.
When the input state is $|\mathbf{0}\rangle \otimes |\psi\rangle  $, acting $W$ on the global system, one has
\begin{equation}
    W|\mathbf{0}\rangle \otimes |\psi\rangle =\frac{1}{s}(|\mathbf{0}\rangle \otimes V|\psi\rangle ) + \frac{\sqrt{s^2-1}}{s} |\phi\rangle,
\end{equation}
where $|\phi\rangle$ is orthogonal to the subspace span $\{ |\mathbf{0}\rangle  \otimes \mathcal{H}_{sys}\}$. After projecting to $|\vb{0}\rangle $ of the ancillary system, one may obtain the desired operation on the target system with probability $\frac{1}{s}$.
However, the probability of failure in some tasks should not be underestimated, so we need to amplify the probability of success. When $s = 2$, we can boost the probability through the oblivious amplitude amplification protocol,
\begin{equation}
    -WRW^\dagger RW|\vb{0}\rangle \otimes |\psi\rangle =|\vb{0}\rangle \otimes V |\psi  \rangle ,
\end{equation} 
where $R:= (I-2\op{\vb{0}})\otimes I$ is the reflection operation acting on $\calH_{anc} \otimes \calH_{sys} $.
For the case $s>2$, one may decompose $V$ into $V=\prod_{j=1}V^{(j)}$,
in which each $V^{(j)}=\sum_{i=1}\alpha^{(j)}_i U^{(j)}_i$ and $s^{(j)}=\sum_{i=1} \alpha^{(j)}_i=2$.

\subsubsection{The d-LCU protocol}
Here we extend the LCU technique to a quantum network.
Consider a system with a control node consisting of $m\Gamma$ qubits and $\Gamma$ local nodes, where each local node is made up of $n$ working qubits and $m$ auxiliary qubits.
The distributed LCU protocol (denote $\dlcu$) is to implement the $n\Gamma$-qubits operation $V$ in the quantum network. 
As prior knowledge, $V$ may be decomposed as
$V=\sum_{j=1}^{J}\beta_j \bigotimes_{\gamma=1}^{\Gamma} U^{(\gamma)}_j$
where $U^{(\gamma)}_j$ is a unitary of $\gamma$-th node and $m:=\lceil \log(J) \rceil$. 
Next, the control node prepares the following entangled state
\begin{equation}\label{eq:prepare_control}
    G\ket{\mathbf{0}}
    :=\frac{1}{\sqrt{s}} \sum_{j=1}^{J}\sqrt{\beta_j} |\mathbf{j}\rangle^{(1)}\otimes |\mathbf{j}\rangle^{(2)}\otimes ...\otimes|\mathbf{j}\rangle^{(\Gamma)}
    \equiv \frac{1}{\sqrt{s}} \sum_{j=1}^{J}\sqrt{\beta_j} \bigotimes_{\gamma=1}^{\Gamma} |\mathbf{j}\rangle^{(\gamma)}
    , 
\end{equation}
where the normalization factor $s =\sum_{j=1}^{J}\beta_j$, each $|\mathbf{j}\rangle$ is encoded by $m$ qubits and $\ket{\mathbf{j}}^{(\gamma)}$ indicates that these $m$ qubits will be transmitted to the $\gamma$-th node (using quantum teleportation).

Upon receiving the ancillary state \cref{eq:prepare_control} from the control nodes, 
each node executes a locally controlled unitary operation $\sum_{j_\gamma} \op{j_\gamma}\otimes U^{(\gamma)}_{j_\gamma}$,
and the global operation is $\bigotimes_{\gamma=1}^{\Gamma} \qty(\sum_{j_\gamma} \op{j_\gamma}\otimes U^{(\gamma)}_{j_\gamma})$. 
Note that the encoding basis for the ancillary states prepared in the control node is 
$\{\ket{0}^{\otimes \Gamma},\ket{1}^{\otimes \Gamma},...,\ket{J}^{\otimes \Gamma}\}$.
Thus the effective select(U) in the quantum network is given as $\Pi \bigotimes_{\gamma=1}^{\Gamma} (\sum_{j_\gamma} \op{j_\gamma}\otimes U^{(\gamma)}_{j_\gamma})\Pi$ where $\Pi=\bigotimes_{\gamma=1}^\Gamma \ket{j}^{(\gamma)}\bra{j}^{(\gamma)}=\sum_j\ket{j}^{\otimes\Gamma} \bra{j}^{\otimes\Gamma}$, i.e.
\begin{equation}\label{eq:selectU}
    \select(U):= \sum_j |\mathbf{j}\rangle ^{\otimes \Gamma} \langle \mathbf{j}|^{\otimes \Gamma} 
    \bigotimes_{\gamma=1}^{\Gamma} U^{(\gamma)}_j
\end{equation}
Similar to the non-distributed case, we construct an operator as
$W:=(G^\dagger \otimes I)\; \select(U)\; (G \otimes I)$.
When the initial state is $|\mathbf{0}\rangle \otimes ( \bigotimes_{\gamma=1}^\Gamma\ket{\psi_\gamma} )$, performing $W$ on the whole system, we have  
\begin{equation}
    W\ket{\mathbf{0}} \otimes \qty( \bigotimes_{\gamma=1}^\Gamma|\psi_\gamma\rangle)
    =\frac{1}{s} \qty(\ket{\mathbf{0}} \otimes V \bigotimes_{\gamma=1}^\Gamma\ket{\psi_\gamma})+ \frac{\sqrt{s^2-1}}{s} \ket{\phi},
    \label{eq:w}
\end{equation}
where $|\phi\rangle$ is orthogonal to the subspace span $\{ \ket{\mathbf{0}} \otimes \mathcal{H}_{sys}^{\otimes \Gamma}\}$. After projecting to $\ket{\mathbf{0}} $ of the ancillary systems, we may obtain the desired operation on the target system with probability $\frac{1}{s}$.
However, the probability of failure in some tasks can be significant, hence, we need to amplify the probability of success.
When $s = 2$, we can boost the probability through the oblivious amplitude amplification protocol,
\begin{equation}
    -WRW^\dagger RW\ket{\mathbf{0}}  \otimes ( \bigotimes_{\gamma=1}^\Gamma|\psi_j\rangle ) 
    =\ket{\mathbf{0}}  \otimes V ( \bigotimes_{\gamma=1}^\Gamma|\psi_j\rangle ) ,
    \label{eq:ob_aa}
\end{equation} 
where $R:= (I-2\op{\mathbf{0}})\otimes I$ is the reflection acting on $\calH_{anc}^{\otimes \Gamma} \otimes \calH_{sys}^{\otimes \Gamma}$.

The procedure of $\dlcu$ is summarized as follows:
\begin{enumerate}
    \item  The control node uses an $m\Gamma$-qubit processor to prepare entangled states 
    $\frac{1}{\sqrt{s}} \sum_{j=1}^{J}\sqrt{\beta_j} \bigotimes_{\gamma=1}^{\Gamma} |\mathbf{j}\rangle^{(\gamma)}$.
    \item The control node dispatches the prepared entangled states 
    to each of the $\Gamma$ nodes via $\Gamma$ quantum channels, with each node accepting $m$ qubits.

    \item Upon receiving the quantum states from the control node, the $\Gamma$ nodes store them in their local auxiliary quantum systems, each consisting of $m$ qubits. At this stage, each local node contains $m$ qubits of information regarding the entangled state distributed by the control node, as well as an $n$-qubit local quantum state.
    
    \item Each local auxiliary $m$-qubit system starts to interact with the local $n$-qubit system to carry out the corresponding unitary operation $U^{(\gamma)}_j$$(\gamma\in[\Gamma])$.
    
    \item The information of the respective $m$ qubits from the $\Gamma$ nodes after the local operations, is transmitted back to the control system, where the $G^\dagger$ operation is performed and the $m\Gamma$ qubits are measured to achieve the desired operation $V$ on the $n\Gamma$ qubits (if amplitude amplification is performed, instead of measuring qubits, $R$ should be performed). 
    Since this process is probabilistic, the use of amplitude amplification, achieved by repeating the previous steps, can help to execute the target operation on the $\Gamma n$ qubits in a nearly deterministic manner.
\end{enumerate}

Note that any operation can in principle be written in the form of LCU, thus universal quantum gates can also be implemented in the $\dlcu$ protocol, leading to universal quantum computation in quantum networks.\\

 \emph{Communication complexity of d-LCU.}
In the previously mentioned $\dlcu$ protocol, both quantum and classical communications occur during the process for distribution of entanglement from the control node to the $\Gamma$ nodes, and the process for the transmission of the auxiliary state from the $\Gamma$ nodes back to the control node. 
Two rounds of quantum teleportation involving $m\Gamma$ qubits are necessary to execute a single $W$ operation. Consequently, a $W$ operation demands the transmission of $2m\Gamma$ qubits of information and $4m\Gamma$ classical bits. It is important to note that the number of ancillary qubits, $m$, is dependent on the size of $L$, i.e., $m= \lceil \log(J) \rceil$.
 
\begin{itemize}
    \item 
    In the case of $s=2$, three $W$ operations are required to achieve the desired $n\Gamma$-qubit operation, which corresponds to a consumption of $6m\Gamma $ qubits and $12m\Gamma$ classical bits for quantum and classical communication, respectively. 
    
     \item 
    For the case $s>2$,  one may decompose $V$ into  
     $V=\sum_{j=1}\beta^{(k)}_j \bigotimes_1^{\Gamma} U^{(k,1)}_j$, 
     in which each $V^{(k)}=\sum_{i=1}\beta^{(k)}_i \bigotimes_{\gamma=1}^{\Gamma} U^{(k,\gamma)}_i$ and $s^{(k)}=\sum_{i=1} \beta^{(k)}_i=2$.
     In this case, the communication complexity is $6r\Gamma\lfloor \text{log}J \rfloor$ qubits and $12r\Gamma\lfloor \text{log}J \rfloor$ bits for quantum and classical communication, respectively. In general,
     by $\bigO(\sqrt{s})$ queries, one may perform $V$ with quantum and classical communication complexity  $\bigO(\sqrt{s}\Gamma \text{log}J) $. 
     
\end{itemize}

In the following section,  we only focus on quantum communication complexity and we will show that $\dlcu$ is a gadget that assists in implementing $\dbe$ and $\dro$.

\subsection{Distributed block encoding (d-BE)}\label{apd:dbe}

Before we introduce distributed TS and QSP protocols, it is necessary to design a construction for the block encoding of $H$ within a quantum network \cite{lowHamiltonianSimulationQubitization2019, berrySimulatingHamiltonianDynamics2015}.
As previously mentioned in the $\dpf$ section, given a $\Gamma n$-qubit Hamiltonian $H=\sum_l H_l$,
there is an interaction graph of clustered Hamiltonian $H=\sum_{\gamma=1}^\Gamma H_\gamma+\sum_{e\in\calE} H_e$ with $\Gamma$ nodes and $\abs{\calE}$ interaction terms,
In the following sections, we will demonstrate that compared to the decomposition $H=\sum_l H_l$ (where $H_l$ are the k-fold tensor product of Pauli operators),
the $\sum_{\gamma=1}^\Gamma H_\gamma+\sum_{e\in\calE} H_e $ may give a significantly low quantum communication complexity for d-BE in general cases.

Our task is to implement the $\Gamma n$-qubit block encoding of $H$ within a $\Gamma$-node quantum network, each node containing $n+\log(n)$ qubits where $\log(n)$ qubits are ancillary qubits. 
Recall the standard block encoding of $H=\sum_{l=1}^L H_l$ \cite{lowHamiltonianSimulationQubitization2019}
\begin{equation}\label{eq:standard_block_encoding}
    \langle \mathbf{0}|G^{\dagger} \otimes I \left( \sum_{l=1}^{L}\op{\mathbf{l}} \otimes \frac{H_l }{\norm{H_l}}\right) G|\mathbf{0}  \rangle \otimes I =\frac{H}{\alpha},
\end{equation}
where $G\ket{\mathbf{0}}=\sum \frac{1}{\sqrt{\alpha}} \sqrt{\norm{H_l} }\ket{l}$ and $\alpha=\norm{H}_1$ 
($\norm{\cdot}$ and $\norm{\cdot}_1$ are the spectral norm and 1-norm, respectively).
Let's now delve into how to implement a distributed block encoding within a quantum network.

To reduce quantum communication, we adopt the clustered Hamiltonian representation for block encoding. 
This approach eliminates the need to encode $\Gamma$ local Hamiltonians using quantum communication.
As shown in \cref{SI:fig:dbe_dro} (a), we present the nested block encoding for clustered Hamiltonian $H=\sum_{\gamma=1}^\Gamma H_\gamma+\sum_{e\in\calE} H_e$ in a quantum network. 
The $\dbe$ can be interpreted as either a linear combination of BEs or a nested linear combination of unitaries. 

The procedure for $\dbe$ is as follows.
To begin, the control node prepares
\begin{equation}\label{apd:eq:prepare}
   G_C\ket{\mathbf{0}}_C=
   \frac{1}{\sqrt{\alpha}}\qty(\sum_{\gamma=1}^{\Gamma} \sqrt{\norm{H_\gamma}_1} \ket{\gamma}^{\otimes \Gamma}_C+\sum_{e=1}^{\abs{\calE}}\sqrt{\norm{H_e}}\ket{e}^{\otimes \Gamma}_C), 
\end{equation}
where $\alpha=\norm{H}_1=\sum_{\gamma=1}^{\Gamma} \norm{H_\gamma}_1 +\sum_{e=1}^{\abs{\calE}}\norm{H_e}$. 
Note that $\ket{i}^{\otimes \Gamma}$ is used instead of $\ket{i}$ because, in our distributed protocol, we subsequently have to distribute the prepared auxiliary quantum states to each node. The advantage of using the basis $\ket{i}^{\otimes \Gamma}$ is that it allows the simultaneous transmission of individual qubits to each node, all of which share the same quantum amplitude $\{\sqrt{\norm{H_\gamma}_1}, \sqrt{\norm{H_e}}\}$.

The remaining $\Gamma$ ancillary registers are distributed across $\Gamma$ nodes, with each register used to execute the block-encoding of the local $H_{\gamma}$.
Specifically, $\gamma$-th ancillary register prepares
$G_\gamma\ket{\mathbf{0}}_\gamma=\frac{1}{\sqrt{\alpha_{\gamma}}} \sum_{l\in v_\gamma}\sqrt{\norm{H_l} }\ket{l}_\gamma$
where $\alpha_\gamma=\norm{H_\gamma}_1$ and $v_\gamma$ represents the set of qubits on the $\gamma$th processor. 
From the standpoint of the standard BE of LCU, the distributed $G$ operation for the entire ancillary system which includes control nodes and auxiliary systems for each local node is expressed as
$G=G_C\otimes \bigotimes_{\gamma=1}^{\Gamma} G_\gamma$.
In a manner similar to $\dlcu$, we define the effective $\select(U):=\sum_{\gamma=1}^{\Gamma} |\gamma\rangle^{\otimes \Gamma} \langle \gamma|^{\otimes \Gamma}_C \otimes V_\gamma +V_{\intt}$ 
which is explicitly expanded as
\begin{align}\label{d-BE-sU}
    \select(U)
    &=\sum_{\gamma=1}^{\Gamma} |\gamma\rangle^{\otimes \Gamma} \langle \gamma|^{\otimes \Gamma}_C \otimes \qty(\sum_{l\in v_\gamma}\op{l}_\gamma\otimes \frac{H_l}{\norm{H_l}})
    +\sum_{e=1}^{\abs{\calE}}|e\rangle^{\otimes \Gamma} \langle e|^{\otimes \Gamma}_C \otimes \frac{H_e}{\norm{H_e}}.
\end{align}

It should be noted that we have projected the ancillary system onto its encoding basis $\{\ket{0}^{\otimes \Gamma},\ket{1}^{\otimes \Gamma},...,\ket{J}^{\otimes \Gamma}\}$.  
Now the $\dbe$ of $H$ gives
\begin{equation}
    \qty( \langle \mathbf{0}|_C G_{C}^{\dagger}\bigotimes_{\gamma =1}^{\Gamma}{\langle \mathbf{0}|_{\gamma}}G_{\gamma}^{\dagger} ) 
    \select(U) \qty( G_C|\mathbf{0}\rangle_C \bigotimes_{\gamma =1}^{\Gamma}{G_{\gamma}|\mathbf{0}\rangle}_{\gamma}) 
    =\frac{H}{\alpha }.
\end{equation}
Note that the implementations of $\dbe$ of $H$ have the same denominator $\alpha=\sum_{\gamma=1}^\Gamma  \norm{H_\gamma}_1+\sum_{e\in\calE} \norm{H_e}$ as standard BE \cite{lowHamiltonianSimulationQubitization2019}.  
Expanding $H$ may help us better understand this result. 
It can be seen that
\begin{equation}
    H = \sum_{\gamma=1}^\Gamma H_\gamma+\sum_{e\in\calE} H_e
    \equiv\sum_{\gamma=1}^\Gamma  \norm{H_\gamma}_1  \frac{H_\gamma }{\norm{H_\gamma}_1 }+\sum_{e\in\calE} H_e
    \equiv\qty(\sum_{\gamma=1}^\Gamma  \norm{H_\gamma}_1+\sum_{e\in\calE} \norm{H_e})\frac{H}{\sum_{\gamma=1}^\Gamma  \norm{H_\gamma}_1+\sum_{e\in\calE} \norm{H_e}}.
\end{equation}
The $\dbe$ must be normalized with respect to the partition of $H$ and the local BE for $H_\gamma$. 
Specifically, the auxiliary states of each local BE for $H_\gamma$ are normalized, with the normalization coefficients equal to $\norm{H_\gamma}_1$. 
The above equation shows
that it is decomposed in a way that ensures the normalization of the coefficients for all local nodes, while also normalizing the coefficients of the control nodes (with the normalization coefficient being $\norm{H}_1$).

\emph{Quantum communication complexity of d-BE.}\label{qcc_dBE} 
In the $\dbe$ protocol described above, the quantum communication cost arises from the process of distributing the quantum state 
\cref{apd:eq:prepare}
, which requires the transmission of $\Gamma \log(\abs{\calE}+\Gamma)$ qubits. In addition, after executing select($U$), the control node needs to perform the $G^\dagger_C$ operation, thus requiring each node to resend the corresponding $ \log(\abs{\calE}+\Gamma)$ qubits back to the control node. 
Thus the overall quantum communication cost of $\dbe$ is $2\Gamma \log(\abs{\calE}+\Gamma)$, i.e., the quantum communication complexity is $\bigO(\Gamma \log(\abs{\calE}+\Gamma))$.

\begin{figure}[!t]\label{SI:fig:dbe_dro}
    \centering
    \sidesubfloat[]{\includegraphics[width=0.45\linewidth]{./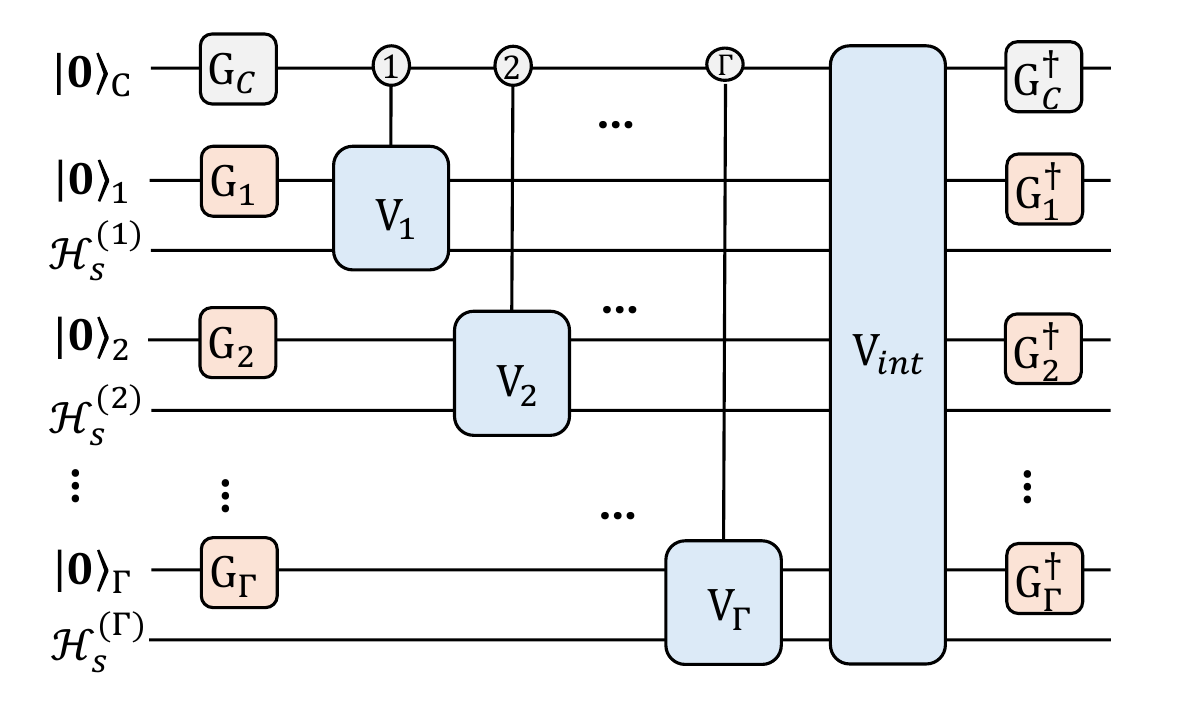} }
    \hfill
    \sidesubfloat[]{\includegraphics[width=0.41\linewidth]{./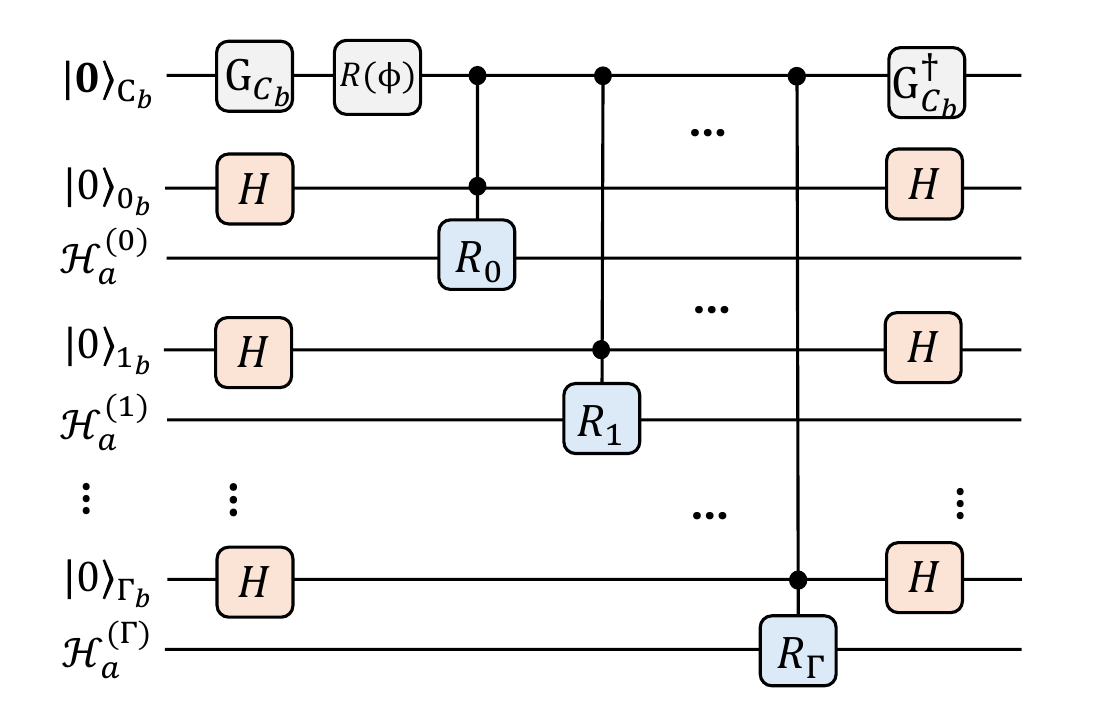}   }
    \caption{(a) Distributed block encoding (\dbe) of $H=\sum_{\gamma=1}^\Gamma H_\gamma+\sum_{e\in\calE} H_e$ using $\dlcu$ in a quantum network with $\Gamma$ nodes and a control node. 
    The control node prepares $G_C\ket{0}$ as \cref{apd:eq:prepare}.
    The implementation of $V_{\intt}$ (for $\{H_e\}$) requires that the qubits of the control node are needed to be transmitted to the entire network.
    (b) Distributed reflection operation ($dro$).  
    In a quantum (star) network, the distributed reflection operation ($\dro$) can act on the qubits of the $\Gamma$ nodes and the central control node. 
    For simplicity, we designate the control node as node 0, i.e., $\gamma=0$. The newly introduced ancillary system $C_b$ of the control node is then used to facilitate the implementation of $\dro$.  
    Without loss generality, $\ket{i}_{C_b}$ can be defined as $\ket{i}_{C_b}=\ket{i}\otimes \ket{i}^{\otimes \Gamma}$, where each qubit in $\ket{i}^{\otimes \Gamma}$ should be individually sent to each of the $\Gamma$ nodes.
    }
\end{figure}

\subsection{Distributed reflection operation (d-RO)}\label{apd:d-RO}
In the post-Trotter quantum simulation protocols, a crucial element besides BE is the reflection operation (RO) to amplify of the success probability of the algorithm. 
The reflection operation acting on the ancillary system is defined as $R=I-(1-e^{-\ii\phi})\op{\mathbf{0}}$. 
In the LCU and Grover's algorithm, one typically sets $\phi = \pi$, whereas in QSP the setting of $\{\phi\}$ is dependent on the objective function.
Our goal is to implement distributed reflection operation (denoted $\dro$) in a quantum network ($\Gamma$ nodes and a control node), that is
\begin{equation}
    R = I-(1-e^{-\ii\phi})\ket{0}^{\otimes \Gamma} \bra{0}^{\otimes\Gamma}_C \otimes \bigotimes_{\gamma=1}^{\Gamma} \op{\mathbf{0}}_\gamma=I-(1-e^{-\ii\phi})  \op{\mathbf{0}}_{C} \otimes \bigotimes_{\gamma=1}^{\Gamma}  \op{\mathbf{0}}_\gamma,
\end{equation}
where $\ket{\mathbf{0}}_C=\ket{i}^{\otimes \Gamma}_C$.
The reflection operation essentially performs a phase gate on the target component $\ket{\mathbf{0}}$ (acting on $\Gamma\log{\abs{\calE}}+\sum^{\Gamma}_{\gamma=1}\log {L_\gamma}$ qubits of overall ancillary systems), which may require $\bigO(1)$ multi-qubit Toffoli gates to implement. 
For performing this reflection operation $R$, a naive approach is for each node to transfer its local ancillary state to the control node with the quantum communication complexity $\bigO(\sum^{\Gamma}_{\gamma=1}\log {L_\gamma})$. 
Obviously, the communication cost is large if each $L_\gamma \gg  \abs{\calE}+\Gamma$.
A smarter approach is to introduce auxiliary qubits and realize this multi-qubit reflection operation by cascading a series of Toffoli gates, such as the method in \cite{nielsenQuantumComputationQuantum2010}. This approach requires the layout of the circuit to $\Gamma$ nodes, and the communication complexity may be $\bigO(\Gamma)$. 
However, as mentioned above, the operation of cascading in fact increases the overall circuit depth.
Considering the parallel nature of distributed computing, here we propose a $\dro$ based on nested d-LCUs in a network, which has the same communication complexity while having a shallower circuit depth compared to the former. %

Here we give a concrete construction of $\dro$ in a quantum network with $\Gamma+1$ nodes. 
First, we introduce the local reflection operation of $\gamma$-th node, which is defined as
$R_\gamma= 2 \op{\mathbf{0}}_\gamma-I_{\gamma}$.
Similarly, the local reflection operation of the control node is 
$R_\gamma= 2 \op{\mathbf{0}}_\gamma-I_{\gamma}$.
Moving the identity operator $I_\gamma$ and $I_{C}$ from the right to the left, we get
\begin{equation}
   \frac{R_\gamma +I_\gamma}{2} =  \op{\mathbf{0}}_\gamma,
     \frac{R_{C} +I_{C}}{2} =  \op{\mathbf{0}}_{C}.
\end{equation}
Now we rewrite the $R$ operation as
\begin{equation}
    R 
    =I-(1-e^{-\ii\phi})  \qty(\frac{R_{C} +I_{C}}{2})\otimes \bigotimes_{\gamma=1}^{\Gamma} \qty(\frac{R_\gamma +I_\gamma}{2})
    =I-(1-e^{-\ii\phi}) \bigotimes_{\gamma=0}^{\Gamma} \qty(\frac{R_\gamma +I_\gamma}{2}).
\end{equation}
For ease of expression, we have set the control node as node 0 i.e. $\gamma=0$.
Since $R$ is unitary, and the above equation shows that $R$ can be rewritten as a linear sum of unitary operations, one may introduce a new auxiliary system to implement $\dro$ via LCU. 
We can use the nested LCU technique to achieve this operation, where the auxiliary system of the
control node is only 2-dimensional. Specifically, we introduce a new ancillary qubit at the control node and each local node, respectively, i.e. $\ket{0}_{C_{b}}\otimes \bigotimes_{\gamma=0}^{\Gamma}\ket{0}_{\gamma_b}$. 
These ancillary registers are prepared to
\begin{equation}
    \qty(G_{C_b}\otimes \bigotimes_{\gamma=1}^{\Gamma} H_{\gamma_b}  )\ket{0}_{C_{b}}\otimes \bigotimes_{\gamma=0}^{\Gamma}\ket{0}_{\gamma_b}
    =\frac{1}{\sqrt{1+2\sin \frac{\phi}{2}}} \qty(\ket{0}+\sqrt{2\sin \frac{\phi}{2}} \ket{1})_{C_b} \otimes  \bigotimes_{\gamma=0}^{\Gamma} \ket{+}_{\gamma_b,} 
\end{equation}
where all $H_{\gamma_b}$ are Hadamard gate and $\ket{+}=\frac{1}{\sqrt{2}}(\ket{0+\ket{1}})$. 
Note that  $-(1-e^{-\ii\phi})=-\ii e^{-\frac{\ii\phi}{2}}2 \sin \frac{\phi}{2}$ and we have assumed $\phi\in[0,\pi/2]$.
Then, a single qubit phase gate $U_{\phi}=\op{0}+\op{1}e^{\ii(-\frac{\phi}{2}+\frac{3\pi}{2})}$ is implemented on the ancillary qubit $C_b$, and the ancillary register state (for $\{C_b,\gamma_b\}$) becomes
\begin{equation}
   \frac{1}{\sqrt{1+2\sin \frac{\phi}{2}}}
   \qty(\ket{0}-\ii e^{-\ii\frac{\phi}{2}}\sqrt{2\sin \frac{\phi}{2} }\ket{1})_{C_b} \otimes  \bigotimes_{\gamma=0}^{\Gamma} \ket{+}_{\gamma_b} . 
\end{equation}
Now we define the effective select($U_R$) for $\dro$ as 
\begin{equation}
    \select(U_R)=\op{0}_{C_b}\otimes I +\op{1}_{C_b}\otimes \bigotimes_{\gamma=1}^{\Gamma} (\op{0}_{\gamma_b}\otimes I_{\gamma} +\op{1}_{\gamma_b} \otimes R_\gamma ).
\end{equation}
We have the $\dro$ as follows,
\begin{equation}
   \qty(G_{C_b}\otimes \bigotimes_{\gamma=1}^{\Gamma} H_{\gamma_b})^\dagger   \select(U_R)  U_\phi 
   \qty(G_{C_b}\otimes \bigotimes_{\gamma=0}^{\Gamma} H_{\gamma_b}) 
   =\frac{\ket{0}_{C_b,\gamma_b} \bra{0}_{C_\gamma,\gamma_b}\otimes R}{1+2\sin\frac{\phi}{2}} + \mathcal{R} ,
\end{equation}
where $\ket{0}_{C_b,\gamma_b}=\ket{0}_{C_{b}}\otimes \bigotimes_{\gamma=0}^{\Gamma}\ket{0}_{\gamma_b}$ and $\mathcal{R} $ is orthogonal to the subspace span $\{ \op{0}_{C_b,\gamma_b}  \otimes \mathcal{H}_{tar}\}$. The success probability of $R$ can be seen to depend on the target phase $\phi$. 
According to oblivious amplitude amplification, one needs to query the whole unitary $\bigO\qty(\sqrt{1+2\sin \frac{\phi}{2}})$ times to guarantee that the success probability of $R$ is 1. 
Remarkably, our scheme can be generalized to the distributed phase operation of any component of interest, i.e. $I-(1-e^{-\ii\phi})\op{j}$.

\emph{Quantum communication complexity of d-RO.}\label{qcc_dRO}
In general, the query complexity  of implementing $\dro$ ($R$) is $\bigO(\sqrt{1+2\sin \frac{\phi}{2}})= \bigO(1)$ since $1\le 1+2\sin \frac{\phi}{2} \le 3$ ($\phi\in[0,\pi/2]$). 
In the above process of performing $\dro$, only the control register $C_{b}$ is needed to be distributed into the whole network ($\Gamma$ nodes). 
That is, for an $\bigO(1)$ query of the oracle of $\dro$, the quantum communication complexity is $\bigO(\Gamma)$.  
Without loss of generality, $\ket{i}_{C_b}$ can be defined as $\ket{i}_{C_b}=\ket{i}\otimes \ket{i}^{\otimes \Gamma}$, where each qubit in $\ket{i}^{\otimes \Gamma}$ should be individually sent to $\Gamma$ nodes.
In this setting, the quantum communication complexity for $\dro$ is still $\bigO(\Gamma)$ but with lower circuit depth.

\subsection{Distributed truncated Taylor series (d-TS)}\label{apd:dts}

In this section, we would like to extend the truncated Taylor series (TS) \cite{berrySimulatingHamiltonianDynamics2015} to its distributed version (denoted $\dts$). 
Suppose $\norm{e^{-\ii Ht/r}-\sum_{k=0}^{K} \frac{(-\ii)^k(t/r)^k}{k!} H^k }\le \epsilon/r$. 
Given $H=\sum_l H_l=\sum_{\gamma=1}^\Gamma H_\gamma + \sum_{e\in \calE} H_e$,
we denote the norm of a Hamiltonian term by $\alpha_l:=\norm{H_l}$ and $\alpha:=\sum_l \alpha_l=\norm{H}_1$ where $\norm{H}_1=\sum_j \norm{H_j}$.
For accuracy $\epsilon/r$, the required order of Taylor expansion  must be $K=\bigO\qty(\frac{\log(\alpha t/\epsilon)}{\log\log(\alpha t/\epsilon)})$ \cite{berrySimulatingHamiltonianDynamics2015}. 
Expanding $e^{\ii Ht/r} \approx \sum_{j=1}^{J}\beta_j \bigotimes_{1}^{\Gamma} U^{(\gamma)}_j$ based on truncated TS and one can implement quantum simulation of truncated TS using d-LCU with algorithm error $\epsilon/r$.

Here we generalize the unary-encoding setting of TS method \cite{berrySimulatingHamiltonianDynamics2015} to distributed case. The idea for unary-encoding for TS is intuitive, which essentially prepares an auxiliary system applying control operations allowing the linear combinations different $H^k$ with coefficient $\frac{(\ii t)^k}{k!}$. 
We present the quantum circuit for $\dts$ with unary encoding in \cref{fig:d-TS_circuit}.
Specifically, the overall ancillary systems prepare 
\begin{equation}
    B_{\unary}\ket{\mathbf{0}}
    = G_{\TS}\otimes \bigotimes_{i=1}^{K}G_{(i)} \ket{\mathbf{0}}
    = \frac{1}{\sqrt{\alpha_{\TS}}}\sum^{K}_{k=0} \sqrt{\frac{(\alpha t/r)^k}{k!}}\ket{1^k 0^{K-k}} \otimes \bigotimes_{i=1}^{K}  \sum^{L}_{l=1} \frac{1}{\sqrt{\alpha}}\sqrt{\norm{H_l}}\ket{l}_i,
    \end{equation}
where $\alpha_{\TS}=\sum_{k=0}^{K} \frac{(\alpha t/r)^k}{k!}$ and $\alpha=\sum_l^{L}\norm{H_l}$. 
Note that here $G_{(i)}=G_C\bigotimes_{\gamma=1}^{\Gamma}G_\gamma$ is the gate to encode all coefficients of $H$ 
which is exactly equal to $G$ in \cref{apd:dbe}.
The select($U_{\unary}$) is given as 
\begin{equation}
    \select (U_{\unary}) =%
    \bigotimes_{i=1}^{K}  (\op{0}_i\otimes I -\ii\op{1}_i\otimes \select(U_i) ),
\end{equation}
where $\select(U_i)=\sum_l^{L} \op{\mathbf{j}}_i \otimes \frac{H_l}{\norm{H_l}}_s$ (which is equivalent to \cref{d-BE-sU}).
Now we define 
$ W_{\unary}:= B^{\dagger}_{\unary}\otimes  I_s (\select(U_{\unary})) B_{\unary}\otimes I_s$
and then %
\begin{equation}
    W_{\unary}\ket{\mathbf{0}}\ket{\psi}_s=\frac{1}{\alpha_{\TS}} \ket{\mathbf{0}} \otimes \qty(\sum_{k=0}^{K} \frac{(-\ii)^k(t/r)^k}{k!} {H}^k \ket{\psi})+\ket{\phi},
\end{equation}
where $H=\sum_l^{L} H_l$. 
For oblivious amplitude amplification \cite{berryHamiltonianSimulationNearly2015},  one may set $\alpha_{\TS}=e^{t/r\alpha}=2$. 
That is $t/r=\ln 2 /\alpha$.
Specifically, one can approximate $\ket{\mathbf{0}} \otimes e^{-\ii Ht/r} ( \bigotimes_{\gamma=1}^\Gamma\ket{\psi_j})$ by
\begin{equation}
    -W_{\unary}RW_{\unary}^\dagger RW_{\unary}\ket{\mathbf{0}}  \otimes ( \bigotimes_{\gamma=1}^\Gamma|\psi_j\rangle ) 
    =\ket{\mathbf{0}}  \otimes \sum_{k=0}^{K} \frac{(-\ii)^k(t/r)^k}{k!} {H}^k ( \bigotimes_{\gamma=1}^\Gamma|\psi_j\rangle )
\end{equation}
where $\ket{\psi (t=0)}_s=\bigotimes_{\gamma=1}^\Gamma\ket{\psi_j}$ is the product state. 
By repeating the above process, the (distributed) quantum dynamics $U \approx (e^{-\ii Ht/r})^r=e^{-\ii Ht}$ can be implemented.

\begin{figure*}[!t]
    \centering
    \sidesubfloat[]{\includegraphics[width=.45\linewidth]{./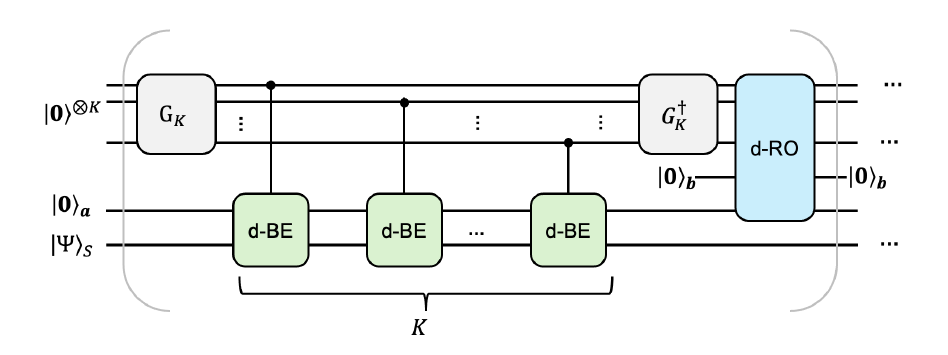}}
    \sidesubfloat[]{\includegraphics[width=.45\linewidth]{./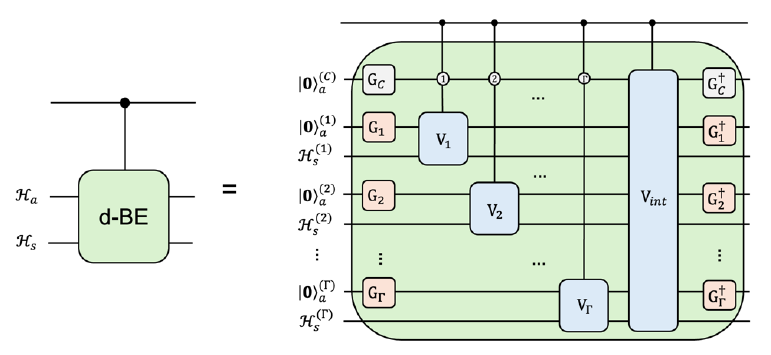}}
    \caption{
    (a) The oracle operator for $\dts$ circuit. 
    $G_{\TS}$ is used to prepare $\frac{1}{\sqrt{\alpha_{\TS}}}\sum^{K}_{k=0} \sqrt{\frac{(\alpha t/r)^k}{k!}}\ket{1^k 0^{K-k}} $, which is needed to be transmitted to the entire network.
    (b) The quantum circuit for controlled $\dbe$. 
    Here $\mathcal{H}_{anc}$ and $\mathcal{H}_{sys}$ represent all ancillary systems $\mathcal{H}_{anc}^{(C)}\otimes \mathcal{H}_{anc}^{(1)}\otimes\mathcal{H}_{anc}^{(2)}\dots \otimes\mathcal{H}_{anc}^{(\Gamma)}$ and the systems of interest $\mathcal{H}_{sys}^{(1)}\otimes\mathcal{H}_{sys}^{(2)}\dots \otimes\mathcal{H}_{sys}^{(\Gamma)}$. }
    \label{fig:d-TS_circuit}
\end{figure*}

\begin{theorem}[Quantum communication complexity of $\dts$, Restatement Theorem 2 in maintext]\label{apd:thm:d_ts}
    Given a $\Gamma n$-qubit Hamiltonian %
    $H=\sum_{\gamma=1}^\Gamma H_\gamma+\sum_{e\in\calE} H_e$
    and evolution time $t$,
    the quantum communication complexity of the distributed TS protocol (\dts) for quantum simulation with accuracy $\eps$
    is %
   \begin{equation}
       \bigO\qty( \alpha t \Gamma \log (\abs{\calE}+\Gamma)  \frac{\log(\alpha t/\epsilon)}{\log\log(\alpha t/\epsilon)} ),
    \end{equation}
    where $\Gamma$ is the number of nodes, $\abs{\calE}$ is the number of the interaction terms, 
    and $\alpha= \norm{H}_1$.
\end{theorem}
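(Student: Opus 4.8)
The plan is to combine three ingredients already established in the excerpt: (i) the $\dbe$ construction, whose quantum communication cost per query is $\bigO(\Gamma\log(\abs{\calE}+\Gamma))$; (ii) the $\dro$ construction, whose per-query cost is $\bigO(\Gamma)$; and (iii) the query complexity of the standard truncated Taylor series (TS) algorithm, which simulates $e^{-\ii H t}$ to accuracy $\eps$ using $\bigO\bigl(\alpha t \cdot \tfrac{\log(\alpha t/\eps)}{\log\log(\alpha t/\eps)}\bigr)$ calls to the block encoding of $H$ and its inverse, together with $\bigO(\alpha t)$ oblivious amplitude amplification steps (each a constant number of reflections). First I would set up the segmentation: choose $r=\bigO(\alpha t)$ so that each segment has $\alpha t/r=\ln 2$, making $\alpha_{\TS}=2$ and enabling exact oblivious amplitude amplification within a segment. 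Within one segment, the unary-encoded walk operator $W_{\unary}$ uses one layer of controlled-$\dbe$ (cost $\bigO(\Gamma\log(\abs{\calE}+\Gamma))$ communication) plus the distribution of the $G_{\TS}$ register across the network; since $G_{\TS}$ acts on $K=\bigO\bigl(\tfrac{\log(\alpha t/\eps)}{\log\log(\alpha t/\eps)}\bigr)$ qubits, distributing it costs $\bigO(\Gamma K)$, which is dominated by (or comparable to) the $\dbe$ cost and does not change the asymptotics.

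Next I would count the total number of calls. Each segment requires $\bigO(1)$ applications of $W_{\unary}$, $W_{\unary}^\dagger$, and the reflection $R$ (three $W$'s and two $R$'s for one round of oblivious amplitude amplification). Each $W_{\unary}$ costs $\bigO(\Gamma\log(\abs{\calE}+\Gamma))$ in quantum communication through the controlled-$\dbe$ subroutine, and each $\dro$ costs $\bigO(\Gamma)$, which is subsumed. Hence one segment costs $\bigO(\Gamma\log(\abs{\calE}+\Gamma))$. Multiplying by the $r=\bigO(\alpha t)$ segments gives the claimed total
\begin{equation}
\bigO\!\left(\alpha t\,\Gamma\log(\abs{\calE}+\Gamma)\,\frac{\log(\alpha t/\eps)}{\log\log(\alpha t/\eps)}\right),
\end{equation}
where the last factor enters because I must be slightly careful about the per-query dependence on $K$: the controlled-$\dbe$ layer inside $W_{\unary}$ is applied $K$ times (once per unary qubit $i=1,\dots,K$), each time with communication $\bigO(\Gamma\log(\abs{\calE}+\Gamma))$, so $W_{\unary}$ actually costs $\bigO(K\,\Gamma\log(\abs{\calE}+\Gamma))$, and the stated bound follows. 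An error-budget check is needed: split $\eps$ into the Taylor truncation error (controlled by $K=\bigO(\tfrac{\log(\alpha t/\eps)}{\log\log(\alpha t/\eps)})$ per segment, giving $\eps/r$ each) and any approximation from amplitude amplification (exact when $\alpha_{\TS}=2$), so the accumulated error over $r$ segments stays $\bigO(\eps)$ by the triangle inequality / subadditivity of spectral-norm errors.

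The main obstacle I expect is bookkeeping the per-query communication inside the unary-encoded walk operator correctly: one must verify that the controlled-$\dbe$ used in $\select(U_{\unary})$ — which is the controlled version of the $\dbe$ from \cref{apd:dbe}, shown in \cref{fig:d-TS_circuit}(b) — still costs only $\bigO(\Gamma\log(\abs{\calE}+\Gamma))$ communication per invocation (the control qubit is local to the control node and adds nothing), and that the $K$ nested $\select(U_i)$ blocks do not incur extra communication beyond the linear-in-$K$ factor already accounted for (in particular, the distributed ancillary registers $\ket{l}_i$ for each $i$ must all be routed, but they reuse the same $\Gamma\log(\abs{\calE}+\Gamma)$-qubit channel structure). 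A secondary subtlety is ensuring the reflection $R$ acts on the correct combined ancilla space (the unary register plus all $\dbe$ ancillae plus the $C_b,\gamma_b$ registers) and that $\dro$'s $\bigO(\Gamma)$ cost genuinely covers reflecting about $\ket{\mathbf 0}$ on this enlarged space — this follows from the remark in \cref{apd:d-RO} that $\dro$ generalizes to phasing any fixed component, but it should be stated explicitly. Once these per-query costs are pinned down, the final bound is just the product of (number of segments) $\times$ (queries per segment) $\times$ (communication per query).
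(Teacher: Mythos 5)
Your proposal is correct and follows essentially the same route as the paper: it multiplies the TS query count (r=\bigO(\alpha t) segments with $\alpha t/r=\ln 2$, a constant number of $W_{\unary}$ and reflection calls per segment via oblivious amplitude amplification) by the per-invocation communication costs, with each $W_{\unary}$ containing $K$ controlled-$\dbe$ calls at $\bigO(\Gamma\log(\abs{\calE}+\Gamma))$ each and the $\dro$ cost $\bigO(\Gamma)$ subsumed, yielding $\bigO(\alpha t\,\Gamma\log(\abs{\calE}+\Gamma)K)$. Your added error-budget check and the bookkeeping of the $G_{\TS}$-register distribution are consistent with (and slightly more explicit than) the paper's argument, which likewise notes that the control qubit routing only adds $\bigO(\Gamma)$ per controlled-$\dbe$ and does not change the asymptotics.
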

\begin{proof}[Proof of \cref{apd:thm:d_ts}]
    Given an $\Gamma n$-qubit Hamiltonian 
    $H=\sum_{\gamma=1}^\Gamma H_\gamma+\sum_{e\in\calE} H_e $
    and evolution time $t$, 
    the quantum communication complexity of $\dts$ for quantum simulation 
    is the query complexity multiply the quantum communication complexity of each oracle (query). 
    As we shown in \cref{fig:d-TS_circuit},
    a single query consists $K$ controlled $\dbe$ of Hamiltonian $H$ and a single $\dro$ for $\pi$ phase. Based on our analysis before, the quantum communication complexity for $\dbe$ and $\dro$ are $\bigO(\Gamma \log(\abs{\calE}+\Gamma))$ and $\bigO(\Gamma)$ respectively. 
    Compared to $\dbe$, here we need to consider the quantum communication complexity of controlled $\dbe$. 
    The quantum circuit of controlled $\dbe$ is presented in \cref{fig:d-TS_circuit} (b). 
    One can see that the introduced control qubit needs to be sent to each node, which in turn performs the corresponding control operation, and hence it introduces an additional quantum communication cost of $\Gamma$ qubits. Thus the quantum communication complexity for controlled $\dbe$ is also $\bigO(\Gamma \log(\abs{\calE}+\Gamma))$. 
    Overall, the quantum communication complexity of $\dts$ is
    $\bigO\qty( \alpha t (1 +\log (\abs{\calE}+\Gamma)) \Gamma K )=\bigO\qty( \alpha t \log (\abs{\calE}+\Gamma) \Gamma K )$
    where $\Gamma$ is the number of nodes, $\abs{\calE}$ is the number of the interaction terms, %
    $\alpha:=\norm{H}_1$ and $K=\bigO\qty(\frac{\log(\alpha t/\epsilon)}{\log\log(\alpha t/\epsilon)})$.
\end{proof}

\subsection{Distributed quantum signal processing (d-QSP)}\label{qsp}

Quantum signal processing (QSP), sometimes it may be called qubitization is a powerful technique that allows one to perform a function of high dimensional matrices \cite{lowOptimalHamiltonianSimulation2017,lowHamiltonianSimulationQubitization2019}. 
The key idea of qubitization is to divide the high-dimensional block encoding unitary into two-dimensional subspace, which can be considered an effective qubit. 

QSP can implement a wide range of polynomial functions, what we are interested in here is how it can be applied to quantum (Hamltonian) simulations. 
Specifically, QSP utilizes the Jacobi-Anger expansion to implement
$e^{-\ii Ht}=\sum_{k=-\infty}^{\infty}\ii^kJ_k(t)e^{\ii k \cos^{-1}H}$, by querying multiple $H$ and phase gates, 
where $J_k(t)$ represents the Bessel function of the first kind.
By QSP, 
the query complexity needed to approximate $e^{-\ii Ht}$ within error $\epsilon$ is
$\bigO\qty(\alpha t + \frac{\log(1/\epsilon)}{\log\log(1/\epsilon)})$. Compared to the Taylor truncated method, QSP may have a lower query complexity. 
One may find more details in Ref \cite{lowHamiltonianSimulationQubitization2019}. 
Here we generalize QSP to a quantum network (denoted $\dqsp$).

The QSP algorithm consists of the block-encoding of Hamiltonian and the reflection and phase operations acting on the auxiliary system \cite{lowHamiltonianSimulationQubitization2019}. Similarly, as shown in \cref{fig:d-QSP}, $\dqsp$ shall consist of $\dbe$ and corresponding reflection acting on the auxiliary system acting on the auxiliary system. 
This construction of $\dqsp$ is simple and it follows the same query setup as the non-distributed one. 
The difference between them mainly stems from the partition of Hamiltonian in the network, i.e., the interaction graphs which in turn make the design of $\dbe$ of $H$ and $\dro$ different from the original one. 
By the following Lemma of QSP proposed by Low and Chuang \cite{lowHamiltonianSimulationQubitization2019},
we can obtain the quantum communication complexity of $\dqsp$. 
\begin{lemma}[Hamiltonian Simulation by Qubitization \cite{lowHamiltonianSimulationQubitization2019}] 
    Given access to the oracles of $\langle \mathbf{0}|_a (G^\dagger \otimes I)\; \select (U) \; (G \otimes I)|\vb{0}\rangle_a = \frac{H}{\alpha}$ specifying a Hamiltonian $H=\sum_i^d \alpha_i U_i$ where $\alpha=\sum_i \abs{\alpha_i}$, time evolution by $H$ can be simulated for time $t$ and error $\epsilon$ with $\bigO(\alpha t+ \log(1/\epsilon))$ queries.
\end{lemma}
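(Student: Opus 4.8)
The plan is to prove this the way Low and Chuang originally did, via \emph{qubitization} followed by \emph{quantum signal processing} (QSP). First I would build the qubitization iterate. Writing the given block encoding as a unitary $U$ with $(\bra{\mathbf{0}}_a\otimes I)\,U\,(\ket{\mathbf{0}}_a\otimes I)=H/\alpha$ (and, after the standard one-ancilla modification of $U$ that enforces the Hermiticity/involution condition needed for qubitization), define the iterate $W:=(2\op{\mathbf{0}}_a\otimes I - I)\,U$. Expanding $H=\sum_\lambda\lambda\,\op{\lambda}$ in its eigenbasis, one checks that $W$ preserves each two-dimensional subspace $\mathcal{H}_\lambda:=\mathrm{span}\{\ket{\mathbf{0}}_a\ket{\lambda},\ \ket{\perp_\lambda}\}$, where $\ket{\perp_\lambda}$ is the normalized part of $U\ket{\mathbf{0}}_a\ket{\lambda}$ orthogonal to $\ket{\mathbf{0}}_a\ket{\lambda}$, and that on $\mathcal{H}_\lambda$ it acts as a rotation with eigenvalues $e^{\pm i\theta_\lambda}$, $\cos\theta_\lambda=\lambda/\alpha$. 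Hence one query to $U$ buys one query to $W$, and $W$ encodes $\arccos(H/\alpha)$ in a form amenable to phased iteration.

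Second, I would invoke the QSP representation theorem: interleaving $d$ applications of $W$ with single-ancilla-qubit $z$-rotations $e^{i\phi_j Z}$, for a suitable real phase sequence $(\phi_0,\dots,\phi_d)$, implements on the $\ket{\mathbf{0}}_a$ block the operator $P(H/\alpha)$ for any degree-$d$ polynomial $P$ of definite parity that is real on $[-1,1]$ with $\lVert P\rVert_{[-1,1]}\le 1$ (with one extra ancilla one may instead realize the real part of a complex target, which is what we need).

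Third, I would choose the target from the Jacobi--Anger expansion $e^{-i\alpha t\,x}=J_0(\alpha t)+2\sum_{k\ge 1}(-i)^k J_k(\alpha t)\,T_k(x)$, split into its even (cosine) and odd (sine) parts, and truncate at degree $K$. The truncation error is at most $2\sum_{k>K}\lvert J_k(\alpha t)\rvert$, which by $\lvert J_k(z)\rvert\le (\lvert z\rvert/2)^k/k!$ decays super-exponentially once $K\gtrsim e\lvert\alpha t\rvert/2$, so $K=\Theta(\alpha t+\log(1/\epsilon))$ makes this (and hence, after $x\mapsto H/\alpha$, the operator error) at most $\epsilon$. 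Realizing the truncated cosine- and sine-polynomials by two QSP sequences of degree $\bigO(K)$ and combining them as $e^{-iHt}=\cos(Ht)-i\sin(Ht)$ through a single LCU-style ancilla (using $\cos^2+\sin^2=1$ on the relevant subspaces so that no amplitude is lost beyond the $\epsilon$ approximation, or else one round of amplitude amplification) yields a circuit making $\bigO(K)=\bigO(\alpha t+\log(1/\epsilon))$ queries to $U$ and its controlled/inverse versions, which is the claimed bound.

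The main obstacle is the QSP step: proving that the desired truncated, parity-definite, norm-bounded polynomial is genuinely achievable by some real phase sequence of the right length, and that the even/odd recombination incurs no constant-factor loss of success amplitude that would force $\omega(1)$ repetitions. This is exactly where the Low--Chuang construction does the real work \cite{lowHamiltonianSimulationQubitization2019}; by comparison the qubitization identity and the Jacobi--Anger truncation estimate are routine.
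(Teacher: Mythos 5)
Your proposal is correct and follows exactly the route the paper relies on: the lemma is stated here as an imported result from Low and Chuang \cite{lowHamiltonianSimulationQubitization2019}, not proved in the paper, and your sketch (qubitization iterate $W$ with eigenphases $e^{\pm i\theta_\lambda}$, $\cos\theta_\lambda=\lambda/\alpha$; QSP phase sequences; Jacobi--Anger truncation at degree $K=\Theta(\alpha t+\log(1/\epsilon))$; even/odd recombination with constant overhead) is a faithful reconstruction of that original argument. You also correctly identify that the only nontrivial ingredient is the existence of the achieving phase sequence, which is precisely the content of the cited work, so nothing further is needed for the bound $\bigO(\alpha t+\log(1/\epsilon))$ as stated.
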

\begin{theorem}[Quantum communication complexity of \dqsp, Theorem 3 in maintext]\label{apd:thm:d_qsp}
    Given a $\Gamma n$-qubit Hamiltonian $H=\sum_{\gamma=1}^\Gamma H_\gamma+\sum_{e\in\calE} H_e$ and evolution time $t$,
    the quantum communication complexity of the distributed QSP protocol (\dqsp) for quantum simulation with accuracy $\eps$ is
    \begin{equation}
        \bigO (\Gamma \log( \abs{\calE} + \Gamma)(\alpha t+ \log(1/\epsilon)),
    \end{equation}
    where  $\Gamma$ is the number of nodes,  $\abs{\calE}$ is the number of the interaction terms, %
    and $\alpha=\norm{H}_1$. 
\end{theorem}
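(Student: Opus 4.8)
The plan is to mirror the proof of \cref{apd:thm:d_ts}: write the total quantum communication as the product of the query complexity of the distributed algorithm and the quantum communication incurred per query. First I would invoke the Lemma of Low and Chuang quoted above, which states that $e^{-\ii Ht}$ can be simulated to error $\epsilon$ with $\bigO(\alpha t + \log(1/\epsilon))$ queries to the block encoding of $H/\alpha$ and to the associated reflection/phase operations on the ancillary register, where $\alpha=\norm{H}_1$. The crucial point is that this query count is inherited verbatim by $\dqsp$: the qubitization iterate acts on the two-dimensional ``effective qubit'' subspace spanned by $\ket{\mathbf{0}}\otimes\ket{\psi}$ and its orthogonal complement, and this subspace structure does not depend on whether $H$ is realized monolithically or as the clustered Hamiltonian $H=\sum_{\gamma=1}^\Gamma H_\gamma+\sum_{e\in\calE}H_e$ distributed across the network.

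Next I would account for the per-query communication. Each query of $\dqsp$ consists of one call to $\dbe$ (or its controlled variant) together with one distributed phase/reflection operation of the form $I-(1-e^{-\ii\phi})\op{\mathbf{0}}$ on the combined ancillary register. By \cref{apd:dbe}, $\dbe$ costs $\bigO(\Gamma\log(\abs{\calE}+\Gamma))$ qubits of communication, and prepending a control qubit that must be broadcast to all $\Gamma$ nodes adds only $\bigO(\Gamma)$, so controlled $\dbe$ remains $\bigO(\Gamma\log(\abs{\calE}+\Gamma))$. By \cref{apd:d-RO}, each $\dro$ --- and, by the generalization noted there, each phase operation $I-(1-e^{-\ii\phi})\op{j}$ --- costs $\bigO(\Gamma)$, with the $\bigO(1)$ oblivious-amplitude-amplification factor ($1\le 1+2\sin(\phi/2)\le 3$) absorbed into the constant. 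Hence a single query costs $\bigO(\Gamma\log(\abs{\calE}+\Gamma)) + \bigO(\Gamma) = \bigO(\Gamma\log(\abs{\calE}+\Gamma))$, and multiplying by the query complexity yields $\bigO\big((\alpha t+\log(1/\epsilon))\,\Gamma\log(\abs{\calE}+\Gamma)\big)$, the claimed bound.

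I expect the main obstacle to be not the counting but confirming that the distributed block encoding is compatible with the qubitization machinery: one must verify that $G=G_C\otimes\bigotimes_{\gamma=1}^\Gamma G_\gamma$ together with $\select(U)$ of \cref{d-BE-sU} produces a block encoding satisfying the hypotheses of the Low--Chuang Lemma --- in particular that the reflection about $\ket{\mathbf{0}}$ it requires is exactly the $\dro$ realized via nested LCU in \cref{apd:d-RO} --- so that the effective-qubit subspace, and therefore the query count, is genuinely preserved in the network setting. The remaining bookkeeping, namely splitting the target error $\epsilon$ between the block-encoding truncation and the amplitude-amplification steps of $\dro$, is routine and parallels the non-distributed analysis.
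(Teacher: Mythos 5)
Your proposal is correct and follows essentially the same route as the paper's proof: invoke the Low--Chuang qubitization lemma for the $\bigO(\alpha t+\log(1/\epsilon))$ query count, bound the per-query cost by the $\bigO(\Gamma\log(\abs{\calE}+\Gamma))$ communication of $\dbe$ plus the $\bigO(\Gamma)$ cost of $\dro$, and multiply. Your extra remarks on verifying compatibility of the distributed block encoding with the qubitization machinery are a reasonable refinement but not part of the paper's (brief) argument.
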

\begin{proof}[Proof of \cref{apd:thm:d_qsp}]
As we analyzed before, for $H=\sum_{\gamma=1}^\Gamma H_\gamma+\sum_{e\in\calE} H_e $,
the quantum communication complexity of the oracles of $\dbe$ and $\dro$ are $\bigO(\Gamma \log(\abs{\calE}+\Gamma))$ (see \cref{qcc_dBE}) and $\bigO(\Gamma)$ (see \cref{qcc_dRO}) respectively. 
Therefore, for a single query of $\dqsp$, the quantum communication complexity is still $\bigO(\Gamma \log(\abs{\calE}+\Gamma))$.
The quantum communication complexity of $\dqsp$ should be the query complexity multiplied by the quantum communication complexity of each $\dbe$ and $\dro$, i.e., $ \bigO (\Gamma \log (\abs{\calE}+\Gamma)(\alpha t+ \log(1/\epsilon)) $.
\end{proof}
Similar to $\dts$, the quantum communication complexity of $\dqsp$ is linear with $t$ and proportional to the number of nodes $\Gamma$ and the number of interaction terms across nodes $\abs{\calE}$.

\begin{figure*}[!t]
    \centering
    \includegraphics[width=.65\linewidth]{./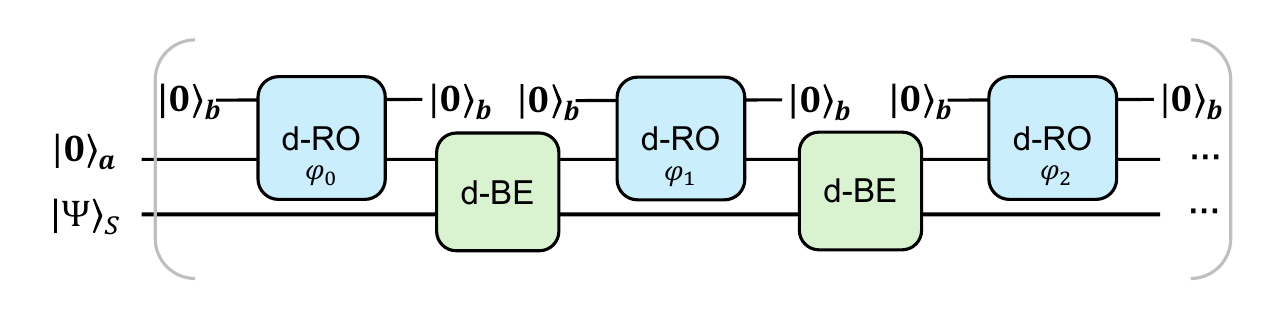}
    \caption{The quantum circuit for $\dqsp$. Here $\dbe$ denotes the oracle of block encoding of the target matrix, i.e. the Hamiltonian $H$. The $\dro$ may act as a specific phase on the component $\ket{\mathbf{0}}_a$ of (all) ancillary system $a$.}
    \label{fig:d-QSP}
\end{figure*}

\section{Applications in distributed quantum simulation}\label{apd:applications}
In this section, we analyze the quantum communication complexity of different distributed quantum simulation protocols for some common physical systems.
Beyond quantum simulation, we show that the distributed (quantum network) protocols can be applied to other important quantum algorithms, such as quantum phase estimation and Grover search, with low quantum communication.

\subsection{Lattice nearest-neighbor Hamiltonians}\label{apd:nn}
We begin with the one-dimensional (1D) lattice spin Hamiltonian with the nearest-neighbor (NN) interaction
$H_{\mathrm{NN}}=\sum_{j=1}^{n-1}H_{j,j+1}$
where $H_{j,j+1}$ is a Hermitian operator that acts nontrivially only on qubits $j$ and $j + 1$.

\begin{lemma}[Product formula for NN Hamiltonians \cite{childsNearlyOptimalLattice2019, childsTheoryTrotterError2021}]
    Let $H$ be a Hamiltonian with $n$ qubits.
    Let $\pfU_p$ be a $p$th-order product formula.  
    The number of Trotter steps for simulation accuracy $\eps$ is $r=\bigO(n^{1/p}t^{1+1/p}/\eps^{1/p})$.
    \label{1dNN}
\end{lemma}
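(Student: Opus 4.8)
The plan is to derive the nearest-neighbor bound from the general commutator-scaling Trotter error estimate of \cref{thm:pf} by showing that, for a one-dimensional NN Hamiltonian, the nested-commutator quantity $\talpha_{\cmm,p}$ is only $\bigO(n)$ rather than $\bigO(L^{p+1})=\bigO(n^{p+1})$ as the naive term count would suggest. So first I would write $H_{\mathrm{NN}}=\sum_{j=1}^{n-1}H_{j,j+1}$ and assume without loss of generality $\norm{H_{j,j+1}}\le 1$ (absorbing the bond strength into $t$), then invoke \cref{thm:pf}: a single Trotter step of duration $t/r$ incurs error $\bigO\!\big(\talpha_{\cmm,p}(t/r)^{p+1}\big)$, where $\talpha_{\cmm,p}=\sum_{j_1,\dots,j_{p+1}}\norm{[H_{j_1,j_1+1},[\dots,[H_{j_p,j_p+1},H_{j_{p+1},j_{p+1}+1}]]]}$.

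The crux is to bound $\talpha_{\cmm,p}=\bigO(n)$ using locality. I would build the nested commutator from the inside out: the innermost operator $H_{j_{p+1},j_{p+1}+1}$ is supported on a window of $2$ qubits, and after commuting with $H_{j_p,j_p+1}$ the result is nonzero only if $\{j_p,j_p+1\}$ meets $\{j_{p+1},j_{p+1}+1\}$, in which case its support is a contiguous window of size $\le 3$. Inductively, after $k$ commutators the accumulated operator is supported on a contiguous window of size $\le k+2$, so for the next commutator to survive the corresponding bond index may take at most $k+3$ values; hence, once the innermost index $j_{p+1}$ is fixed ($n-1$ choices), the remaining $p$ indices jointly have at most $\prod_{k=0}^{p-1}(k+3)$ admissible configurations, a constant depending only on $p$. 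Thus the number of nonvanishing ordered tuples is $\bigO(n)$ for fixed $p$, and each surviving nested commutator has norm $\le 2^p\prod_k\norm{H_{j_k,j_k+1}}\le 2^p=\bigO(1)$, giving $\talpha_{\cmm,p}=\bigO(n)$.

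Finally I would assemble the pieces: summing the per-step error over $r$ steps gives total error $\bigO\!\big(r\cdot n(t/r)^{p+1}\big)=\bigO(nt^{p+1}/r^{p})$, and imposing $nt^{p+1}/r^{p}\lesssim\eps$ and solving for $r$ yields $r=\bigO\!\big((nt^{p+1}/\eps)^{1/p}\big)=\bigO(n^{1/p}t^{1+1/p}/\eps^{1/p})$, as claimed. The main obstacle is the second step — making the ``nested commutators of nearest-neighbor bonds vanish unless the bonds form a connected cluster on the line'' statement fully precise and extracting the $p$-dependent-but-$n$-independent bound on the number of surviving tuples in a form that plugs cleanly into \cref{thm:pf}; everything else is routine bookkeeping. (The same argument generalizes to $D$-dimensional lattices at the cost of a dimension-dependent constant, but that is not needed here.)
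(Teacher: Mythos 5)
Your derivation is correct and follows essentially the same route as the source the paper cites for this lemma (the commutator-scaling bound of \cref{thm:pf} combined with a locality count showing that only $\bigO(n)$ nested commutators of nearest-neighbor bonds survive, each of norm $\bigO(1)$ under the implicit normalization $\norm{H_{j,j+1}}\le 1$, so $\talpha_{\cmm,p}=\bigO(n)$ and $r=\bigO(n^{1/p}t^{1+1/p}/\eps^{1/p})$); the paper itself imports the statement from \cite{childsNearlyOptimalLattice2019, childsTheoryTrotterError2021} without reproving it. Your window-growth argument (support size $\le k+2$ after $k$ commutators, hence at most $k+3$ admissible bonds at each level) is the standard way to make the ``connected cluster'' claim precise, so no gap remains.
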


The most straightforward method for quantum simulation of an NN Hamiltonian in a quantum network is to implement the PF distributively as in \cref{apd:d_pf}.
Firstly, we regroup the ($\Gamma n$-qubit) NN Hamiltonian according to its interaction (the terms across nodes) and network topology of $\Gamma$ nodes, i.e.,
$ H = H_0+H_{\calE}$
where $H_0=\sum_{\gamma=1}^\Gamma H_\gamma$ and $H_{\calE}=\sum_{e\in\calE} H_e$. 
The terms $\supp(H_{\gamma})\subseteq N_\gamma$ can be implemented (e.g. by the high-order PFs) on the $\gamma$th node locally to arbitrary precision without communication.
The interaction terms $H_e$ represent the terms whose support is across two neighboring nodes. 
Note that given a $\Gamma$-partite quantum network to simulate a 1D NN Hamiltonian with $\Gamma n$ qubits, the number of interaction terms is $\abs{\calE}=\bigO(\Gamma)$. 
Therefore, we have the following proposition.
\begin{proposition}[$\dpf$ for NN Hamiltonians]\label{thm:nn_pf}
     Given a $\Gamma$-partite quantum network,
    let $H$ be a one-dimensional lattice Hamiltonian with $\Gamma n$ qubits,
    and $\pfU_p$ be a $p$th-order product formula, 
    the number of Trotter steps (round of quantum communications) for simulation accuracy $\eps$ is $r=\bigO((\Gamma )^{1/p}t^{1+1/p}/\eps^{1/p})$.
    Then, the (total) quantum communication complexity of the d-PF protocol for NN Hamiltonians is $\bigO(\Gamma(\Gamma )^{1/p}t^{1+1/p}/\eps^{1/p})$.
\end{proposition}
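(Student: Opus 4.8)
The plan is to combine two ingredients that are already available in the excerpt: (i) the generic communication cost of one round of $\dpf$ (one Trotter step), and (ii) the Trotter step count $r$ obtained by specializing the product-formula error bound to the clustered one-dimensional lattice Hamiltonian. First I would fix the network structure: partition the $\Gamma n$ qubits of the 1D chain into $\Gamma$ contiguous blocks of $n$ qubits each, so that the only interaction terms $H_e$ are the $\bigO(\Gamma)$ nearest-neighbor couplings straddling consecutive blocks; hence $\abs{\calE}=\bigO(\Gamma)$. The local parts $H_\gamma$ are handled inside each node with no communication, as in \cref{apd:d_pf}.

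Next I would establish the Trotter step count. Write $H=H_0+\sum_{e\in\calE}H_e=\sum_{e=0}^{\abs{\calE}}H_e$ as in \cref{eq:SI:cluster_hamiltonian}, and apply \cref{thm:pf} to this clustered decomposition, so that $r=\bigO(\tilde\alpha_{\cmm,p}^{1/p}t^{1+1/p}/\eps^{1/p})$ with the nested-commutator norm taken over the $\abs{\calE}+1$ clustered terms. For the 1D NN lattice, the crucial observation (this is exactly \cref{1dNN}, i.e.\ the commutator-scaling analysis of \cite{childsNearlyOptimalLattice2019,childsTheoryTrotterError2021} adapted to the clustered grouping) is that the nested commutators of the block-decomposed Hamiltonian are supported on a bounded number of interfaces, giving $\tilde\alpha_{\cmm,p}=\bigO(\Gamma)$ rather than $\bigO(\Gamma n)$; this is because each interaction term $H_e$ overlaps only with its $\bigO(1)$ neighboring terms, so the sum over $(e_1,\dots,e_{p+1})$ contributing nonzero nested commutators has $\bigO(\Gamma)$ terms each of $\bigO(1)$ norm. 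Substituting, $r=\bigO(\Gamma^{1/p}t^{1+1/p}/\eps^{1/p})$, which is the Trotter-step claim in the proposition.

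Then I would account the communication per step. In each Trotter step the local evolutions $e^{-\ii H_\gamma t/r}$ cost nothing to communicate; each interaction term $e^{-\ii H_e t/r}$ acts across exactly two adjacent nodes, so by the argument in the proof of \cref{apd:thm:d_pf} it requires $\bigO(1)$ qubits of teleportation. Since there are $\abs{\calE}=\bigO(\Gamma)$ such terms, one Trotter step costs $\bigO(\Gamma)$ qubits of quantum communication. Multiplying by $r$ gives the total quantum communication complexity $\bigO(\Gamma\cdot\Gamma^{1/p}t^{1+1/p}/\eps^{1/p})$, as stated. (Alternatively, one can simply invoke \cref{apd:thm:d_pf} with $\abs{\calE}=\bigO(\Gamma)$ and $\tilde\alpha_{\cmm,p}=\bigO(\Gamma)$ to read off the bound in one line.)

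The main obstacle is the $\tilde\alpha_{\cmm,p}=\bigO(\Gamma)$ estimate for the clustered grouping: one must verify that grouping all intra-block terms into a single $H_0$ does not inflate the nested-commutator norm beyond $\bigO(\Gamma)$. The point is that $[H_0,H_e]$ is supported only near the interface of $e$, and more generally any nonzero nested commutator among $H_0,H_{e_1},\dots$ is localized to $\bigO(1)$ interfaces, so summing over the $\bigO(\Gamma)$ interfaces yields $\bigO(\Gamma)$; this is precisely the commutator-scaling phenomenon behind \cref{1dNN}, and citing it (or reproducing its interface-counting argument for the clustered Hamiltonian) closes the gap. Everything else is bookkeeping.
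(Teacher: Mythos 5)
Your proposal is correct and follows essentially the same route as the paper: cluster the chain into $\Gamma$ blocks so that $\abs{\calE}=\bigO(\Gamma)$, argue that the nested-commutator scaling gives $r=\bigO(\Gamma^{1/p}t^{1+1/p}/\eps^{1/p})$, and multiply by the $\bigO(\Gamma)$ communication cost per Trotter step from the $\dpf$ analysis. The only (cosmetic) difference is that the paper obtains the step count by appealing to the even--odd grouping of Childs--Su and ``replacing $n$ with $\Gamma$'' in \cref{1dNN}, whereas you bound $\tilde\alpha_{\cmm,p}=\bigO(\Gamma)$ directly by interface counting --- which in fact makes explicit why packaging all intra-block terms into $H_0$ does not inflate the commutator norm.
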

The proof of this proposition is intuitive. 
Childs and Su \cite{childsNearlyOptimalLattice2019} made use of even-odd ordering to rewrite the Hamiltonian as $H=H_{\even}+H_{\odd}$ 
in which each term inside $H_{\even}$ or $H_{\odd}$ is pairwise commutative.
For 1D NN Hamiltonian implemented by a star network, the clustered Hamiltonian can also be written as $H_0=H_{\even}$ and $H_{\calE}=H_{\odd}$. 
Since $[H_\gamma, H_\gamma^\prime]=0$ for $\gamma \ne \gamma^\prime$ and one may package the interaction terms between the $\gamma$-th node and the $\gamma+1$-th node as $H_{\gamma,\gamma+1}$ so that $H_{\calE}=\sum_{\gamma=1}^{\Gamma-1} H_{\gamma,\gamma+1}$. 
Thus it suffices to take $r = O(\Gamma^{1/p} t^{1+1/p}/\epsilon^{1/p})$ to bound the Trotter error (we have replaced $n$ in \cref{1dNN} with $\Gamma$). 
Similarly, for general case where $\abs{\calE}=\bigO(\Gamma)$,
the complexity of post-Trotter methods is as follows.
\begin{proposition}[$\dts$ for NN Hamiltonians]\label{thm:nn_lcu}
    Given a $\Gamma$-partite quantum network,
    let $H$ be the one-dimensional lattices Hamiltonian with $\Gamma n$ qubits, the quantum communication complexity of the d-TS protocol for NN Hamiltonians is
    $\bigO\qty( \Gamma \alpha t  \log(\Gamma) \frac{\log \frac{\alpha t}{\epsilon}}{\log\log\frac{\alpha t}{\epsilon}})$
    where $\epsilon$ is the simulation accuracy and $\alpha:=\norm{H}_1$.
\end{proposition}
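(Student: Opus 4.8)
The plan is to derive \cref{thm:nn_lcu} as a direct specialization of the general d-TS communication bound \cref{apd:thm:d_ts}, whose statement reads $\bigO(\alpha t\,\Gamma\log(\abs{\calE}+\Gamma)\,K)$ with $K=\log(\alpha t/\epsilon)/\log\log(\alpha t/\epsilon)$. The only quantity in that bound that depends on the lattice structure is $\abs{\calE}$, the number of induced interaction terms, so the whole task reduces to bounding $\abs{\calE}$ for a one-dimensional nearest-neighbor Hamiltonian under the natural $\Gamma$-partite partition, and then substituting; the induced $1$-norm $\alpha=\norm{H}_1$ is an intrinsic property of $H$ and plays no role in the specialization.

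First I would fix the clustering of $H_{\mathrm{NN}}=\sum_{j=1}^{\Gamma n-1}H_{j,j+1}$ induced by the partition $\calV=\{v_\gamma\}$ of the $\Gamma n$ qubits into $\Gamma$ consecutive blocks of $n$ qubits. A bond term $H_{j,j+1}$ is absorbed into the local term $H_\gamma$ whenever $\{j,j+1\}\subseteq v_\gamma$; this fails only at the $\Gamma-1$ boundary indices $j\in\{n,2n,\dots,(\Gamma-1)n\}$, whose supports straddle two adjacent blocks. Hence $H_0=\sum_\gamma H_\gamma$ collects all but $\Gamma-1$ of the bond terms, and the induced interaction set has size $\abs{\calE}=\Gamma-1=\bigO(\Gamma)$; as in the proof of \cref{thm:nn_pf} one may even package the boundary bonds into $\Gamma-1$ terms $H_{\gamma,\gamma+1}$, but for the d-TS count only the bound $\abs{\calE}=\bigO(\Gamma)$ matters.

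Substituting $\abs{\calE}=\bigO(\Gamma)$ into \cref{apd:thm:d_ts} gives $\log(\abs{\calE}+\Gamma)=\log\bigO(\Gamma)=\bigO(\log\Gamma)$, so the quantum communication complexity of $\dts$ for $H_{\mathrm{NN}}$ becomes $\bigO(\Gamma\alpha t\log\Gamma\cdot K)$ with $K=\log(\alpha t/\epsilon)/\log\log(\alpha t/\epsilon)$, which is exactly the claimed expression. I do not anticipate a genuine obstacle here: the proposition is essentially a corollary of \cref{apd:thm:d_ts}. The one point meriting a line of care is that the $\dbe$ subroutine inside $\dts$ must be built from the clustered decomposition $H=H_0+\sum_{e\in\calE}H_e$ rather than the raw two-local decomposition $H=\sum_l H_l$ (which has $L=\bigO(\Gamma n)$ terms) --- this is precisely what makes the logarithmic overhead scale as $\log(\abs{\calE}+\Gamma)=\bigO(\log\Gamma)$ instead of $\log(\Gamma n)$ --- together with the already-established fact that the local terms $H_\gamma$ are block-encoded with no communication.
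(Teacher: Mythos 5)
Your proposal is correct and matches the paper's own (very terse) argument: the paper likewise notes that for a $1$D NN Hamiltonian under the natural $\Gamma$-partite partition only the $\bigO(\Gamma)$ boundary bonds become induced interaction terms, so $\abs{\calE}=\bigO(\Gamma)$, and then specializes \cref{apd:thm:d_ts} to get $\log(\abs{\calE}+\Gamma)=\bigO(\log\Gamma)$ and the stated bound. Your added remark about building $\dbe$ from the clustered rather than the raw two-local decomposition is exactly the point the paper relies on implicitly.
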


\begin{proposition}[$\dqsp$ for NN Hamiltonians]\label{thm:nn_qsp}
    Given a $\Gamma$-partite quantum network,
    let $H$ be the one-dimensional lattices Hamiltonian with $\Gamma n$ qubits, the quantum communication complexity of the d-QSP protocol for NN Hamiltonians is
    $ \bigO \qty(\Gamma\log(\Gamma) (\alpha t+ \log(1/\epsilon)))$
    where $\epsilon$ is the simulation accuracy 
and $\alpha:=\norm{H}_1$.
\end{proposition}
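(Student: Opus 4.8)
The plan is to obtain this bound as a direct specialization of \cref{apd:thm:d_qsp} (the general $\dqsp$ communication complexity) to the one-dimensional nearest-neighbor geometry. The only network-dependent quantities entering that theorem are the number of nodes $\Gamma$ and the number of interaction terms $\abs{\calE}$, together with $\alpha=\norm{H}_1$; so the whole task reduces to controlling $\abs{\calE}$ for a 1D chain partitioned into a $\Gamma$-partite network and checking that $\alpha$ is unaffected by the regrouping.

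First I would fix the partition: assign the $\Gamma n$ sites of the chain to the $\Gamma$ nodes in contiguous blocks, $v_\gamma = [n(\gamma-1)+1,\, n\gamma]$, exactly the partition already used for $\dpf$ in \cref{apd:nn}. With this choice, every term $H_{j,j+1}$ of $H_{\mathrm{NN}}$ with both $j$ and $j+1$ in the same block is absorbed into the local Hamiltonian $H_\gamma$ and needs no communication; the only interaction terms are the $\Gamma-1$ terms straddling consecutive block boundaries, namely $H_e = H_{n\gamma,\, n\gamma+1}$ for $\gamma=1,\dots,\Gamma-1$. Hence $\abs{\calE} = \Gamma - 1 = \bigO(\Gamma)$, and in particular $\log(\abs{\calE}+\Gamma) = \log(2\Gamma-1) = \bigO(\log\Gamma)$. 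Since this clustered decomposition only reorganizes the same Pauli terms of $H$, the 1-norm $\alpha=\norm{H}_1$ is unchanged, so no correction to $\alpha$ is required.

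Next I would substitute $\abs{\calE}=\bigO(\Gamma)$ into \cref{apd:thm:d_qsp}. That theorem already packages the per-query cost structure — one call to the controlled $\dbe$ oracle at communication cost $\bigO(\Gamma\log(\abs{\calE}+\Gamma))$ and one $\dro$ at cost $\bigO(\Gamma)$, the former dominating — and multiplies it by the QSP query count $\bigO(\alpha t + \log(1/\epsilon))$ coming from the Low--Chuang qubitization lemma. Plugging in $\log(\abs{\calE}+\Gamma)=\bigO(\log\Gamma)$ yields total quantum communication $\bigO(\Gamma\log\Gamma\,[\alpha t + \log(1/\epsilon)])$, which is the claimed bound.

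There is essentially no deep obstacle here — the proposition is a corollary of \cref{apd:thm:d_qsp}. The one point worth a sentence of care is admissibility of the contiguous-block partition for $\dqsp$: each block still fits within $n$ working qubits plus $o(n)$ ancillas, and each boundary term $H_{n\gamma,n\gamma+1}$ acts on only two neighboring nodes, so the $\dbe$ and $\dro$ subroutines apply verbatim with $\abs{\calE}=\Gamma-1$. Both hold by construction, so the estimate goes through. (By contrast, using the raw decomposition $H=\sum_l H_l$ with $L=\Theta(\Gamma n)$ terms would give the weaker $\bigO(\Gamma\log(\Gamma n)[\alpha t+\log(1/\epsilon)])$, which is precisely why the clustered representation is used.)
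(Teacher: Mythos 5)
Your proposal is correct and matches the paper's own (implicit) argument: the paper derives this proposition as a direct corollary of the general $\dqsp$ theorem by noting that a contiguous-block partition of the 1D chain gives $\abs{\calE}=\bigO(\Gamma)$ interaction terms, so $\log(\abs{\calE}+\Gamma)=\bigO(\log\Gamma)$ and the bound $\bigO(\Gamma\log\Gamma\,[\alpha t+\log(1/\epsilon)])$ follows. Your additional checks on the admissibility of the partition and the invariance of $\alpha$ are sound but not substantively different from the paper's route.
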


\begin{table*}[t!]
    \centering
    \begin{tabular}{c|c|c|c}
    \hline\hline
    Models $\backslash$ methods & $p$th-order $\dpf$ & $\dts$ & $\dqsp$\\
    \hline
    General $H$ & $\bigO(\Gamma \talpha_{\cmm}^{1/p}\abs{\calE} \, \olive{t^{1+1/p}/\eps^{1/p}})$ & $\bigO\qty(\log (\abs{\calE}+\Gamma) \violet{\Gamma \alpha t \frac{\log(\alpha t/\epsilon)}{\log\log(\alpha t/\epsilon)} })$ & $\bigO (\log( \abs{\calE} + \Gamma) \teal{\Gamma (\alpha t+ \log(1/\epsilon)})$ \\
    \hline
    $k$-local  & $\bigO\qty(\min(k,\Gamma)(\Gamma n)^k\norm{\abs{H}}_1\alpha^{1/p}  \olive{t^{1+1/p}/\eps^{1/p}})$ & $\bigO\qty(k\log (\Gamma n)\violet{\Gamma \alpha t \frac{\log(\alpha t/\epsilon)}{\log\log(\alpha t/\epsilon)}})$ &  $\bigO\qty(  k \log(\Gamma n)\teal{\Gamma (\alpha t+ \log(1/\epsilon)}) $  \\
    1D NN Lattice & $\bigO({\Gamma }^{1+1/p} \, \olive{t^{1+1/p}/\eps^{1/p}})$ & $\bigO( \log (\Gamma)\violet{\Gamma \alpha t \frac{\log(\alpha t/\epsilon)}{\log\log(\alpha t/\epsilon)} })$ & $\bigO( \log (\Gamma) \teal{\Gamma (\alpha t+ \log(1/\epsilon)})$   \\
    \hline\hline
    \end{tabular}
    \caption{Quantum communication complexity of three quantum simulation protocols 
    (i.e., distributed product formula (\dpf), truncated Taylor series method (\dts), quantum signal processing (\dqsp))
    for different types of $\Gamma n$-qubit physical systems.
     $\tilde{\alpha}_{\cmm,p}$ is the norm of the nested commutator of terms in $H$ defined in \cref{thm:d_pf}. 
    $\Gamma$ is the number of nodes, $\alpha=\norm{H}_1$ and $\abs{\calE}$ is the number of the induced interaction terms that act on multiple nodes. $\vertiii{H}_1=\max_l\max_{{j}_l}\sum_{\substack{{j}_1,\ldots,{j}_{l-1},{j}_{l+1},\ldots,{j}_{k}}}\norm{H_{{j}_1,\ldots,{j}_k}}$ is the induced $1$-norm \cite{childsTheoryTrotterError2021}. 
    For the $k$-local case, we replace $\Gamma$ with $\min(k,\Gamma)$ inside $\dpf$, since the support of the interaction terms under this model is at most $k$. 
    The 1D lattice nearest-neighbor (NN) Hamiltonian is a special case of $k$-local model, i.e. $k=2$.
    We color the same scaling to make it easier for readers.
    } 
    \label{tab:methods}
\end{table*}

\subsection{k-local Hamiltonians}\label{apd:k_local}
The $k$-local Hamiltonian is a common model in the context of quantum computation and simulation.
Each term of a $k$-local Hamiltonian $H_k = \sum_{j_1,\dots,j_k} H_{j_1,\dots,j_k}$ acts non-trivially on at most $k$ qubits
where each term $H_{j_1,\dots,j_k}$ has support $\qty{j_1,\dots,j_k}$.
We first introduce a lemma of the product formula for $k$-local Hamiltonians.
\begin{lemma}[PF for $k$-local Hamiltonians \cite{childsTheoryTrotterError2021}]\label{thm:k_local}
    Let $H$ be a $k$-local Hamiltonian with $n$ qubits.
    Then, the nested commutator of a $p$th-order product formula is
    \begin{equation}
        \tilde{\alpha}_{\cmm} = \sum_{\gamma_1,\gamma_2,\dots,\gamma_{p+1}}
        \norm{\qty[H_{\gamma_{p+1}},\dots, [H_{\gamma_{2}},H_{\gamma_1}]]} = 
        \bigO\qty(\vertiii{H}_1^p \norm{H}_1), 
    \end{equation}
    where $\vertiii{H}_1^p=\max_l\max_{j_l} \sum_{j_1,\dots,j_{l-1},j_{l+1},\dots,j_{k}}\norm{H_{j_1,\dots,j_k}}$ is the induced 1-norm.
    The number of Trotter steps for achieving simulation accuracy $\eps$ is $r=\bigO(\vertiii{H}_1^p\norm{H}_1^{1/p}t^{1+1/p}/\eps^{1/p})$.
\end{lemma}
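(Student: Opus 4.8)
The plan is to split the statement into the bound on the nested commutator $\tilde{\alpha}_{\cmm,p}$ and the resulting Trotter-step count, the latter following at once by substituting the former into \cref{thm:pf}. So the only substantive task is to show
$\sum_{\gamma_1,\dots,\gamma_{p+1}}\norm{[H_{\gamma_{p+1}},\dots,[H_{\gamma_2},H_{\gamma_1}]]}=\bigO(\vertiii{H}_1^{p}\,\norm{H}_1)$,
the sum ranging over all $(p+1)$-tuples of terms of the $k$-local decomposition $H=\sum_{j_1,\dots,j_k}H_{j_1,\dots,j_k}$, with $\bigO$ hiding constants depending only on the fixed parameters $p$ and $k$.

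First I would record two elementary ingredients. (i) Submultiplicativity of the commutator, $\norm{[A,B]}\le 2\norm{A}\norm{B}$, iterated to give $\norm{[H_{\gamma_{p+1}},\dots,[H_{\gamma_2},H_{\gamma_1}]]}\le 2^{p}\prod_{\ell=1}^{p+1}\norm{H_{\gamma_\ell}}$. (ii) A support-connectivity necessary condition: writing $S_{\ell-1}:=\bigcup_{j<\ell}\supp(H_{\gamma_j})$, if $\supp(H_{\gamma_\ell})\cap S_{\ell-1}=\emptyset$ for some $\ell\ge 2$ then $H_{\gamma_\ell}$ commutes with the level-$(\ell-1)$ inner commutator (which is supported inside $S_{\ell-1}$), so the whole nested commutator vanishes. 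Hence only tuples in which every $H_{\gamma_\ell}$ with $\ell\ge2$ touches $S_{\ell-1}$ contribute.

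Then I would carry out the counting by peeling the sum from the innermost index outward. Summing $\norm{H_{\gamma_1}}$ over all first terms gives exactly $\norm{H}_1$. For any fixed $\gamma_1,\dots,\gamma_{\ell-1}$ the set $S_{\ell-1}$ has at most $(\ell-1)k$ qubits, and for a single qubit $q$ the total norm of terms whose support contains $q$ is at most $k\vertiii{H}_1$ (sum over which of the $k$ slots holds $q$ and apply the definition of $\vertiii{H}_1$); therefore $\sum_{\gamma_\ell:\ \supp(H_{\gamma_\ell})\cap S_{\ell-1}\neq\emptyset}\norm{H_{\gamma_\ell}}\le (\ell-1)k^{2}\vertiii{H}_1$, a bound uniform in the earlier choices. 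Since these per-level bounds do not depend on the specific earlier terms, the nested sums factor and
\[
\tilde{\alpha}_{\cmm,p}\ \le\ 2^{p}\,\norm{H}_1\prod_{\ell=2}^{p+1}(\ell-1)\,k^{2}\,\vertiii{H}_1\ =\ 2^{p}\,p!\,k^{2p}\,\vertiii{H}_1^{\,p}\,\norm{H}_1\ =\ \bigO\!\bigl(\vertiii{H}_1^{\,p}\,\norm{H}_1\bigr),
\]
treating $p$ and $k$ as constants. Feeding this into $r=\bigO(\tilde{\alpha}_{\cmm,p}^{1/p}t^{1+1/p}/\epsilon^{1/p})$ from \cref{thm:pf} yields the claimed step count $r=\bigO(\vertiii{H}_1\,\norm{H}_1^{1/p}t^{1+1/p}/\epsilon^{1/p})$.

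The hard part is organizational rather than conceptual: one must check that the per-level bound $(\ell-1)k^{2}\vertiii{H}_1$ is genuinely uniform over the previously chosen terms so the telescoping sums actually factor, and keep the running support of size $\bigO(pk)=\bigO(1)$; the trap is to bound crudely by $2^{p}\bigl(\sum_{\gamma}\norm{H_\gamma}\bigr)^{p+1}=2^{p}\norm{H}_1^{p+1}$, which throws away all the $\vertiii{H}_1$-locality savings. A minor nuisance is the exact constant ($k\vertiii{H}_1$ versus $\vertiii{H}_1$) when a qubit occupies several slots of one term; as only an upper bound is needed and $k$ is fixed, this does not affect the asymptotics.
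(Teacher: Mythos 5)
Your proposal is correct, but it is doing more work than the paper does: the paper offers no proof of \cref{thm:k_local} at all, importing it wholesale from Childs et al.\ \cite{childsTheoryTrotterError2021}, so the only meaningful comparison is with the argument in that reference — and yours is essentially a faithful reconstruction of it. The two ingredients you isolate (the nested commutator vanishes unless each successive term's support meets the union of the previous supports, and the induced-1-norm controls the total weight of terms touching any fixed qubit, giving a uniform per-level budget of at most $(\ell-1)k^{2}\vertiii{H}_1$ so the nested sums factor) are exactly the locality-counting mechanism behind the $\bigO(\vertiii{H}_1^{p}\norm{H}_1)$ bound there, and your closing step of substituting into the generic Trotter-number bound of \cref{thm:pf} is how the paper intends the step count to follow. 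Two minor remarks. First, your final expression $r=\bigO(\vertiii{H}_1\,\norm{H}_1^{1/p}\,t^{1+1/p}/\eps^{1/p})$ is the mathematically correct consequence of $\tilde{\alpha}_{\cmm,p}=\bigO(\vertiii{H}_1^{p}\norm{H}_1)$; the superscript $p$ appearing on $\vertiii{H}_1$ in the lemma's displayed step count is a notational slip in the paper (its ``where'' clause defines the symbol as the induced 1-norm itself), and your version matches the cited source. Second, your constant bookkeeping ($2^{p}p!\,k^{2p}$, with harmless overcounting when a qubit occupies several slots of one term) is fine since $p$ and $k$ are treated as constants; no gap there.
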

With the Trotter steps for $k$-local Hamiltonians, we have the following quantum communication complexity of $\dpf$.
\begin{proposition}[$\dpf$ for $k$-local Hamiltonians]\label{thm:d-k_local}
     Given a $\Gamma$-partite quantum network, and let $H=H_0+ \sum_{e\in\calE} H_e=\sum_{e=0}^{\abs{\calE}} H_e$,
    (where $H_0=\sum_{\gamma=1}^\Gamma H_\gamma$ and $\abs{\calE}$ is the number of the induced edge terms) be a $k$-local Hamiltonian with $\Gamma n$ qubits. 
    Let $\pfU_p$ be a $p$th-order product formula. 
    Then, the nested commutator
    \begin{equation}
        \tilde{\alpha}_{\cmm} = \sum_{e_1,e_2,\dots,e_{p+1}=0}^{\abs{\calE}}
        \norm{\qty[H_{e_{p+1}},\dots, [H_{e_{2}},H_{e_1}]]} = 
        \bigO(\vertiii{H}_1^p \norm{H}_1), 
    \end{equation}
    where $\vertiii{H}_1^p=\max_l\max_{j_l} \sum_{j_1,\dots,j_{l-1},j_{l+1},\dots,j_{k}}\norm{H_{j_1,\dots,j_k}}$ is the induced 1-norm.
    The number of Trotter steps (round of quantum communications) for simulation accuracy $\eps$ is $r=\bigO(\vertiii{H}_1^p\norm{H}_1^{1/p}t^{1+1/p}/\eps^{1/p})$. 
    Then, the (total) quantum communication complexity of the d-PF protocol for $k$-local Hamiltonian is
    \begin{equation}
        \bigO(\min(k,\Gamma) (\Gamma n)^k \vertiii{H}_1^p \norm{H}_1^{1/p} t^{1+1/p} \epsilon^{1/p}). %
    \end{equation}
\end{proposition}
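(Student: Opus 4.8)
\emph{Proof proposal.} The plan is to reduce both assertions to results already available: \cref{thm:k_local} (the $k$-local commutator bound for the finest decomposition $H=\sum_{j_1,\dots,j_k}H_{j_1,\dots,j_k}$), \cref{thm:pf} / \cref{eq:lemmap-pf-trotter-number} (the Trotter-step count with commutator scaling), and the per-term teleportation gadget used in the proof of \cref{apd:thm:d_pf}. First I would handle the commutator bound. The clustered terms $\{H_e\}_{e=0}^{\abs{\calE}}$, with $H_0=\sum_\gamma H_\gamma$, form a coarsening of the fine $k$-local decomposition: every $H_{j_1,\dots,j_k}$ lies in exactly one $H_e$ — inside the local block $H_\gamma$ if its support sits in a single node, and inside the unique interaction term otherwise. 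By multilinearity of the nested commutator and the triangle inequality, expanding each $H_{e_i}$ into its constituent $k$-local summands gives
\[
  \tilde\alpha_{\cmm}
  \;=\;\sum_{e_1,\dots,e_{p+1}=0}^{\abs{\calE}}\norm{\bigl[H_{e_{p+1}},\dots,[H_{e_2},H_{e_1}]\bigr]}
  \;\le\;\sum_{\vec{j}_1,\dots,\vec{j}_{p+1}}\norm{\bigl[H_{\vec{j}_{p+1}},\dots,[H_{\vec{j}_2},H_{\vec{j}_1}]\bigr]}
  \;=\;\bigO\bigl(\vertiii{H}_1^p\,\norm{H}_1\bigr),
\]
the last equality being exactly \cref{thm:k_local}. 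Substituting this bound into \cref{eq:lemmap-pf-trotter-number} then yields the stated Trotter-step count $r$.

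Next I would count the communication per segment. A $k$-local interaction term $H_e$ acts on at most $k$ qubits, hence touches at most $\min(k,\Gamma)$ distinct nodes; running the same ancilla-shuttling construction as in the proof of \cref{apd:thm:d_pf} (one auxiliary qubit teleported through the nodes in $\supp(H_e)$ and back), a single $e^{-\ii\delta t H_e}$ therefore costs $\bigO(\min(k,\Gamma))$ qubits of quantum communication rather than $\bigO(\Gamma)$. The number of interaction terms is bounded by the number of $k$-subsets of the $\Gamma n$ qubits, so $\abs{\calE}=\bigO((\Gamma n)^k)$. Thus one Trotter segment costs $\bigO(\min(k,\Gamma)(\Gamma n)^k)$ quantum communication, and multiplying by $r$ gives the claimed total $\bigO(\min(k,\Gamma)(\Gamma n)^k\,\vertiii{H}_1^p\,\norm{H}_1^{1/p}\,t^{1+1/p}/\eps^{1/p})$.

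I expect the only genuinely delicate point to be the first step: making precise that coarsening the term partition can only leave the nested-commutator sum unchanged or smaller, so that the general-form bound of \cref{thm:k_local}, stated for the finest $k$-local splitting, transfers verbatim to the network-induced clustered splitting — in particular verifying that $H_0$, although itself a large aggregate of local terms, contributes only sub-collections of the fine-decomposition commutators and hence cannot spoil the $\vertiii{H}_1^p\norm{H}_1$ scaling. Everything else (the replacement of $\Gamma$ by $\min(k,\Gamma)$ in the segment cost and the $\binom{\Gamma n}{k}=\bigO((\Gamma n)^k)$ count of interaction terms) is routine bookkeeping on top of the $\dpf$ analysis in \cref{apd:thm:d_pf}.
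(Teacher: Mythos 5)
Your proposal is correct and follows essentially the same route as the paper: bound the clustered nested-commutator norm by the $k$-local result of \cref{thm:k_local} to get $r=\bigO(\tilde\alpha_{\cmm}^{1/p}t^{1+1/p}/\eps^{1/p})$, count $\abs{\calE}=\bigO((\Gamma n)^k)$ interaction terms per segment, charge $\bigO(\min(k,\Gamma))$ communication per interaction exponential via the ancilla-teleportation gadget of \cref{apd:thm:d_pf}, and multiply. Your explicit multilinearity/triangle-inequality argument that coarsening the term partition can only decrease the nested-commutator sum is a detail the paper leaves implicit, but it is a correct filling-in rather than a different approach.
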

\begin{proof}
    In each segment, $\abs{\calE}$ gates for interaction crossing $k$ nodes must be implemented with quantum communication. 
    Suppose each node has $n$ qubits, then the number of interaction terms for global $\Gamma n$-qubit $k$-local Hamiltonian is
    $\abs{\calE}_{k} =\bigO((\Gamma n)^k)$.
    In general, for non-distributed cases, the number of terms for $\Gamma n$-qubit with $k$-local Hamiltonian is also $\bigO((\Gamma n)^k)$ \cite{childsTheoryTrotterError2021}. 
    Considering the distributed case, subtracting the term of each node local we still have $\bigO((\Gamma n)^k)-\bigO(\Gamma n^k)=\bigO((\Gamma n)^k)$.
    More specifically, for 2-local case ($\Gamma \ge k=2$), $\abs{\calE}_{k}%
    =c\binom{\Gamma}{2}\binom{n}{1}^2=c\frac{\Gamma(\Gamma-1)}{2}n^2$, where $c$ is a constant. 
    For $\Gamma \ge k >2$, the $\abs{\calE}$ is $\bigO((\Gamma n)^k)$.
Since for $\dpf$, the segment number is $r=\bigO(\vertiii{H}_1^p\norm{H}_1^{1/p}t^{1+1/p}/\eps^{1/p})$, and for each interaction term (even if $k$ is greater than $\Gamma$), it takes at most $\bigO(\min(k,\Gamma))$ communication to implement the corresponding exponential $ e^{-\ii\delta t \norm{H_{e_m}} \otimes _{j=1}^{\Gamma}\qty(\otimes_{i\in v_j}\sigma_{z_i}) }$. 
Therefore, the total quantum communication complexity of the $\dpf$ protocol for $k$-local Hamiltonian is
$\bigO(\min(k,\Gamma) (\Gamma n)^k \vertiii{H}_1^p \norm{H}_1^{1/p} t^{1+1/p})$.

\end{proof}

For general case of $k$-local Hamiltonian which has $ \abs{\calE}_k =\bigO((\Gamma n)^k)$,
the quantum communication complexity of post-Trotter methods ($\dts$ and $\dqsp$) is directly given below.
\begin{proposition}[$\dts$ for $k$-local Hamiltonians]\label{thm:k_local_lcu}
    Consider a $k$-local Hamiltonian $H$ with $\Gamma n$ qubits in a $\Gamma$-partite quantum network, time $t$, and simulation accuracy $\eps$.
    The total quantum communication complexity of d-TS is %
    $\bigO\qty( \alpha t k \Gamma \log (\Gamma n) \frac{\log(\alpha t/\epsilon)}{\log\log(\alpha t/\epsilon)} )$
    where $\alpha=\norm{H}_1$. 
\end{proposition}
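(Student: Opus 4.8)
The plan is to invoke \cref{apd:thm:d_ts} (the general quantum communication complexity of $\dts$) as a black box and simply specialize the graph parameter $\abs{\calE}$ to the $k$-local setting. That theorem already gives communication complexity $\bigO(\alpha t \Gamma \log(\abs{\calE}+\Gamma) K)$ with $K=\log(\alpha t/\epsilon)/\log\log(\alpha t/\epsilon)$, where $\alpha=\norm{H}_1$; since re-clustering $H=\sum_\gamma H_\gamma+\sum_{e\in\calE}H_e$ only redistributes Pauli terms among $H_0$ and the $H_e$ without changing their coefficients, $\norm{H}_1$ is unaffected, and the truncation order $K$ and the oblivious-amplitude-amplification overhead are identical to the non-distributed TS analysis. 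So the only quantity left to control is $\log(\abs{\calE}+\Gamma)$.

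First I would count the induced interaction terms. A $k$-local Hamiltonian on $\Gamma n$ qubits has $\bigO((\Gamma n)^k)$ terms in total (choose a support of size at most $k$, then a Pauli string on it, up to constant factors). The induced local part $H_0=\sum_\gamma H_\gamma$ absorbs those whose support lies inside a single node, of which there are $\bigO(\Gamma n^k)$; subtracting leaves $\abs{\calE}=\bigO((\Gamma n)^k)-\bigO(\Gamma n^k)=\bigO((\Gamma n)^k)$. This is exactly the combinatorial count carried out in the proof of \cref{thm:d-k_local}, so I would just cite it rather than redo it. Consequently $\log(\abs{\calE}+\Gamma)=\bigO(k\log(\Gamma n))$.

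Then I would substitute this estimate into the bound of \cref{apd:thm:d_ts}, obtaining $\bigO(\alpha t\,\Gamma\cdot k\log(\Gamma n)\cdot K)=\bigO\big(\alpha t\,k\,\Gamma\log(\Gamma n)\,\tfrac{\log(\alpha t/\epsilon)}{\log\log(\alpha t/\epsilon)}\big)$, which is precisely the claimed bound. I would also note explicitly that here the extra factor $k$ arises from $\log$ of the polynomially many interaction terms, not from the number of nodes an interaction spans.

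The main obstacle is bookkeeping rather than a genuine difficulty: one must make precise that the $\dbe$ and $\dro$ subroutines underlying \cref{apd:thm:d_ts} still carry only $\bigO(\Gamma\log(\abs{\calE}+\Gamma))$ communication when interaction terms act on up to $k$ nodes rather than on $2$. This is already the case because the control node distributes the register $\ket{e}^{\otimes\Gamma}$ uniformly to all $\Gamma$ nodes, so the span of $\supp(H_e)$ never enters the communication count; if one wanted a sharper constant one could send it only to the $\min(k,\Gamma)$ nodes in $\supp(H_e)$, but this does not affect the stated asymptotics. Apart from this check, the proof is a one-line substitution.
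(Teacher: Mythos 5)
Your proposal is correct and follows essentially the same route as the paper: the paper also obtains this proposition by plugging the count $\abs{\calE}=\bigO((\Gamma n)^k)$ (established in the $k$-local $\dpf$ analysis) into the general $\dts$ bound $\bigO(\alpha t\,\Gamma\log(\abs{\calE}+\Gamma)K)$, so that $\log(\abs{\calE}+\Gamma)=\bigO(k\log(\Gamma n))$. Your extra check that the $\dbe$/$\dro$ communication is insensitive to how many nodes each $H_e$ spans is a reasonable and correct clarification, consistent with the paper's construction in which only the control register $\ket{e}^{\otimes\Gamma}$ is distributed.
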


\begin{proposition}[$\dqsp$ for $k$-local Hamiltonians]\label{thm:k_local_qsp}   
    Consider a $k$-local Hamiltonian $H$ with $\Gamma n$ qubits in a $\Gamma$-partite quantum network, evolution time $t$, and simulation accuracy $\eps$.
    The total quantum communication complexity of d-QSP is %
    $ \bigO\qty(k  \Gamma \ln(\Gamma n)  (\alpha t+ \log(1/\epsilon)) )$
    where $\alpha=\norm{H}_1$. 
\end{proposition}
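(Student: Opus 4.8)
The plan is to obtain this proposition as an immediate corollary of \cref{apd:thm:d_qsp} (the general $\dqsp$ communication complexity), the only extra ingredient being a count of the number of induced interaction terms $\abs{\calE}$ for a $k$-local Hamiltonian on a $\Gamma$-partite network. Recall that \cref{apd:thm:d_qsp} gives quantum communication complexity $\bigO(\Gamma \log(\abs{\calE}+\Gamma)(\alpha t + \log(1/\epsilon)))$ with $\alpha = \norm{H}_1$; since the $\dbe$ and $\dro$ subroutines (and hence their communication costs) depend on the Hamiltonian only through the partition data $(\Gamma, \abs{\calE})$, it suffices to control how $\abs{\calE}$ scales with $\Gamma$, $n$, and $k$.

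First I would count terms exactly as in the proof of \cref{thm:d-k_local}. A $k$-local Hamiltonian $H = \sum_{j_1,\dots,j_k} H_{j_1,\dots,j_k}$ on $\Gamma n$ qubits has at most $\bigO((\Gamma n)^k)$ summands, since each acts nontrivially on at most $k$ of the $\Gamma n$ qubits. The induced interaction terms $H_e$ are those whose support is not contained in a single node $v_\gamma$, which form a subset of all summands; subtracting off the local terms (of which there are $\bigO(\Gamma n^k)$) leaves $\abs{\calE} = \bigO((\Gamma n)^k) - \bigO(\Gamma n^k) = \bigO((\Gamma n)^k)$. Consequently $\log(\abs{\calE}+\Gamma) = \bigO(\log((\Gamma n)^k)) = \bigO(k\log(\Gamma n))$.

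Then I would substitute this estimate into \cref{apd:thm:d_qsp}, yielding quantum communication complexity $\bigO(\Gamma \cdot k\log(\Gamma n) \cdot (\alpha t + \log(1/\epsilon)))$, which is exactly the claimed bound. For completeness I would also remark that the clustered decomposition $H = \sum_\gamma H_\gamma + \sum_{e\in\calE} H_e$ preserves the structure needed by the $\dbe$ construction of \cref{apd:dbe}: grouping summands into $H_\gamma$ only merges terms already supported on node $v_\gamma$ and leaves each cross-node term $H_e$ an (at most) $k$-fold tensor product of Paulis, so the $\dbe$/$\dro$ circuits apply verbatim.

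\textbf{Main obstacle.} There is essentially none of substance here — the statement is a direct corollary of \cref{apd:thm:d_qsp} combined with the term-counting already done for \cref{thm:d-k_local}. The only point requiring a little care is the bookkeeping of the partition convention (that clustering does not inflate $\abs{\calE}$ beyond $\bigO((\Gamma n)^k)$ and does not break $k$-locality of the interaction terms), which I have addressed above.
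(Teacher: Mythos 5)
Your proposal is correct and matches the paper's reasoning: the paper obtains this proposition exactly by noting $\abs{\calE}_k=\bigO((\Gamma n)^k)$ for a $k$-local Hamiltonian (the same count used for \cref{thm:d-k_local}) and substituting $\log(\abs{\calE}+\Gamma)=\bigO(k\log(\Gamma n))$ into the general $\dqsp$ bound of \cref{apd:thm:d_qsp}. Nothing further is needed.
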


\section{Extension to other distributed quantum algorithms}
\subsection{Distributed quantum phase estimation (d-QPE)}\label{apd:qpe}

 Quantum phase estimation (QPE) is a core procedure of many algorithms (e.g., Shor's algorithm \cite{shorPolynomialTimeAlgorithmsPrime1997}), which makes it significant in distributed quantum computing \cite{neumannImperfectDistributedQuantum2020}. Our distributed protocols can be simply extended for distributed quantum phase estimation. Recall that quantum phase estimation has two registers, the first register is the control system and the phase information may be read out by measuring on a Fourier basis. While the second register is the system being used to execute the target unitary $U$.

In our distributed QPE (denoted $\dqpe$), different powers of $U$ need to be implemented on a composite system of $\Gamma$ nodes. 
In the quantum phase estimation algorithm, $K$ is a constant number and
assuming $U=\sum_{j=1}^{J} \beta_j \bigotimes^{\Gamma}_{\gamma=1} U^{(\gamma)}_j$, $U^k$ can also be implemented directly with $\dlcu$ (see \cref{d-LCU}). 
As shown in \cref{fig:qpe_grover} (a), one may need to introduce additional $K$ qubits at the control node to implement QPE. 
Recall that for $\dlcu$,  one may perform $U=\sum_{j=1}^{J} \beta_j \bigotimes^{\Gamma}_{\gamma=1} U^{(\gamma)}_j$ with quantum communication complexity  $\bigO(\sqrt{\sum_j \abs{\beta_j}}\Gamma\text{log}J )$. Clearly, for $U^{2^K}$, the communication complexity is $\bigO(2^K\sqrt{\sum_j \abs{\beta_j}}\Gamma\log J)$. 
However, for QPE, controlled $U$ is executed in the network, so an analysis of the communication complexity it may introduce is required.

\begin{figure*}[!t]
    \centering
    \sidesubfloat[]{\includegraphics[width=.45\linewidth]{./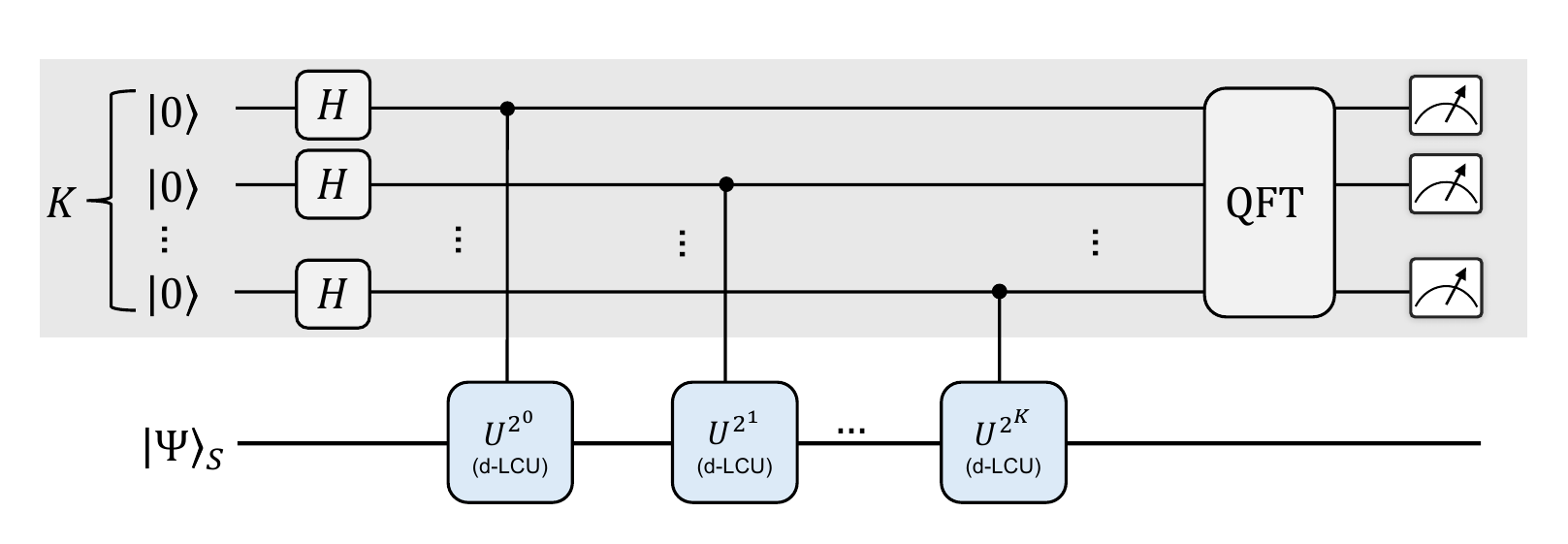}}
    \sidesubfloat[]{\includegraphics[width=.45\linewidth]{./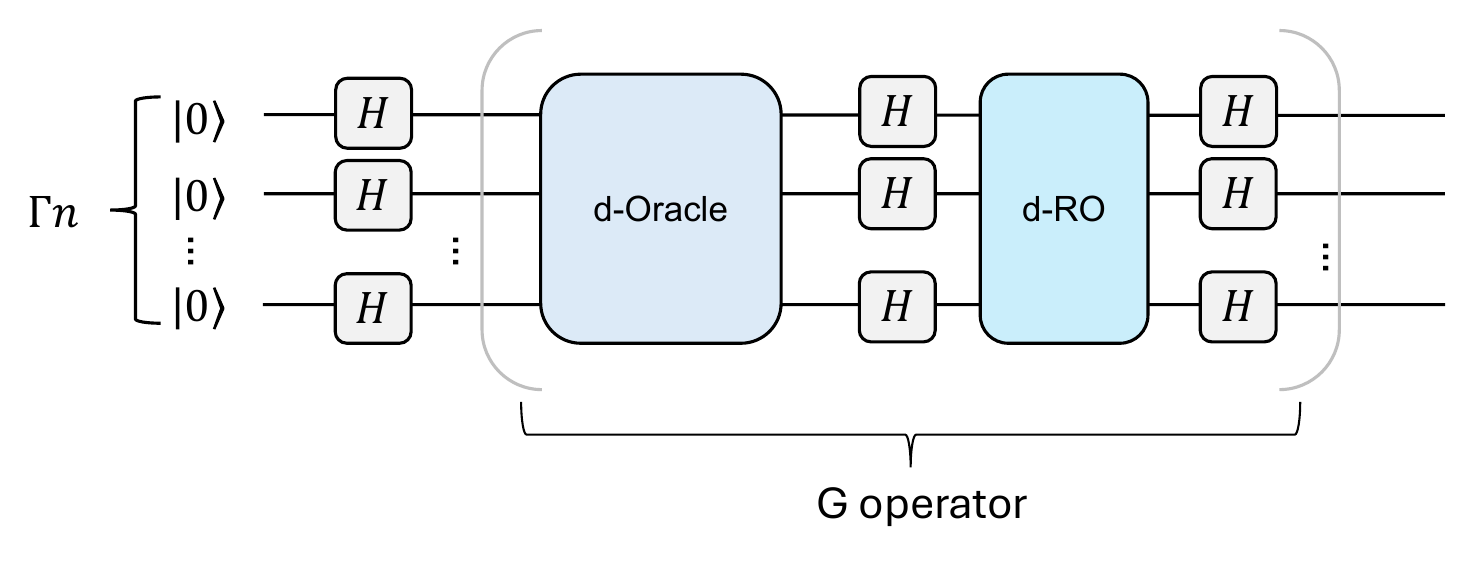}}
    \caption{(a) Distributed quantum phase estimation. 
    There are $K$ ancillary qubits at the control node and $\ket{\psi_S}$ is one of the eigenstates of $U$, which is pre-distributed to a $\Gamma$-partite quantum network. 
    (b) Distributed Grover's algorithm based on the original Grover's algorithm. 
    There are $\Gamma n$ qubits of data distributed in $\Gamma$ nodes. 
    Here the Oracle and reflection operations in Grover's algorithm are substituted with their distributed counterparts, d-Oracle and d-RO, respectively. }
    \label{fig:qpe_grover}
\end{figure*}

Similar to controlled $\dbe$, here the communication complexity of a single controlled $\dlcu$ of $U$ introduced an additional (little) quantum communication overhead of $\Gamma$ qubits since the control qubit needs to be sent to each node, which in turn performs the corresponding control operations. 
Thus, the quantum communication complexity of controlled $\dlcu$ of $U$ is still $\bigO(\sqrt{\sum_j \abs{\beta_j}}\Gamma\log J )$. 
That is for a single controlled $U^{2^K}$, the quantum communication complexity is $\bigO(2^K\sqrt{\sum_j \abs{\beta_j}}\Gamma\log J) $. 
Overall, one has the following proposition for general unitaries.
\begin{proposition}[$\dqpe$ for general $U$]\label{thm:qpe}   
    Given a $\Gamma$-partite quantum network and $U=\sum_{j=1}^{J} \beta_j \bigotimes^{\Gamma}_{\gamma=1} U^{(\gamma)}_j$, 
    the total quantum communication complexity of $\dqpe$ is
    $ \bigO\qty(2^{2K}\sqrt{\sum_j \abs{\beta_j}}\Gamma\log J)$,
    where $\Gamma$ is the number of nodes and  $K$ is the number of decimals in the phase estimate.
\end{proposition}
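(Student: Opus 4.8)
The plan is to read the cost straight off the textbook phase-estimation circuit and isolate the only operations that force qubits to cross the network. First I would set up $\dqpe$ concretely: the $K$ estimation qubits live entirely on the control node, the eigenstate $\ket{\psi_S}$ of $U$ is the $\Gamma n$-qubit input already spread over the $\Gamma$ local nodes, and the circuit is Hadamards on the estimation register, the $K$ controlled powers $U^{2^{0}},\dots,U^{2^{K-1}}$, an inverse quantum Fourier transform on the estimation register, and a terminal measurement of that register. The Hadamards, the inverse QFT, and the measurement all act only on qubits held by the control node, so they contribute nothing to the quantum communication; hence the whole budget equals the sum over $k$ of the cost of realizing the controlled $U^{2^{k}}$ distributively.

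Second, I would bound a single controlled $U^{2^{k}}$ using the $\dlcu$ subroutine already established in the excerpt. Since $U=\sum_{j=1}^{J}\beta_j\bigotimes_{\gamma=1}^{\Gamma}U_j^{(\gamma)}$, one application of $U$ is implemented by having the control node prepare $G\ket{\mathbf 0}\propto\sum_j\sqrt{\beta_j}\ket{\mathbf j}^{\otimes\Gamma}$, teleport $\lceil\log J\rceil$ qubits to each of the $\Gamma$ nodes, let each node run its local $\select$ unitary $\sum_j\op{\mathbf j}\otimes U_j^{(\gamma)}$, send the ancillas back for $G^{\dagger}$, and amplify (one round of oblivious amplitude amplification when $\sum_j\abs{\beta_j}\le 2$, the product-splitting trick otherwise); this costs $\bigO\!\big(\sqrt{\textstyle\sum_j\abs{\beta_j}}\,\Gamma\log J\big)$ transmitted qubits, exactly the $\dlcu$ figure. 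The power $U^{2^{k}}$ is the $2^{k}$-fold composition of this block, and the controlled version is obtained by broadcasting the single control qubit to the $\Gamma$ nodes exactly as in controlled-$\dbe$, which adds only $\bigO(\Gamma)$ per query and is absorbed in the $\bigO(\cdot)$. Tracking the amplitude-amplification overhead needed to keep the $2^{k}$-fold chain (near-)deterministic, the cost of controlled $U^{2^{k}}$ is $\bigO\!\big(2^{k}\sqrt{\sum_j\abs{\beta_j}}\,\Gamma\log J\big)$, so the most expensive of the $K$ powers costs $\bigO\!\big(2^{K}\sqrt{\sum_j\abs{\beta_j}}\,\Gamma\log J\big)$.

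Finally I would add up the $K$ controlled powers: each costs at most the cost of the largest one, and there are $K=\bigO(2^{K})$ of them, so the total quantum communication of $\dqpe$ is bounded by $\bigO\!\big(2^{2K}\sqrt{\sum_j\abs{\beta_j}}\,\Gamma\log J\big)$, which is the claimed figure. The main obstacle is the middle step: making the probabilistic $\dlcu$ realization of the high powers $U^{2^{k}}$ genuinely (near-)deterministic while it sits under a control qubit — one has to decide whether to amplitude-amplify each copy of $U$ or the whole product, to handle $\sum_j\abs{\beta_j}>2$ by factoring $U$ into pieces of $1$-norm $2$, and to verify that the controlled wrapper does not spoil the oblivious amplitude-amplification structure; once that accounting is pinned down, the rest is routine circuit bookkeeping and the localization argument of the first step.
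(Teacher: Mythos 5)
Your proposal is correct and takes essentially the same route as the paper: implement each controlled power $U^{2^k}$ by composing the (amplitude-amplified) $\dlcu$ realizations of $U$, note that broadcasting the control qubit only adds $\bigO(\Gamma)$ per query, and sum over the $K$ controlled powers. Like the paper, you arrive at the stated $2^{2K}$ figure through the crude bound (number of powers)$\times$(cost of the largest power) $\le 2^K\cdot 2^K$ — the exact sum $\sum_{k<K}2^k=\bigO(2^K)$ would in fact give a sharper bound, but as an upper bound your accounting matches the paper's.
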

Similarly, we can consider the case where $U=e^{-\ii Ht}$ is implemented by $\dqsp$ for a Hamiltonian $H$ of $\Gamma n$ qubits.
\begin{proposition}[$\dqpe$ for dynamics of $H$]\label{thm:qpe_dynamics}   
    Given a $\Gamma$-partite quantum network, a Hamiltonian in the form $H=\sum_{\gamma=1}^\Gamma H_\gamma+\sum_{e\in\calE} H_e$ of $\Gamma n$ qubits, and a unitary $\tilde{U}$ satisfying $\norm{\tilde{U}-e^{-\ii Ht}}\le \eps$, 
    the total quantum communication complexity of $\dqpe$ for $\tilde{U}$ is
    $\bigO\qty(2^{2K}\Gamma \log (\abs{\calE}+\Gamma)(\alpha t+ \log(1/\epsilon)))$,
    where $\Gamma$ is the number of nodes, $\abs{\calE}$ is the number of interaction terms, $\alpha=\norm{H}_1$ and $K$ is the number of decimals in the phase estimate. 
    The total error of $\dqpe$ for the ideal $U=e^{-\ii Ht}$ is $2^{2^k}\epsilon$.
\end{proposition}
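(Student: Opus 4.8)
The plan is to reduce the statement to \cref{apd:thm:d_qsp} together with the controlled-$\dbe$ overhead already established in the proof of \cref{apd:thm:d_ts}, and then to feed the resulting controlled unitary into the phase-estimation register. First I would invoke \cref{apd:thm:d_qsp}: for the clustered Hamiltonian $H=\sum_\gamma H_\gamma+\sum_{e\in\calE}H_e$ and accuracy $\delta$, $\dqsp$ produces a circuit $\tilde U$ with $\norm{\tilde U-e^{-\ii Ht}}\le\delta$ at quantum communication cost $\bigO(\Gamma\log(\abs{\calE}+\Gamma)(\alpha t+\log(1/\delta)))$. Because $\dqsp$ is assembled from $\dbe$ and $\dro$, which only move the control and ancilla registers through the star network, attaching a single external control line (as needed for the controlled-$\tilde U$ calls of QPE) adds at most $\bigO(\Gamma)$ extra teleported qubits per query, exactly as argued for controlled-$\dbe$ in the proof of \cref{apd:thm:d_ts}; hence controlled-$\tilde U$ has the same asymptotic communication cost as $\tilde U$.

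Next I would insert this controlled unitary into the $\dqpe$ circuit of \cref{fig:qpe_grover}(a). With $K$ ancilla qubits at the control node, QPE applies controlled-$\tilde U^{2^{j}}$ for $j=0,\dots,K-1$, and $\tilde U^{2^{j}}$ is realized by composing $2^{j}$ copies of controlled-$\tilde U$, for a total of $\sum_{j=0}^{K-1}2^{j}=\bigO(2^{K})$ controlled-$\tilde U$ invocations. Multiplying this query count by the per-call cost gives quantum communication $\bigO(2^{K}\Gamma\log(\abs{\calE}+\Gamma)(\alpha t+\log(1/\delta)))$. To certify a $K$-bit phase readout one must take the base accuracy $\delta$ polynomially small in $2^{K}$ (so that the accumulated error below does not corrupt the estimate); absorbing $\log(1/\delta)=\bigO(K+\log(1/\epsilon))$ into the $\alpha t+\log(1/\epsilon)$ factor and collecting the residual $2^{\bigO(K)}$ overhead yields the stated bound $\bigO(2^{2K}\Gamma\log(\abs{\calE}+\Gamma)(\alpha t+\log(1/\epsilon)))$.

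For the error statement I would use subadditivity of the operator norm under composition and tensoring: since $\norm{\tilde U-e^{-\ii Ht}}\le\epsilon$ one gets $\norm{\tilde U^{2^{j}}-e^{-\ii 2^{j}Ht}}\le 2^{j}\epsilon$, so the implemented QPE circuit differs from the ideal one (built with the exact $e^{-\ii Ht}$) by at most $\sum_{j=0}^{K-1}2^{j}\epsilon=\bigO(2^{K}\epsilon)$ in operator norm; propagating this through the controlled powers and the inverse QFT (which are unitary and hence do not amplify the error) gives the claimed total error of order $2^{2K}\epsilon$.

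The main obstacle I anticipate is bookkeeping rather than a conceptual barrier: one has to verify that broadcasting the QPE control line simultaneously with the $\dbe$ and $\dro$ ancillae still costs only $\bigO(\Gamma\log(\abs{\calE}+\Gamma))$ per elementary query (not more), and that the base accuracy $\delta$ fed to $\dqsp$ need shrink only by a $\poly(2^{K})$ factor --- not worse --- to guarantee a correct $K$-bit phase; both follow from the star-network structure and the additivity of Hamiltonian-simulation error, but these are precisely the estimates where a careless bound would destroy the stated scaling.
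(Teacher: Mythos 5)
Your argument matches the paper's own (implicit) proof: the proposition is obtained exactly as you do, by substituting the $\dqsp$ circuit --- with per-call communication $\bigO(\Gamma\log(\abs{\calE}+\Gamma)(\alpha t+\log(1/\epsilon)))$ and a controlled version costing only an extra $\bigO(\Gamma)$, as in the controlled-$\dbe$ argument --- for the $\dlcu$ implementation inside the $\dqpe$ circuit, tallying the $\approx 2^{K}$ controlled-$\tilde U$ invocations from the powers $\tilde U^{2^{j}}$, and propagating the simulation error additively, the stated $2^{2K}$ factor being the paper's conservative bookkeeping of this count (and its ``$2^{2^{k}}\epsilon$'' is evidently a typo for the same accumulated error you bound by $\bigO(2^{K}\epsilon)\le \bigO(2^{2K}\epsilon)$). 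Your additional step of shrinking the base accuracy $\delta$ polynomially in $2^{K}$ is unnecessary --- the proposition fixes $\tilde U$ at accuracy $\epsilon$ and reports the accumulated error separately rather than re-tuning $\dqsp$ --- but this does not affect the correctness of your bound.
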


\subsection{Distributed Grover's algorithm (d-Grover)}\label{apd:gg}
Grover's algorithm is a quantum algorithm for searching an unsorted database with a speed-up over classical algorithms. 
While classical computation requires linear time to search for a specific item in a database, Grover's algorithm leverages the power of quantum computation to perform the task in square root time complexity, i.e., $\bigO(\sqrt{N})$, where $N$ represents the number of items in the database.
We note that there are different versions of distributed Grover's algorithm. 
Here our protocol (denoted $\dgrover$) refers specifically to the direct extension of the original Grover's algorithm where the Oracle and reflection operations are substituted with their distributed counterparts. 
We present the quantum circuit for $\dgrover$ in \cref{fig:qpe_grover} (b).

It can be seen that the core of Grover's algorithm is the execution of Oracle and reflection operations. 
The latter reflection operation we have already mentioned above can be generalized to $\dro$. 
While Oracle can be considered as generalized reflection operations acting on a specific target component, i.e.  $I-(1-e^{-\ii\phi})\op{j}$. 
Thus in a quantum network with $\Gamma$ nodes, it is straightforward to construct distributed Oracle (d-Oracle) based on $\dro$. 
Recall that, in \cref{apd:d-RO}, we have shown the quantum communication complexity for a general $\dro$ ($I-(1-e^{-\ii\phi})\op{j}$) is $\bigO(\Gamma)$. 
Thus, for a single query of $G$ operator (contains a d-Oracle and a $\dro$), its quantum communication complexity is also $\bigO(\Gamma)$. 
Therefore, the total quantum communication complexity of distributed Grover's search should be $\bigO(\Gamma \sqrt{N})$.

\begin{proposition}[$\dgrover$ in a $\Gamma$-partite network]\label{thm:grover}   
    Given a $\Gamma$-partite quantum network, 
    the total quantum communication complexity for $\dgrover$ is $\bigO\qty(\Gamma \sqrt{N})$,
    where $\Gamma$ is the number of nodes and  $N=2^{\Gamma n}$ represents the number of items in the database.
\end{proposition}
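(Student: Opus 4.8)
The plan is to bound the total cost by reducing one Grover iteration to the two primitives already analyzed in \cref{apd:d-RO}---the marking oracle and the diffusion operator---each of which is an instance of the distributed (generalized) reflection $\dro$. First I would recall the ideal Grover routine on a size-$N$ database: prepare the uniform superposition $\ket{s}=N^{-1/2}\sum_x\ket{x}$, apply the iterate $G=D\,O_f$ for $\bigO(\sqrt N)$ rounds---where $O_f=I-2\op{x^\star}$ phase-flips the solution $x^\star$ and $D=2\op{s}-I$ reflects about $\ket{s}$---then measure in the computational basis. In the distributed picture the $\Gamma n$ data qubits split into $\Gamma$ blocks of $n$ qubits, so $N=2^{\Gamma n}$ and every basis state factorizes as $\ket{x}=\bigotimes_{\gamma=1}^{\Gamma}\ket{x_\gamma}$ with $x_\gamma\in\{0,1\}^n$; preparing $\ket{s}$ is the local layer $H^{\otimes\Gamma n}$ and needs no communication.

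The core step is to show a single iteration costs $\bigO(\Gamma)$ quantum communication. The oracle $O_f$ is precisely the ``distributed phase operation on a single component'' $I-(1-e^{-\ii\phi})\op{x^\star}$ (with $\phi=\pi$) noted at the end of \cref{apd:d-RO}: writing $\op{x^\star}=\bigotimes_{\gamma=0}^{\Gamma}\tfrac{R_\gamma+I_\gamma}{2}$ with node-local reflections $R_\gamma=2\op{x_\gamma}-I_\gamma$, the nested-$\dlcu$ circuit of \cref{fig:alldown}(c) realizes $O_f$ by distributing only the two-dimensional control register $C_b$ to the $\Gamma$ nodes, hence with $\bigO(\Gamma)$ communication and an $\bigO(1)$-length amplitude-amplification loop (since $1\le 1+2\sin\tfrac\phi2\le 3$); crucially, nothing of the $\Gamma n$ data qubits is ever moved. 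The diffusion factors as $D=H^{\otimes\Gamma n}\bigl(2\op{\mathbf 0}-I\bigr)H^{\otimes\Gamma n}$, whose Hadamard layers are local while $2\op{\mathbf 0}-I=I-(1-e^{-\ii\pi})\op{\mathbf 0}$ with $\op{\mathbf 0}=\bigotimes_{\gamma}\op{\mathbf 0}_\gamma$ is literally a $\dro$, again $\bigO(\Gamma)$. Thus $G$---a d-Oracle followed by a $\dro$---uses $\bigO(\Gamma)$ qubits of communication per step.

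It then remains to multiply through: since the distributed construction reproduces the same global unitary $G$, the Grover schedule is unchanged, $\bigO(\sqrt N)$ iterations suffice, and the total quantum communication is $\bigO(\sqrt N)\cdot\bigO(\Gamma)=\bigO(\Gamma\sqrt N)$, which is the claimed bound.

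The step I expect to be the real obstacle is controlling error accumulation: the nested-$\dlcu$ implementations of $O_f$ and $D$ are only near-deterministic after constantly many amplitude-amplification rounds, so keeping the total deviation from ideal Grover below a constant over $\bigO(\sqrt N)$ iterations forces each block's success probability up to $1-\bigO(1/\sqrt N)$, an extra $\bigO(\log N)=\bigO(\Gamma n)$ factor; the clean bound $\bigO(\Gamma\sqrt N)$ is recovered under the standard convention of exact oracle and reflection access used in the ideal Grover analysis. A minor additional check is that the $\op{j}$-generalization of \cref{apd:d-RO} still applies when the targeted component lives on the data registers rather than on dedicated ancillas---which holds because that construction only exploits the product structure $\op{j}=\bigotimes_\gamma\op{j_\gamma}$ and node-local reflections, both available on any register.
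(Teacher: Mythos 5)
Your proposal is correct and follows essentially the same route as the paper: implement the marking oracle as the generalized $\dro$ on the target component $\op{x^\star}$ and the diffusion as a $\dro$ on $\op{\mathbf{0}}$ (conjugated by local Hadamards), each costing $\bigO(\Gamma)$ communication, then multiply by the $\bigO(\sqrt{N})$ Grover iterations. Your closing remark on error accumulation from the near-deterministic nested-$\dlcu$ blocks is a legitimate subtlety that the paper itself does not address, but it does not change the argument under the exact-access convention you adopt.
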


\section{Generality of distributed protocols in quantum network with any topology}\label{apd:networks}

Here we show that for distributed protocols that require the assistance of control nodes, such as $\dlcu$, $\dts$, $\dqsp$, $\dqpe$, and $\dgrover$ and other such algorithms, their communication complexity is independent of the topology of the realistic quantum network.
This conclusion is predicated on the fact that each node possesses $\log(n)$ auxiliary qubits in addition to $n$ working qubits, and all auxiliary qubits of the whole network can be virtually and equivalently constructed as the centralized control node mentioned in the main text, through the quantum communication.

As shown in \cref{fig:networks}, there are two $\Gamma$-node quantum networks ($\Gamma=6$) presented in (a) and (b) with different topologies. 
Each node has a primary quantum processor ($n$ qubits) and a small ancillary system ($\bigO (\log(n))$ qubits), and any local operation can be implemented on the composite system of these two systems. 
The light blue box represents the primary quantum processor, while the small gray box represents the auxiliary quantum system. 

\begin{figure*}[!t]
    \centering
    \includegraphics[width=.9\linewidth]{./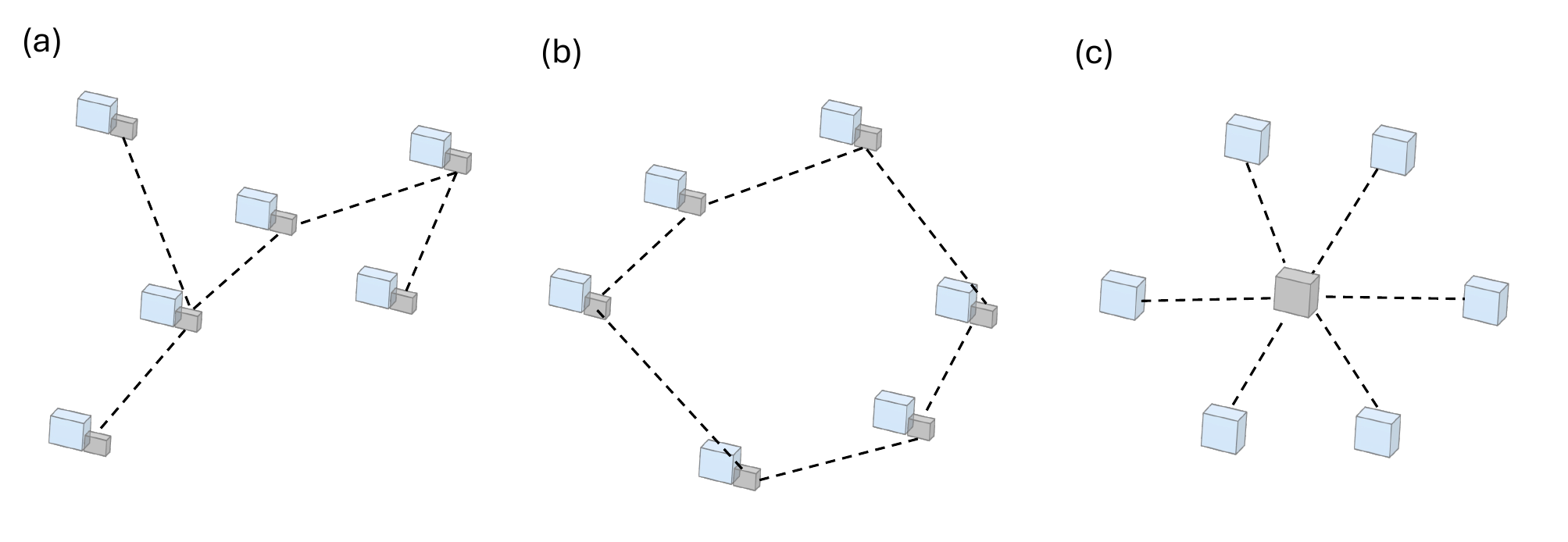}
    \caption{Quantum network with various topologies. Each node has a primary quantum processor and a small ancillary system, and any local operation can be implemented on the composite system of these two systems. The light blue box represents the primary quantum processor, while the small gray box represents the auxiliary quantum system. The dashed line indicates a quantum connection, i.e. quantum channel. (a) and (b) can be virtually and equivalently constructed as (c), the Star network, through the quantum connection between ancillary systems. }
    \label{fig:networks}
\end{figure*}

As the auxiliary systems (small gray boxes) are connected via quantum channels, entanglement can be established between them via quantum communication. 
For example, if one of the nodes named node 1 has an auxiliary system preparing $G\ket{\mathbf{0}}=\sum_j \beta_j \ket{j}^{(1)}\ket{j}^{(1)}$, then it may transfer half of the qubits of state to the neighboring node 2 via quantum teleportation, at this point, node 1 and the neighboring node 2 shares the entangled state $\sum_j \beta_j \ket{j}^{(1)}\ket{j}^{(2)}$. 
For the information of the state of interest to be shared across the network, node 2 performs multiple local CNOT gates on the qubits on hand and other auxiliary qubits, thus preparing $\sum_j \beta_j \ket{j}^{(1)}\ket{j}^{(2)}\ket{j}^{(2)}$.
Similarly, node 2 teleports half of these quantum states to node 3. And so on, eventually the entire network will share auxiliary states such as $\sum_j \beta_j \ket{j}^{(1)}\ket{j}^{(2)}\dots \ket{j}^{(\Gamma)}$, consistent with those prepared by central node in \cref{fig:networks} (c) which is exactly we introduced in $\dbe$, $\dro$ and $\dlcu$.

Consequently, the above process can be considered equivalent to transforming a network of arbitrary topology into a virtual star network. 
Moreover, the quantum communication cost arises only from the two neighboring nodes, so the quantum communication complexity of the network constructing such a globally entangled auxiliary state is equal to the number of quantum connections (dash lines), which is proportional to the number of nodes. Similarly, its reverse process can also be implemented step by step.
That is, the quantum communication complexity of an auxiliary system state sharing a full network is $\bigO(\Gamma)$, identical to the communication complexity required for the distribution of entanglement by the control nodes in a start network. 

On the contrary, $\dpf$ may require numbers of quantum teleportation or quantum state transfer to connect a certain two or more nodes each time it performs the evolution of certain interaction terms, such that its quantum communication complexity is somewhat dependent on the topology of the network.

\section{Communication lower bound of distributed quantum simulation}\label{sec:proof_lower_bound}

The area of communication complexity deals with the scenario where multiple parties do computation collectively. 
Under this model, local computation is free, but communication is expensive and has to be minimized. 
    Assume there are two separated parties, Alice and Bob who have a bit-string $\vby\in \qty{0,1}^n$ and $\vbz\in \qty{0,1}^n$ respectively, they want to collectively evaluate a Boolean function 
    $\predicate(\vby,\vbz): \qty{0,1}^n \times \qty{0,1}^n \to\qty{0,1}$ by local computation and communication with each other.
    Under the (standard) two-party quantum communication model, Alice and Bob each have a quantum computer and are allowed to communicate qubits (or to make use of the quantum correlations given by shared EPR pairs).
This model was introduced by Yao \cite{yaoComplexityQuestionsRelated1979} and has been studied extensively \cite{dewolfQuantumComputingCommunication2001, buhrmanNonlocalityCommunicationComplexity2010}.
The communication complexity of $\predicate(\vby,\vbz)$ is the minimal amount of communication required for the worst-case input.

\begin{definition}[$\QC_2$]\label{method:def:qc2}
    Under the standard two-party quantum communication model,
    the \emph{(two-sided) bounded error quantum communication complexity} of a Boolean function $\predicate(\vby,\vbz)$, denoted $\QC_2(\predicate)$,
    is the minimal number of qubits that Alice and Bob require to exchange to return $\predicate(\vby,\vbz)$ with correct probability at least $2/3$ on the worst-case input.
\end{definition}

To prove the lower bound of quantum communication complexity of distributed quantum simulation ($\dqdhs$), 
we utilize the distributed version of $\parity_n(\vbx)$ known as the $\InnerProduct$ problem with known lower bound.
\begin{problem}[$\IP_n(\vby,\vbz)$]\label{method:prm:inner_product}
    Under the standard two-party quantum communication model,
    the $\InnerProduct$ (denoted $\IP_n$) problem refers to evaluating the Boolean function 
    $ \IP_n(\vby,\vbz):=\bigoplus_{j=1}^{n} (y_j \wedge z_j) \equiv \vby \cdot \vbz \textup{ mod }2$
    by communicating as few as possible (qu)bits between Alice and Bob who have the bit-string $\vby\in\qty{0,1}^n$ and $\vbz\in\qty{0,1}^n$ respectively.
\end{problem}
\begin{lemma}[Tight bound of $\IP_n$]\label{method:thm:innerproduct}
    $\QC_2(\IP_n)=\Theta(n)$.
\end{lemma}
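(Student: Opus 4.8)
\emph{Proof plan.} The plan is to split the claim into the (trivial) upper bound and the (substantive) lower bound. For the upper bound I would just observe that Alice can send her $n$-bit string $\vby$ to Bob, who then outputs $\vby\cdot\vbz \bmod 2$ with no error, so $\QC_2(\IP_n)=\bigO(n)$; in fact this already holds with a single classical message. All the work is in the matching lower bound $\QC_2(\IP_n)=\Omega(n)$, and here the approach I would take is the \emph{discrepancy method} for quantum communication complexity, of which the inner-product function is the canonical example: it has exponentially small discrepancy under the uniform distribution. A nice byproduct of this route is that the bound is robust to pre-shared entanglement, so it applies directly to the model of \cref{method:def:qc2}.

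Concretely, I would carry this out in three steps. Write $F(\vby,\vbz):=(-1)^{\IP_n(\vby,\vbz)}$ for the $\pm1$ version of the function and let $\mu$ be the uniform distribution on $\{0,1\}^n\times\{0,1\}^n$, with discrepancy
\begin{equation}
    \mathrm{disc}_\mu(F):=\max_{A,B\subseteq\{0,1\}^n}\ \left|\ \sum_{\vby\in A}\sum_{\vbz\in B}\mu(\vby,\vbz)\,F(\vby,\vbz)\ \right|,
\end{equation}
the maximum ranging over combinatorial rectangles $A\times B$. Step one is to invoke Kremer's lemma \cite{kremerQuantumCommunication1995} (see also \cite{dewolfQuantumComputingCommunication2001,buhrmanNonlocalityCommunicationComplexity2010}): any $c$-qubit two-party quantum protocol, with or without prior entanglement, that computes a Boolean function $\predicate$ with error at most $1/3$ must satisfy $c=\Omega\!\big(\log(1/\mathrm{disc}_\mu(\predicate))\big)$ for every distribution $\mu$. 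I would only sketch why: after $c$ exchanged qubits, each amplitude of the joint state decomposes as a sum of at most $2^{c}$ products (a factor depending only on Alice's input, times a factor depending only on Bob's input), so the protocol's acceptance probability on any set of inputs is a combination of $2^{\bigO(c)}$ rectangle-averages of $F$, and hence the achievable bias is at most $2^{\bigO(c)}\,\mathrm{disc}_\mu(F)$; achieving constant bias therefore forces $c=\Omega(\log(1/\mathrm{disc}_\mu(F)))$. Step two is the (routine) spectral computation of the discrepancy of inner product: the $2^n\times 2^n$ sign matrix $H_{\vby,\vbz}=(-1)^{\vby\cdot\vbz}$ satisfies $HH^{\top}=2^{n}I$, so $\norm{H}=2^{n/2}$; the signed-measure matrix $\mu\circ F$ equals $2^{-2n}H$; and for any rectangle, writing $\mathbf{1}_A,\mathbf{1}_B$ for indicator vectors, $\abs{\mathbf{1}_A^{\top}(\mu\circ F)\mathbf{1}_B}\le \norm{\mu\circ F}\,\norm{\mathbf{1}_A}\,\norm{\mathbf{1}_B}\le 2^{-2n}\cdot 2^{n/2}\cdot 2^{n/2}\cdot 2^{n/2}=2^{-n/2}$. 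Step three combines the two: $\log(1/\mathrm{disc}_\mu(F))\ge n/2$, so $\QC_2(\IP_n)=\Omega(n)$, and with the upper bound $\QC_2(\IP_n)=\Theta(n)$, matching the statement of \cref{method:prm:inner_product}.

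The main obstacle is really just step one — Kremer's decomposition of a quantum protocol into few rectangle-like terms — which is the only nontrivial ingredient; since it is a standard result I would cite it rather than reprove it, and steps two and three are then elementary. For completeness I would note the alternative argument one might try instead: run a bounded-error $\IP_n$ protocol coherently and apply phase kickback to obtain (approximately) the ``inner-product phase gate'' $\ket{\vby}\ket{\vbz}\mapsto(-1)^{\vby\cdot\vbz}\ket{\vby}\ket{\vbz}$, which is local-unitary equivalent to $n$ parallel CNOTs and hence lets Alice transmit $n$ bits to Bob, contradicting Holevo's theorem. However, making that argument clean at constant error requires amplitude/error reduction and loses a $\log n$ factor, so I would prefer the discrepancy route, which gives the tight $\Theta(n)$ directly and automatically covers the entanglement-assisted model.
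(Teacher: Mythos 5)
Your proposal is correct, but it proves the lower bound by a genuinely different route than the paper. The paper's argument (sketched after \cref{method:thm:innerproduct}) is the reduction-to-transmission argument: a protocol for $\IP_n$ can be run coherently and, via phase kickback, turned into a scheme that transfers Alice's $n$-bit input to Bob, so Holevo's theorem forces $\Omega(n)$ qubits of communication; the careful bounded-error version of this is what \cite{kremerQuantumCommunication1995,dewolfQuantumComputingCommunication2001} are cited for. You instead use the discrepancy method: Kremer's decomposition of a $c$-qubit protocol into $2^{\bigO(c)}$ rectangle-like terms, combined with the spectral bound $\mathrm{disc}_\mu(\IP_n)\le 2^{-n/2}$ from $\norm{H}=2^{n/2}$, giving $c=\Omega(n)$. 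Both routes are sound and both give the tight $\Theta(n)$ together with the trivial upper bound. Two points of comparison are worth noting. First, your stated reason for preferring discrepancy — that the Holevo route ``loses a $\log n$ factor'' at constant error — applies only to the naive amplify-then-run-coherently version; the refined information-theoretic form of the transmission argument (due to Cleve, van Dam, Nielsen and Tapp, and presented in \cite{dewolfQuantumComputingCommunication2001}) obtains $\Omega(n)$ at constant error, with prior entanglement, with no logarithmic loss, which is why the paper can afford the shorter sketch. Second, your claim that Kremer's lemma itself covers pre-shared entanglement is an overstatement: Kremer's decomposition as usually proved is for protocols without prior entanglement (the cross terms through a shared entangled state no longer factor into an Alice-part times a Bob-part), and the entanglement-robust discrepancy bound requires the later factorization-norm machinery of Linial and Shraibman or a similar argument. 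Since the model of \cref{method:def:qc2} allows shared EPR pairs, you should either cite that stronger result or fall back on the transmission-plus-Holevo argument, which handles entanglement directly. With that citation fixed, your proof is complete and gives the same $\Theta(n)$ bound used downstream in \cref{sec:proof_lower_bound}.
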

The intuition behind \cref{method:thm:innerproduct} is by the following:
one can use the protocol for $\InnerProduct$ to transfer Alice’s $n$-bit input to Bob, 
and then by Holevo’s theorem which states that $n$ qubits must be communicated to achieve this.
We refer readers to \cite{kremerQuantumCommunication1995, dewolfQuantumComputingCommunication2001} for careful proof of this theorem.
For $\Gamma$-partite lower bound, Le Gall and Suruga \cite{legallBoundsObliviousMultiparty2022} proved the lower bound of $\QC_2(\IP_n)$ can be generalized from bipartite to $\Gamma$-partite case as $\Omega(\Gamma n)$.
\begin{problem}[$\Gamma$-partite $\InnerProduct$ $\IP_{n,\Gamma}$]
    The $\Gamma$-partite $\InnerProduct$ Problem $\IP_{n,\Gamma}:\{0,1\}^{\Gamma n}\rightarrow \{0,1\}$ is defined as
    \[
    \IP_{n,\Gamma}(\vbx_1,...,\vbx_\Gamma) = \oplus_{j=1}^n (x_{1j}\land x_{2j} \cdots \land x_{\Gamma j}),
    \]
    where $\vbx_i \in \{0,1\}^n$ for $i = \{1,...,\Gamma\}$.
\end{problem}
\begin{lemma}[Quantum communication complexity of $\IP_{n,\Gamma}$ function \cite{legallBoundsObliviousMultiparty2022}]\label{apd:thm:kparty}
    To compute $\IP_{n,\Gamma}$ function with bounded-error using an oblivious quantum protocol, the number of qubits required to communicate in $\Gamma$ partite is at least  $\Theta(\Gamma n)$.
\end{lemma}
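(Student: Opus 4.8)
The plan is to obtain the lower bound $\Omega(\Gamma n)$ by reduction from the bipartite $\InnerProduct$ bound of \cref{method:thm:innerproduct}, localized to the $\Gamma$ ``one party versus the rest'' cuts of the communication graph and then summed, using obliviousness in an essential way; the matching $O(\Gamma n)$ upper bound is routine (the parties pipeline the coordinate-wise logical AND of their strings along a line, using $O(n)$ bits per hop), so I only discuss the lower bound. The first ingredient is a restriction identity: fix an index $i\in[\Gamma]$, pick a second index $i'\neq i$, set $\vbx_j=\mathbf{1}^n$ for every $j\notin\{i,i'\}$, and let $\vbx_{i'}=\vb{a}$ for an arbitrary $\vb{a}\in\{0,1\}^n$. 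Then $\IP_{n,\Gamma}(\vbx_1,\dots,\vbx_\Gamma)=\bigoplus_{k=1}^{n}(x_{ik}\wedge a_k)=\IP_n(\vbx_i,\vb{a})$. Consequently, running any bounded-error $\Gamma$-party protocol for $\IP_{n,\Gamma}$ on inputs of this form yields a bounded-error bipartite protocol for $\IP_n$ between party $i$ (holding $\vbx_i$) and the coalition of the remaining $\Gamma-1$ parties (who jointly hold $\vb{a}$, their mutual communication being free), and the qubits that this bipartite protocol sends are exactly the qubits the original protocol sends across the cut $(\{i\},[\Gamma]\setminus\{i\})$.

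By \cref{method:thm:innerproduct} this cut must therefore carry $\Omega(n)$ qubits. Now I would invoke obliviousness: in an oblivious protocol the entire communication schedule --- in particular the number of qubits transmitted on each link in each round --- is fixed in advance and independent of the inputs. Hence the quantity $C_i$, the number of qubits crossing the cut isolating party $i$, is a single input-independent number, and the argument above forces $C_i=\Omega(n)$ simultaneously for every $i\in[\Gamma]$. Since any transmitted qubit is counted in at most two of these $\Gamma$ cuts, the total communication is at least $\tfrac12\sum_{i=1}^{\Gamma}C_i=\Omega(\Gamma n)$, as claimed. No extra work is needed for shared entanglement: \cref{method:thm:innerproduct} is already the entanglement-assisted bound, so pre-shared EPR pairs across any cut are subsumed in the $\Omega(n)$ estimate, and the error probability is unchanged because we only restrict inputs.

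The step I expect to be the crux --- and the reason the statement is restricted to \emph{oblivious} protocols --- is exactly the passage from ``each single-party cut individually needs $\Omega(n)$ qubits'' to ``the total is $\Omega(\Gamma n)$.'' For a general adaptive protocol this fails: the hard restriction that makes cut $i$ heavy is a different input for each $i$, and an adaptive protocol can in principle arrange that, on any fixed input, communication concentrates on whichever cut is ``easy'' for that input, so one cannot charge all $\Gamma$ cuts at once. Obliviousness removes precisely this freedom by decoupling the communication pattern from the input, making the $\Gamma$ separate bipartite bounds apply to one and the same schedule. Turning this into a theorem requires pinning down the model of an oblivious quantum protocol (a fixed round structure with fixed senders, receivers, and register sizes per round, with only the transmitted quantum data depending on the input) and checking that the cut-counting is valid in this model; this is the technical content of \cite{legallBoundsObliviousMultiparty2022}, which we invoke here as a black box.
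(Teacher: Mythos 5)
The paper does not prove this lemma at all: it is imported verbatim as a black-box result of Le Gall and Suruga \cite{legallBoundsObliviousMultiparty2022}, with only the remark that the oblivious-model hypothesis is satisfied in the paper's setting. Your proposal therefore goes further than the paper, and the sketch you give is sound: the restriction $\vbx_j=\mathbf{1}^n$ for $j\notin\{i,i'\}$ does collapse $\IP_{n,\Gamma}$ to the bipartite $\IP_n$ across the cut $(\{i\},[\Gamma]\setminus\{i\})$, the entanglement-assisted bipartite bound then forces $\Omega(n)$ qubits across that cut, obliviousness makes the per-cut counts $C_i$ input-independent so that all $\Gamma$ bounds apply to one schedule, and the factor-two overcount (each message crosses at most the sender's and the receiver's single-party cuts) gives $\tfrac12\sum_i C_i=\Omega(\Gamma n)$. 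You also correctly identify why obliviousness is the crux rather than a technicality: without it the hard input differs per cut and the per-cut bounds cannot be charged simultaneously. Two small caveats: the bipartite lemma as stated in the paper (\cref{method:thm:innerproduct}, following Kremer and de Wolf) should be taken in its entanglement-assisted form for your coalition argument to absorb pre-shared EPR pairs, and your final paragraph still defers the formal definition of an oblivious quantum protocol and the validity of cut-counting in it to \cite{legallBoundsObliviousMultiparty2022} --- so in the end your treatment, like the paper's, ultimately rests on that reference, but it supplies the reduction skeleton the paper omits, which is a useful addition rather than a gap.
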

Their lemma requires the communication model to be oblivious, which means the amount of communication exchanged between each pair of players at each round is fixed independently of the input before the execution of the protocol.
Our setting satisfies this requirement.

To reduce $\IP_n$ to $\dqdhs$, we introduce a useful constructive tool called \emph{circuit-to-Hamiltonian map} (or known as \emph{perfect state transfer}) which maps a quantum circuit to a Hamiltonian evolution:

\begin{lemma}[Circuit-to-Hamiltonian map \cite{feynmanQuantumMechanicalComputers1985, christandlPerfectStateTransfer2004, kayReviewPerfectState2010}]\label{method:thm:circuit_hamiltonian}
    Given a quantum circuit $U=U_{N}\dots U_1$ composed of $N$ gates, 
    there exists a Hamiltonian $H_{U} \in \mathcal{H}_{\cc} \otimes \mathcal{H}_{\oo}$ acting on a clock and output register, i.e.
    \begin{equation}\label{method:eq:circuit_hamiltonian}
        H_{U} = \sum_{j=1}^{N} \sqrt{j(N-j+1)} 
        \qty(\op{j}{j-1}_{\cc}\otimes U_j 
        + \HC
        ),
    \end{equation}
    where $\HC$ means Hermitian conjugate terms.
    The quantum dynamic evolution governed by $H_U$ and $t=\pi/2$ yields the original quantum circuit output as
    \begin{equation}
         e^{-\ii H_{U}t}\ket{0}^{\otimes \log(N)}_{\cc}\ket{\psi_0}_{\oo}
         =\ket{N}_{\cc}\otimes U \ket{\psi_0}_{\oo}.
    \end{equation}
\end{lemma}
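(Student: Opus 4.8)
\emph{Proof proposal.} The plan is to ``unwind'' the gate sequence inside $H_U$ by conjugating with a clock‑diagonal staircase unitary, which turns $H_U$ into (a rescaling of) a single fixed spin Hamiltonian whose real‑time evolution is a textbook perfect state transfer. Concretely, I would set $V_j:=U_j U_{j-1}\cdots U_1$ for $1\le j\le N$, $V_0:=I$, and
\begin{equation}
    W:=\sum_{j=0}^{N}\op{j}_{\cc}\otimes V_j ,
\end{equation}
which is unitary on $\calH_{\cc}\otimes\calH_{\oo}$. Using $V_j=U_j V_{j-1}$ (so that $V_j^\dagger U_j V_{j-1}=I$ and $V_{j-1}^\dagger U_j^\dagger V_j=I$), a one‑line computation collapses every gate to the identity on the output register and gives
\begin{equation}
    W^\dagger H_U W=\Big(\sum_{j=1}^{N}\sqrt{j(N-j+1)}\,\big(\op{j}{j-1}_{\cc}+\op{j-1}{j}_{\cc}\big)\Big)\otimes I_{\oo}.
\end{equation}
Identifying the $(N+1)$-dimensional clock space with the spin‑$s$ representation, $s=N/2$, via $\ket{j}_{\cc}\leftrightarrow\ket{s,m}$ with $m=j-N/2$, the amplitudes $\sqrt{j(N-j+1)}$ are exactly the ladder‑operator matrix elements, so the bracketed clock operator is $J_{+}+J_{-}=2J_x$. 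Hence $H_U=W\,(2J_x\otimes I_{\oo})\,W^\dagger$ and $e^{-\ii H_U t}=W\,(e^{-2\ii tJ_x}\otimes I_{\oo})\,W^\dagger$; in particular $\norm{H_U}=\norm{2J_x}=N$, consistent with the $\norm{H}=\tau$ normalization used later in the lower‑bound reduction.

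It then remains to evaluate the clock evolution at $t=\pi/2$. I would realize the spin‑$N/2$ representation as the symmetric subspace of $(\mathbb{C}^2)^{\otimes N}$, under which $2J_x$ becomes $\sum_{i=1}^{N}X^{(i)}$ (restricted to that subspace, $X^{(i)}$ the Pauli‑$X$ on the $i$-th factor), $\ket{0}_{\cc}$ becomes $\ket{0}^{\otimes N}$, and $\ket{N}_{\cc}$ becomes $\ket{1}^{\otimes N}$. Then
\begin{equation}
    e^{-\ii\pi J_x}\ket{0}_{\cc}=\bigotimes_{i=1}^{N}e^{-\ii(\pi/2)X}\ket{0}=(-\ii)^N\ket{1}^{\otimes N}=(-\ii)^N\ket{N}_{\cc}.
\end{equation}
Assembling the pieces: since $V_0=I$ we have $W^\dagger(\ket{0}_{\cc}\ket{\psi_0}_{\oo})=\ket{0}_{\cc}\ket{\psi_0}_{\oo}$; applying $e^{-\ii\pi J_x}\otimes I_{\oo}$ produces $(-\ii)^N\ket{N}_{\cc}\ket{\psi_0}_{\oo}$; and applying $W$ produces $(-\ii)^N\ket{N}_{\cc}\otimes V_N\ket{\psi_0}_{\oo}=(-\ii)^N\ket{N}_{\cc}\otimes U\ket{\psi_0}_{\oo}$. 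This is the asserted identity up to the global phase $(-\ii)^N$, which is unobservable (and can be absorbed by adding a constant multiple of $I$ to $H_U$), so it may be dropped as in the statement.

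The only genuinely creative step is writing down the conjugating unitary $W$; everything after that is mechanical bookkeeping. The mildly delicate point I expect to spend the most care on is the last step — evaluating $e^{-\ii\pi J_x}$ on the spin‑$N/2$ rep and correctly tracking the phase — for which the symmetric‑subspace factorization above is the cleanest route; alternatively one can simply cite the standard perfect‑state‑transfer property of the Krawtchouk/Jacobi spin chain with hopping amplitudes $\sqrt{j(N-j+1)}$, whose single excitation travels from one end to the antipodal end precisely at time $\pi/2$. Note finally that the lemma asserts only the \emph{existence} of such an $H_U$, so no statement about the locality of $H_U$ under a particular clock encoding is needed here.
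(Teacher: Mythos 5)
Your proposal is correct and is essentially the standard argument underlying the references cited for this lemma (the paper itself states it without proof): conjugating by the history unitary $W=\sum_{j}\op{j}_{\cc}\otimes V_j$ reduces $H_U$ to $2J_x\otimes I_{\oo}$ on the spin-$N/2$ clock space, whose $\pi$-rotation at $t=\pi/2$ effects the perfect transfer $\ket{0}_{\cc}\to\ket{N}_{\cc}$ while $W$ reinstates $V_N=U$ on the output register. Your remark that the identity holds only up to the global phase $(-\ii)^N$ — unobservable, and absorbable by adding a multiple of the identity to $H_U$ — is a correct refinement of the statement as written and does not affect the lower-bound reduction in which the lemma is used.
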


\begin{figure*}[t]
    \centering
    \includegraphics[width=.95\linewidth]{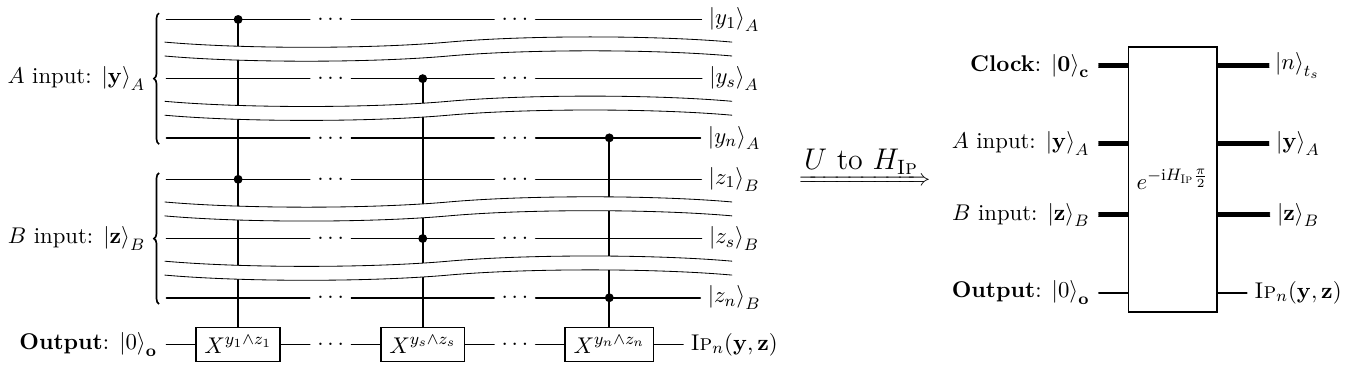}
    \caption{
    The construction from an $\IP_{n}(\vb{y},\vb{z})$ circuit (instance) to a Hamiltonian evolution (instance) with $H_{\IP}$ and initial state $\ket{\vb{0}}_{\cc}\otimes\ket{\vb{y}}_A\otimes\ket{\vb{z}}_B\otimes \ket{0}_{\oo}$.
    Alice (Bob) encodes an $n$-bit string $\vby$ ($\vbz$) with $n$-qubit register and the output register consists of one qubit to store the inner product of two bit-strings.
    The L.H.S circuit consists of $n$ Toffoli gate controlled by every pair of qubits of Alice and Bob,
    that is, apply an $X$ gate to the output register when both $y_s$ and $z_s$ are 1.
    By the circuit-to-Hamiltonian map, a Hamiltonian $H_{\IP}$ can be constructed by introducing $\log(n)$-qubit ancilla qubits as a clock register such that the dynamics of $H_{\IP}$ with $t=\pi/2$ yields the inner product of Alice and Bob's input bit-strings.
    }
    \label{method:fig:innerproduct_weak}
\end{figure*}

With these ingredients (\cref{method:thm:innerproduct} and \cref{method:thm:circuit_hamiltonian}), we are ready to prove a lower bound on $\QC_2$.
\begin{theorem}[Lower bound on $\QC_2$ of $\dqdhs$, Theorem 4 in maintext]\label{apd:thm:weak_lower_bound}
    For any positive integer $\tau\le n$, 
    there exists a sparse Hamiltonian $H$ acting on $2n+\log(n)+1$ qubits with $\norm{H}=\tau$ and a constant evolution time $t=\pi/2$ such that 
    the (bounded-error) quantum communication complexity ($\QC_2$) of any generic oblivious $\Gamma$-partite protocol for distributed quantum simulation (\dqdhs) scales at least linearly in scaled evolution time, that is, $\QC_2(\dqdhs)= \Omega(\Gamma\norm{H}t)$.
\end{theorem}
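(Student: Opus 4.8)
The plan is to prove the contrapositive by reducing the $\Gamma$-partite $\InnerProduct$ problem to $\dqdhs$, using the circuit-to-Hamiltonian map (\cref{method:thm:circuit_hamiltonian}) to turn an $\InnerProduct$ circuit into a short Hamiltonian evolution. I start with the bipartite case. Given inputs $\vby,\vbz\in\{0,1\}^n$, consider the circuit of \cref{method:fig:innerproduct_weak}: one output qubit initialized to $\ket{0}$, and for each coordinate $s$ a Toffoli gate controlled on $(y_s,z_s)$ that flips the output, so that after the circuit the output holds $\bigoplus_s (y_s\wedge z_s)=\IP_n(\vby,\vbz)$. To target an arbitrary norm $\tau\le n$, I keep only the first $\tau$ of these Toffoli gates (equivalently, restrict attention to the first $\tau$ coordinates), so the circuit has exactly $N=\tau$ gates. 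Applying \cref{method:thm:circuit_hamiltonian} produces $H_{\IP}$ on the $2n$ data qubits, $1$ output qubit, and $\lceil\log(\tau+1)\rceil\le\log n$ clock qubits, acting trivially on any unused clock qubits, hence on $2n+\log n+1$ qubits in total. It is row-sparse with $\bigO(1)$ nonzeros per row and efficiently computable entries (each gate $U_j$ is a permutation), hence a sparse Hamiltonian; on each invariant ``history'' subspace it acts as the spin-$N/2$ operator $2J_x$, so $\norm{H_{\IP}}=N=\tau$ exactly. By \cref{method:thm:circuit_hamiltonian}, $e^{-\ii H_{\IP}t}$ with $t=\pi/2$ maps $\ket{\vb 0}_\cc\ket{\vby}_A\ket{\vbz}_B\ket{0}_\oo$ to $\ket{N}_\cc\otimes(\text{circuit output})$, so measuring the output qubit returns $\IP_\tau(\vby,\vbz)$.

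Next I run the reduction. Suppose some oblivious bipartite protocol simulates $e^{-\ii Ht}$, for every sparse $H$ on $2n+\log n+1$ qubits at $t=\pi/2$, to a small constant accuracy $\eps$, using $C$ qubits of communication. Alice and Bob, holding $\vby$ and $\vbz$, each locally prepare their $n$-qubit register in $\ket{\vby}$ and $\ket{\vbz}$; Alice additionally prepares the clock register in $\ket{\vb 0}$ and the output qubit in $\ket{0}$ ($o(n)$ extra local memory, no communication). They run the simulation protocol on $H=H_{\IP}$, and Alice measures the output qubit. With probability $\ge 1-\bigO(\eps)\ge 2/3$ she obtains $\IP_\tau(\vby,\vbz)$, which she can broadcast with $\bigO(1)$ further communication. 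Hence $C+\bigO(1)\ge\QC_2(\IP_\tau)=\Omega(\tau)$ by \cref{method:thm:innerproduct}, so $C=\Omega(\tau)=\Omega(\norm{H}t)$.

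For the $\Gamma$-partite statement I repeat the construction with each Toffoli replaced by the $\Gamma$-fold controlled-NOT gate taking one control $x_{ij}$ from party $i$ and flipping the shared output iff $x_{1j}\wedge\cdots\wedge x_{\Gamma j}=1$; the circuit then computes $\IP_{\tau,\Gamma}$, still consists of $\tau$ permutation gates, and the circuit-to-Hamiltonian map again yields a sparse $H$ with $\norm{H}=\tau$ whose evolution at $t=\pi/2$ outputs $\IP_{\tau,\Gamma}$. The $\InnerProduct$ protocol derived as above is oblivious, since the communication pattern of the simulation protocol depends only on $(H,t,\eps)$ and not on the inputs $\vbx_i$; so \cref{apd:thm:kparty} forces the simulation protocol to use $\Omega(\Gamma\tau)=\Omega(\Gamma\norm{H}t)$ qubits of communication, giving $\QC_2(\dqdhs)=\Omega(\Gamma\norm{H}t)$.

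The main obstacle is the bookkeeping of the reduction rather than any single hard estimate: one must verify that loading the $\InnerProduct$ instance into the simulation input and reading off the answer are genuinely communication-free (assigning the $\log n$ clock qubits and the output qubit to a single party, or to the $o(n)$-qubit ancilla node, and using $\log n=o(n)$), that constant simulation error only degrades the $\InnerProduct$ success probability by a constant so the $2/3$ threshold survives, and --- to make the literal equality $\norm{H}=\tau$ hold rather than $\norm{H}=\Theta(\tau)$ --- that the length-$\tau$ Krawtchouk clock chain has spectral norm exactly $\tau$ and that the (block-permutation) gate unitaries leave this norm unchanged. One also has to confirm that the obliviousness hypothesis required by \cref{apd:thm:kparty} is inherited by the simulation-based $\InnerProduct$ protocol.
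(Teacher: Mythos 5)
Your proposal is correct and follows essentially the same route as the paper's own proof: reduce from $\IP_\tau$ (and its $\Gamma$-partite version via multi-controlled-NOT gates) through the circuit-to-Hamiltonian map, invoke the known $\Omega(\tau)$ and $\Omega(\Gamma\tau)$ communication lower bounds, and note that the simulation-based protocol is oblivious. The extra bookkeeping you add (exact norm $\norm{H_{\IP}}=\tau$ via the Krawtchouk clock chain, padding to $2n+\log n+1$ qubits, and the constant-error threshold) only makes explicit what the paper leaves implicit.
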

\begin{proof}[Proof of \cref{apd:thm:weak_lower_bound}]
    Given an instance of $\IP_n(\vby,\vbz)$,
    one can write a quantum circuit composed of $n$ Toffoli gates to evaluate $\IP(\vby,\vbz)$ by two parties (Alice and Bob) as in the L.H.S of \cref{method:fig:innerproduct_weak}.
    By \cref{method:thm:circuit_hamiltonian}, given a quantum circuit, a Hamiltonian $H_{\IP}$ 
    can be constructed as 
    \begin{equation}
        H_{\IP} := 
        \sum_{s\in[n]} 
        \sqrt{s(n-s+1)} \op{s}{s-1}_{\cc} \otimes \CCNOT(y_s,z_s) 
        + \HC
    \end{equation}
    of which the dynamics with the scaled evolution time $\norm{H_{\IP}}t=n\pi/2$ perfectly yields the circuit output.
    According to the known lower bound of quantum communication complexity for $\IP_n$ is $\Omega(n)$ (\cref{method:thm:innerproduct}), 
    any distributed quantum simulation algorithm needs at least $\Omega(n)$ communication.
    For any positive integer $\tau< n$, we can similarly map an instance $\IP_{\tau}(\vby,\vbz)$ to a Hamiltonian with $\norm{H_{\IP}}=\tau$ such that the linear lower bound also applies.
It is worth noting that in the proof of $\nff$ theorem, the input $\vbx$ is encoded into the Hamiltonian $H_{\vbx}$ and the evolution is simulated by querying entries of $H_{\vbx}$.
In contrast, in this proof, the input $\vby$ ($\vbz$) is encoded into Alice (Bob)'s quantum state, and the Hamiltonian $H_{\IP}$ is public to two parties, which matches the realistic scenario.
For $\Gamma$-partite case, by \cref{apd:thm:kparty}, our lower bound can be directly extended to $\Omega(\Gamma \norm{H}t)$ by replacing $\IP_n$ with $\IP_{n,\Gamma}$ and replacing the Toffoli gates in \cref{method:fig:innerproduct_weak} with multi-partite controlled-NOT gates. 
Here, we underscore that for the task of quantum simulation dynamics, the amount of communication is fixed independently of the input.
Therefore, it fits within the oblivious protocol context and our lower bound on party number $\Gamma$ is optimal.
\end{proof}

\end{document}